\documentclass[12pt]{article}
\usepackage[top=1.25in, bottom=1.25in, right=1.25in,left=1.25in]{geometry}

\usepackage{amsthm,amsmath,amsfonts,mathrsfs,amssymb}
\usepackage{natbib}
\usepackage{upgreek}
\usepackage{bm}
\usepackage{color}
\usepackage{bbm}
\usepackage{tikz,epstopdf}
\usetikzlibrary{arrows,decorations.pathreplacing, patterns}
\usepackage[titletoc, title]{appendix}
\usepackage{enumerate}
  \usepackage{graphicx}
   \usepackage{array}
   \usepackage{cases}
   \usepackage{setspace}
\usepackage{verbatim}
\usepackage{ragged2e} 
\usepackage{subcaption}

\newcommand{\interior}[1]{
  {\kern0pt#1}^{\mathrm{o}}
}
\DeclareMathOperator*{\esssup}{ess\,sup}

\newtheorem{thm}{Theorem}
\newtheorem{lem}{Lemma}
\newtheorem{prop}{Proposition}
\newtheorem{col}{Corollary}
\theoremstyle{definition}

\newtheorem{asmp}{Assumption}

\usepackage{hyperref}
\hypersetup{colorlinks=true,
            citecolor=blue,
            linkcolor=blue,
            urlcolor=blue,
			bookmarksopen=true,
			pdfstartview={XYZ null null 1.00},
			pdfpagelayout={SinglePage}
}

\begin{document}

\onehalfspacing

\title{
Estimating Nonseparable Selection Models: \\
A Functional Contraction Approach\thanks{Wu: Peking University HSBC Business School, University Town of Shenzhen, China. Email: \href{mailto:fanwu@phbs.pku.edu.cn}{fanwu@phbs.pku.edu.cn}. Xin: Division of the Humanities and Social Sciences, California Institute of Technology, 1200 East California Blvd., MC 228-77, Pasadena, CA 91125. Email: \href{mailto:yixin@caltech.edu}{yixin@caltech.edu}. We appreciate valuable discussions with 
Yonghong An, Jeremy Fox, Michael Keane, Philip Haile, 
Yingyao Hu, Yao Luo, Luciano Pomatto, Yuya Sasaki, Robert Sherman, Xun Tang,  Omer Tamuz, and Ao Wang. We thank seminar and conference participants at Caltech, CUFE, Peking University, Tsinghua University, UC Irvine, University of Michigan, USC, 2025 Asian Summer School in Econometrics and Statistics, Econometric Society World Congress 2025, DSE Conference 2025, and BSE Summer Forum 2026.
Financial support from the Ronald and Maxine Linde Institute of Economic and Management Sciences is gratefully acknowledged.}
}
\author{Fan Wu 
 \qquad 
 Yi Xin 
 }
  
\maketitle
\vspace*{-0.7cm}
\begin{abstract}
\singlespacing
We propose a novel method for estimating nonseparable selection models. 
We show that, for a given selection function and under suitable contraction
conditions, the potential outcome distributions are nonparametrically identified from the selected outcome distributions and can be recovered using a simple iterative algorithm based on a contraction mapping. This result enables a full-information approach to estimating selection models without imposing parametric or separability assumptions on the outcome equation. We propose a two-step estimation strategy for the potential outcome distributions and the parameters of the selection function and establish the consistency and asymptotic normality of the resulting estimators. Monte Carlo simulations demonstrate that our approach performs well in finite samples. The method is applicable to a wide range of empirical settings, including consumer demand models with only transaction prices, auctions with stochastic winning rules and incomplete bid data, and Roy-type models with data on accepted wages.

\bigskip

\noindent \textbf{Keywords}: Sample Selection, Nonseparable Models, Functional Contraction, Potential Outcome Distribution, Semiparametric Estimation, Demand Estimation, Auction, Roy Models.

\noindent \textbf{JEL Codes}: C14, C24, C51, L11, D44, J31.
\bigskip
\end{abstract}
\newpage

\section{Introduction}
\label{eq:intro}

Sample selection issues arise in many empirical settings. 
In consumer demand studies, researchers often observe only the transaction prices of chosen products
\citep{goldberg1996dealer, cicala2015does, crawford2018asymmetric, allen2019search, d2019automobile, sagl2023dispersion, cosconati2024competing}. In auctions, available data may consist solely of the winning bids
\citep{athey2002identification, komarova2013new, guerre2019nonparametric,allen2024resolving}. 
In labor economics, wage data are typically observed only for individuals who choose to work \citep{gronau1974wage, heckman1974shadow}, and in the Roy model \citep{roy1951some}, earnings within an occupation are observed only for those who self-select into that sector.

Observing only a selected sample of outcomes---such as prices, bids, or wages---poses significant challenges for estimating two key elements.
First is the selection function, which specifies how agents choose among alternatives, for example, through a consumer demand system, an auction’s winning rule, or a labor force participation decision.
Second is the distribution of outcomes \emph{prior to selection}, often referred to as ``potential outcomes" in the literature. 
Flexibly estimating potential outcome distributions is crucial in many empirical contexts, such as analyzing price distributions to understand firms' pricing strategies and wage distributions to examine inequality.

Our paper proposes a new approach to estimating nonseparable selection models by exploiting a previously unrecognized one-to-one mapping between the outcome distributions before and after selection. 
Our key contribution is a constructive identification result showing that, for a given selection function and under suitable contraction conditions, the potential outcome distributions are nonparametrically identified from the selected outcome distributions and can be recovered using a simple iterative algorithm.
Consequently, the remaining object to be estimated is the selection function, which can be recovered from observed choice patterns.
Our method enables a full-information approach to estimating selection models without imposing parametric or separability assumptions on the outcome equation.

Formally, we consider a discrete choice problem in which each alternative is associated with a potential outcome distribution. A selection function maps a vector of realized potential outcomes to a probability distribution over the alternatives. For example, in the consumer demand setting, each alternative represents a product, and the potential outcome is the offered price, with the selection function micro-founded by the consumer's utility maximization problem. We allow the outcome equations to be fully nonparametric with nonseparable error terms and to vary flexibly across different alternatives.
We assume that potential outcomes across alternatives are independent conditional on observable and unobservable characteristics, which allows for correlation across outcomes when conditioning only on observables. Our framework also permits the unobservable component in the outcome equation to enter directly into agents’ preferences over alternatives, thereby capturing selection on unobservables.

We analyze how the selection model
maps the potential outcome distributions to the distributions of selected outcomes and seek to \emph{invert} the mapping. 
The key insight of our approach is that, given the selection model and potential outcome distributions across all alternatives, we can derive the likelihood of an outcome being selected at each price. Conversely, if this selection
likelihood were known, we could recover the potential outcome distributions from
the selected outcome distributions using Bayes’ rule.  
This two-way relationship characterizes a fixed-point problem.

Building on this intuition, we construct an operator whose fixed
point is the potential outcome distributions and establish sufficient conditions for it to be a
functional contraction (Theorems \ref{thm: contraction} and \ref{thm: contraction special}). 
Our results imply that, given the selection function and the distributions of selected outcomes, we can nonparametrically identify the potential outcome distributions under the contraction condition. Importantly, the contraction condition is automatically satisfied in binary choice models. Moreover, this identification result is constructive:
starting with any initial guess for the potential outcome distributions, we iteratively
apply the operator. This process converges to the potential outcome distributions associated with the selection function.

We then embed this identification result into a two-step estimation strategy for the unobserved potential outcome distributions and the parameters of the selection function. In the first step, we estimate the selected outcome distribution conditional on both observable and unobservable covariates using the instrumental variable approach from the nonclassical measurement error literature \citep{hu2008identification,hu2008instrumental}. 
In the second step, we propose a nested fixed-point algorithm to estimate the parameters of the selection function: in the inner loop, for any candidate selection function, we recover the potential outcome distributions by iterating the operator, while in the outer loop we search for the parameter values that maximize the likelihood of the observed choice patterns. The potential outcome distributions are then obtained by reapplying the fixed-point algorithm at the estimated parameters of the selection function.

We establish the consistency and asymptotic normality of the proposed estimators in Theorems \ref{thm: consistency} and \ref{thm: normal} under suitable regularity conditions. 
To examine their finite sample properties, we conduct Monte Carlo
simulations across various designs of the outcome equation. Our results show that the biases in our estimators are generally small, and the standard deviation decreases as the
sample size increases across all simulation designs. Our nonparametric estimation of the potential outcome distributions outperforms the classic Heckman parametric two-step approach and the quantile selection model of \cite{arellano2017quantile} with linear quantile functions and a Gaussian copula, particularly when the outcome equation contains nonseparable error terms.

In \cite{cosconati2024competing}, we apply our method to estimate consumer demand for auto insurance products when only transaction prices are observed. We nonparametrically estimate the offered price distribution for each insurance company and allow these distributions to vary fully flexibly across firms. The substantial heterogeneity in the recovered price distributions reflects differences in firms’ information technologies and cost structures, which are key primitives we estimate through a supply-side competition model. We omit the details of this application here and refer readers to \cite{cosconati2024competing} for the full empirical setup and results.

\paragraph{Related Literature}

Our paper contributes to the extensive theoretical literature on sample selection models. 
An early solution to sample selection bias is full information maximum likelihood (FIML) estimation based on parametric assumptions, as in \cite{heckman1974shadow} and \citeauthor{lee1982some} (\citeyear{lee1982some}, \citeyear{lee1983generalized}). 
More commonly employed methods for sample selection models are the two-step control function approach pioneered by \citeauthor{heckman1976common} (\citeyear{heckman1976common}, \citeyear{heckman1979sample}). 
A substantial body of theoretical work has been developed to relax the distributional assumptions in the two stages of the estimation procedure \citep[see, among many others,][]{ahn1993semiparametric,andrews1998semiparametric,chen2003semiparametric, das2003nonparametric, newey2007nonparametric,newey2009two,fernandez2024nonseparable,chernozhukov2025distribution}. For a comprehensive survey of semiparametric two-step estimation methods for selection models, see \cite{vella1998estimating}.

Our approach provides an alternative framework that differs from existing methods along several dimensions.
First, we allow the outcome equation to be nonparametric and nonseparable in the error terms, and we exploit the full information in the selected outcome distribution to recover the entire distribution of potential outcomes.  \cite{newey2007nonparametric} and \cite{fernandez2024nonseparable} use control function approaches to correct for sample selection in nonseparable models with binary and censored selection rules, respectively, and they focus on identifying certain global and local parameters of the  outcome distribution.

Second, our method accommodates fully heterogeneous effects of covariates on outcomes.
By contrast, many existing approaches that estimate conditional mean models restrict covariates to affecting only the location of the outcome distribution.
Notable exceptions include \cite{arellano2017quantile}, who develop a sample-selection correction for quantile regression by modeling the copula of the error terms in the outcome and selection equations, and \cite{chernozhukov2025distribution}, who propose a semiparametric generalization of the Heckman selection model that allows for rich forms of heterogeneity in the effects of covariates on both outcomes and selection.

Third, our approach does not require the conventional exclusion restriction used in many sample selection models, namely a variable that shifts the choice probability without directly affecting outcomes.
Such variables are central to identification in many two-step procedures and may not always be readily available in practice
(see \cite{vella1998estimating} and \cite{honore2020selection} for further discussion).
Furthermore, our method does not rely on identification-at-infinity arguments.\footnote{
Relatedly, \cite{d2013another} and \cite{d2018extremal} develop estimation methods for semiparametric sample selection models without an instrument or a large-support regressor, leveraging the independence-at-infinity assumption. 
\cite{lee2013nonparametric} study nonparametric identification in competing risks models without exclusion restrictions or identification-at-infinity (or near-zero) arguments.
\cite{honore2020selection} derive identified sets for sample selection models without exclusion restrictions.}

Instead, we assume that the potential outcomes are conditionally independent
given the observed and unobserved characteristics, and our method relies on instruments satisfying the conditional independence requirements of the nonclassical measurement-error literature, as in \cite{hu2008identification} and \cite{hu2008instrumental}, to estimate the selected outcome distributions conditional on the unobservable in the first step.
These instruments serve a different role from the exclusion restrictions commonly used in sample selection models.
Neither identifying strategy is uniformly weaker than the other; rather, they represent alternative sources of identifying variation that may be more or less plausible depending on the application.

Estimation of the selection function in our model is closely related to the demand estimation literature following the seminal work of \cite{berry1994estimating} and \cite{berry1995automobile}. In particular, observed choice patterns play the same role as market shares in recovering consumer preference parameters.
Our method addresses the problem of missing full price menu that arises in many demand estimation contexts \citep[e.g.,][]{goldberg1996dealer,cicala2015does,crawford2018asymmetric,allen2019search, d2019automobile, sagl2023dispersion, cosconati2024competing}, an issue that is especially relevant in the presence of price discrimination or personalized pricing.\footnote{A recent paper by \cite{d2019automobile}
addresses a related challenge in demand estimation under unobserved price discrimination
by imposing supply-side restrictions, such as assumptions about firm conduct (e.g., Bertrand competition), and assuming identical costs across consumers.
}

At a broader conceptual level, 
our reliance on the structural restrictions implied
by the selection model resonates with the nonparametric identification literature on auction models with missing bids and competing risks models \citep[see, e.g.,][]{athey2002identification, komarova2013new,meilijson1981estimation,guerre2019nonparametric}.\footnote{\cite{athey2002identification} show that the symmetric independent private values (IPV) models are identified with the transaction price by exploiting a one-to-one mapping between an order statistic and its parent distribution. 
\cite{komarova2013new} analyzes asymmetric second-price auctions where only the winning bids and the winner's identity are observed.
A related result for generalized competing risks models can be found in \cite{meilijson1981estimation}.
More recently, \cite{guerre2019nonparametric} examine nonparametric identification of symmetric IPV first-price auctions with only winning bids, accounting for unobserved competition.}
In these models, the selection rule is deterministic (e.g., the highest bidder wins), which allows order-statistic arguments to be applied. In contrast, our selection model assigns a probability distribution over alternatives and is therefore closer in spirit to multi-attribute auction environments \citep[see, e.g.,][]{krasnokutskaya2020role}. 
Because selection is probabilistic rather than determined by an extremal outcome, classical order-statistic techniques are not applicable. Instead, our identification strategy relies on a fixed-point characterization of the selection process.
Moreover, our framework can flexibly accommodate asymmetries across alternatives, whereas bidder asymmetries are known to pose significant challenges in auction models (see the discussion in the handbook chapter by \cite{athey2007nonparametric}).

\paragraph{Outline} The rest of the paper is organized as follows. Section \ref{sec:model_results} formally introduces our model and %provides an illustrative example. Section \ref{sec:main_results} 
presents the main theoretical results. In Section \ref{sec:estimation}, we describe our estimation strategy and establish the asymptotic properties of the estimators. Section \ref{sec:monte_carlo} reports results from our Monte Carlo simulations, and Section \ref{sec:applications} discusses the empirical application in \cite{cosconati2024competing}. Section \ref{sec:conclusion} concludes. All proofs are collected in the appendix.

\section{Model and Main Results}
\label{sec:model_results}

In this section, all analyses are conditional on a vector of characteristics $(x, x^\ast)$, where $x$ denotes observables and $x^\ast$ denotes unobservables. 
The structure of the model and the main theoretical results \emph{do not} depend on whether the conditioning set includes unobserved components, although the presence of unobservables introduces additional challenges for estimation, which we address in Section \ref{sec:estimation}. Because all results in this section are stated conditional on $(x, x^\ast)$, we omit these variables from the notation to simplify exposition.

Consider a discrete choice problem. There is a finite set of alternatives $\mathcal{J}=\{1, \ldots, J\}$. 
Each alternative is associated with a price distribution.
Let $G_j \in \Delta ([\underline p_j, \overline p_j])$ denote the price distribution associated with alternative $j$, where $\Delta (A)$ denotes the set of all probability measures over a set $A\subset \mathbb R$. 
The collection of $G_j$ is denoted by $G=\prod_{j\in\mathcal J} G_j$. We refer to $G$ as the \emph{offered} price distribution.

We assume that the offered prices $p_j\sim G_j$ are independently distributed across alternatives, \emph{conditional on $x$ and $x^\ast$}. This assumption allows
prices across alternatives to be correlated conditional on observables. For example, if $x^\ast$ captures a worker's unobserved productivity, wage offers from different firms may appear correlated when conditioning only on observable characteristics.
However, the assumption imposes a more restrictive dependence structure: any residual correlation across prices must be mediated by $x^\ast$, rather than allowing prices to be arbitrarily correlated conditional on observables.
Nevertheless, $x^\ast$ is not restricted to be one-dimensional and may instead represent a high-dimensional vector of latent characteristics. 
As the conditioning information becomes richer, more of the dependence across alternatives is captured by the observed and latent characteristics, leaving less residual dependence to be ruled out by the conditional independence assumption.

A \textit{selection function} is denoted by $f=(f_1,f_2,\ldots,f_J)$ where $f_j$ maps prices  $\boldsymbol{p}=(p_1, \ldots, p_J)$ to the probability of selecting alternative $j \in \mathcal{J}$. We assume that \(f_j\) is continuous on the compact price support and strictly positive:
\[
f_j:\prod_j[\underline p_j,\overline p_j]\to(0,1],
\]
with \(\sum_{j\in\mathcal J} f_j\le 1\). The inequality allows for the case with an outside option.

The strict positivity assumption ensures that every price realization has a positive probability of generating an observed choice and therefore can, in principle, be observed in the data. To see why this is important, suppose instead that there exists an interval $A \subset [\underline{p}_j, \overline{p}_j]$ such that whenever $p_j \in A$, the probability of choosing alternative $j$ is zero. Then prices in $A$ are effectively censored from the data; that is, the data contain no information about the frequency with which such prices are offered.
As a result, $G_j$ cannot be identified on $A$ without imposing additional restrictions.\footnote{The strict positivity assumption is analogous to the overlap assumption in the treatment effect literature, which requires each individual to have a positive probability of receiving each treatment level. 
The strict positivity assumption excludes deterministic selection rules, such as standard auction, competing-risks, and pure Roy models, in which some alternatives are selected with probability zero for certain realizations of $\boldsymbol p$.
Identification in these settings typically relies on arguments based on ordered statistics, which are well established in the literature. See the discussion in the handbook chapter by \cite{athey2007nonparametric}.
}

The selection function $f$ is taken as a primitive of our framework. Nevertheless, it can be microfounded through a standard utility-maximization problem.
Under particular assumptions on the distribution of utility shocks, familiar discrete choice models such as the multinomial logit and probit arise as special cases.
We also want to emphasize that the selection function $f$ can freely depend on observable and unobservable characteristics $(x, x^\ast)$. This flexibility allows the model to capture dependence between the potential outcome distribution and the selection mechanism. For example, unobserved productivity may simultaneously influence wage offers and labor-force participation decisions, thereby generating endogenous selection.

The \emph{offered} price distribution (or \emph{potential} outcome distribution) $G_j$ is generally not directly observed in the data, yet it is often the primary object of interest. 
Instead, data typically reveal only the \emph{selected} (or \emph{accepted}) outcome distribution. For example, researchers may observe the distribution of transacted prices or accepted wage offers, but not the distribution of all offered prices or wage offers. Let $\tilde G_j\in \Delta ([\underline p_j, \overline p_j])$ represent the price distribution of alternative $j$ conditional on $j$ being selected. 
Similarly, $\tilde G=\prod_{j\in\mathcal J} \tilde G_j$ denotes the collection of selected outcome distributions across all alternatives.

In the remainder of this section, we establish identification of the potential outcome distributions $G$ from the selected outcome distributions $\tilde G$, holding the selection function $f$ fixed. 
We then turn in Section \ref{sec:estimation} to the empirically relevant case in which $f$ is specified up to a finite-dimensional parameter vector that is estimated from observed choice behavior.

\subsection{A Fixed-Point Characterization of Potential Outcome Distributions}\label{sec:fixed point density}

We begin with a density-based formulation that highlights the main ideas most transparently. Throughout Section \ref{sec:fixed point density}, we assume that all relevant distributions admit densities with full support and use lower-case letters to denote densities corresponding to probability measures denoted by upper-case letters. 
In particular, let $g_j$ and $\tilde g_j$ denote the densities of the probability measures $G_j$ and $\tilde G_j$, respectively. Similarly, let $g=\prod_{j\in\mathcal J} g_j$ and $\tilde g=\prod_{j\in\mathcal J} \tilde g_j$ denote the collections of potential and selected outcome densities across all alternatives, respectively. 
A measure-theoretic formulation accommodating general probability measures, including discrete distributions, continuous distributions, and arbitrary mixtures thereof, is provided in Section \ref{sec:formal}.

The density of selected outcomes $\tilde{g}_j$ can be derived from the density of potential outcomes $g_j$ using Bayes' rule:
\begin{equation}\label{eq: expost}
    \tilde g_j(p_j)\propto   Pr_j(p_j;g)  g_j(p_j), \text{ for all } j. 
\end{equation}
In Equation (\ref{eq: expost}), $Pr_j(p_j;g)$ represents the probability of selecting alternative $j$ conditional on $p_j$ and is given by 
\begin{align}
\label{eq:choice_prob}
Pr_j(p_j;g)=\int_{\boldsymbol{p}_{-j}}f_j(p_j, \boldsymbol{p}_{-j})\prod_{k\neq j}g_k(p_k)d\boldsymbol{p}_{-j},
\end{align}
where $\boldsymbol{p}_{-j}=(p_1, \ldots, p_{j-1}, p_{j+1}, \ldots, p_J)$ denotes the vector of prices excluding $j$'s price. 
Because we assume that prices are independently distributed across alternatives conditional on \((x,x^\ast)\),
the joint distribution of \(\boldsymbol p_{-j}\) factors into the product of the marginal densities,
\(\prod_{k\ne j} g_k(p_k)\).

Collecting Equations \eqref{eq: expost} and \eqref{eq:choice_prob} across all alternatives $j \in \mathcal{J}$ defines a forward mapping from the potential outcome distributions to the selected outcome distributions, where the selection mechanism is governed by the selection function $f$ through the conditional choice probabilities. 
Our objective, however, is the inverse: to recover the unobserved potential outcome distributions $g$ from the observed selected outcome distributions $\tilde g$.

Whether such recovery is possible depends critically on the information contained in the observed data. In this regard, the conditional independence assumption plays a key role in identification.
Because the data reveal only the price of the selected alternative, they contain no direct information about the dependence structure of prices across alternatives. 
As a result, 
any correlation among competing alternatives is generally unidentifiable.\footnote{See \cite{heckman1990empirical} for an early discussion of this nonidentification result; see also the survey chapter by \cite{french2011identification} for a related discussion.} For example, if prices tend to move together across alternatives, this dependence would not be detectable from the observed data.
The conditional independence assumption circumvents this problem by restricting the joint distribution of prices to be determined by the marginal distributions. As a result, for a given selection function, the system defined by Equations \eqref{eq: expost} and \eqref{eq:choice_prob} contains as many unknown objects as identifying restrictions: both $g$ and $\tilde{g}$ consist of $J$ densities.
Even so, identification remains  challenging because it requires recovering a collection of infinite-dimensional objects from a nonlinear system of equations.

To address this challenge, we now derive an inversion of the forward mapping defined by Equations \eqref{eq: expost} and \eqref{eq:choice_prob}. In particular, dividing both sides of Equation \eqref{eq: expost} by the selection probability $Pr_j(p_j;g)$ yields
\begin{equation}
    \label{eq:inversion}
   %dG_j(p_j)\propto    d\tilde G_j(p_j)/Pr_j(p_j;G) .
   g_j(p_j)\propto    \tilde g_j(p_j)/Pr_j(p_j;g), \text{ for all } j .
\end{equation}
We illustrate this inversion process using the simulated example in Figure \ref{fig:inversion}.
The red solid line plots the selected price density for an alternative. Dividing this density by the probability that the alternative is chosen at each price, $Pr_j(p_j;g)$, yields the unnormalized offered price density shown by the blue dashed line.
The gap between these two densities captures the selection mechanism: when a lower price is offered, agents are more likely to accept it, whereas higher prices make them more likely to choose other alternatives.
The offered price distribution shown by the blue solid line is then obtained after normalization.

\begin{figure}[htbp!]
    \centering
    \includegraphics[width=0.6\linewidth]{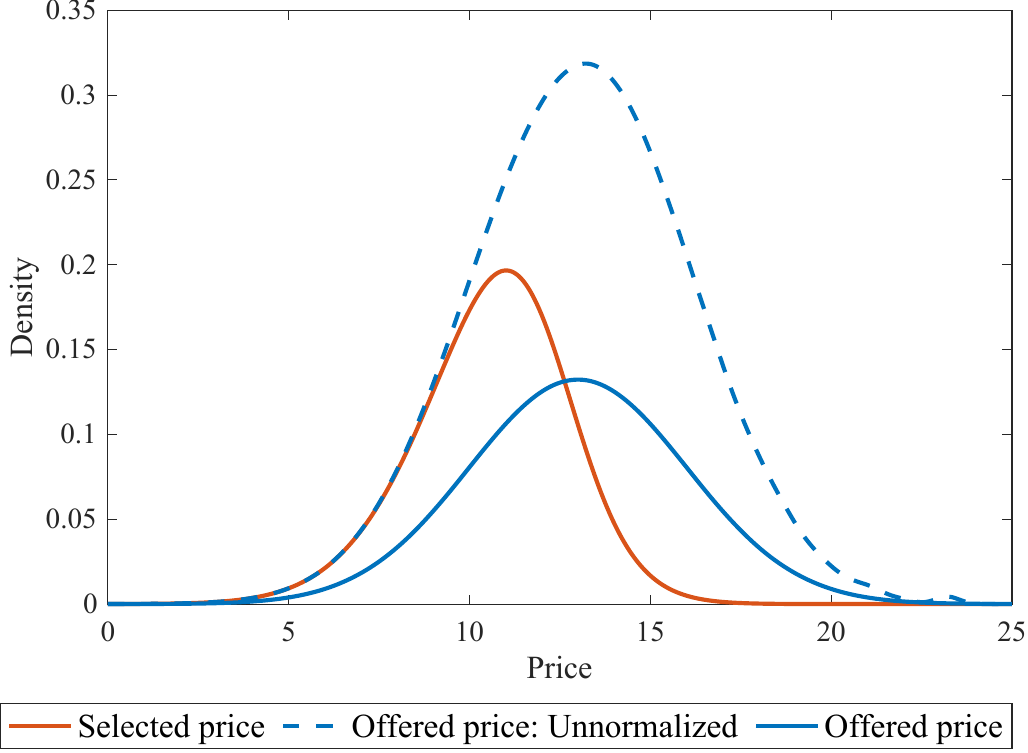}
    \caption{Densities of offered and selected prices. We draw offered prices from $\mathcal{N}(13,9)$, and the probability that the agent given price $p$ chooses this alternative is given by $\exp(10-p)/(0.1+\exp(10-p))$.}
    \label{fig:inversion}
\end{figure}

Equation (\ref{eq:inversion}) plays a central role in our analysis. 
Note that if the selection probability $Pr_j(\cdot;g)$—that is, the probability of selecting product $j$ conditional on its offered price—were known, then recovering the offered price distribution from Equation (\ref{eq:inversion}) would be straightforward. However, $Pr_j(\cdot;g)$ is \emph{not} known, because it depends on the offered price distribution $g$, which we seek to recover. 
A tentative solution is to start with a conjecture $\psi$ for the offered price distribution and use it to compute the implied selection probability $Pr_j(\cdot;\psi)$. Equation (\ref{eq:inversion}) then delivers an \emph{updated} conjecture of the offered price distribution. 
This procedure, which maps a conjectured offered price distribution into its updated version, defines an operator
$T\colon \prod_j \Delta([\underline p_j, \overline p_j])\to \prod_j \Delta([\underline p_j, \overline p_j])$ as follows.
\begin{equation}\label{eq:operator}
(T\psi)_j(p_j)=\frac{
 \tilde g_j(p_j)/Pr_j(p_j;\psi)
 }{
 \int_{\underline p_j}^{\overline p_j}
 \tilde g_j(p)/Pr_j(p;\psi)dp
 }, \text{ for all } j.
\end{equation} where $\psi=(\psi_1,\psi_2,\ldots,\psi_J)\in \prod_j \Delta([\underline p_j, \overline p_j])$. 
The denominator is the normalizing constant that ensures $(T\psi)_j$ integrates to one.

Importantly, if the conjecture $\psi$ is correct, i.e., $\psi=g$, then the selection probability $Pr_j(\cdot;\psi)$ is correct, ensuring that the updated conjecture $T\psi$ also equals $g$. Thus, the offered price distribution $g$ is a fixed point of the operator $T$.

The operator $T$ is a contraction if there exists some real number $0\leq \rho<1$ such that for all $\psi, \phi\in \prod_j \Delta([\underline p_j, \overline p_j])$,
\begin{equation}\label{eq:contraction definition}
    D(T\psi, T\phi)\leq \rho D(\psi, \phi),
\end{equation}
given some metric $D$.\footnote{We adopt the convention that $+\infty$ and $+\infty$ are not comparable, but $c<+\infty $ for any $c\in \mathbb R_+$.} 
We now introduce a suitable metric.

\paragraph{Hilbert's Projective Metric}
\label{sec:metric}

For two densities $\psi_j,\phi_j \in \Delta([\underline{p}_j,\overline{p}_j])$, we measure their distance using Hilbert’s projective metric. If
$\psi_j$ and $\phi_j$ are strictly positive on the same support,
\begin{equation}
\label{eq:hilbert}
d_H(\psi_j,\phi_j)
=
\ln
\bigg(\frac{
\operatorname*{sup}_{p\in[\underline{p}_j,\overline{p}_j]}
\frac{\psi_j(p)}{\phi_j(p)}}{\operatorname*{inf}_{p\in[\underline{p}_j,\overline{p}_j]}
\frac{\psi_j(p)}{\phi_j(p)}}
\bigg ),
\end{equation}
otherwise set $d_H(\psi_j,\phi_j)=+\infty$.

The choice of this metric is deliberate and closely tailored to the structure of the operator $T$. 
In particular, 
the usefulness of Hilbert's projective metric in our setting comes from its invariance to positive rescaling. Since the operator $T$ updates each density by reweighting $\tilde g_j$ by the reciprocal of selection probability $Pr_j(\cdot;\psi)$ and then renormalizing, both the common factor $\tilde g_j$ and the normalizing constants cancel out when computing the projective distance.
Therefore, for any two conjectured densities $\psi$ and $\phi$, 
\begin{align*}
 d_H((T\psi)_j,(T\phi)_j)  & =   \ln
\bigg(\frac{
\operatorname*{sup}_{p\in[\underline{p}_j,\overline{p}_j]}
\frac{(T\psi)_j(p)}{(T\phi)_j(p)}}{\operatorname*{inf}_{p\in[\underline{p}_j,\overline{p}_j]}
\frac{(T\psi)_j(p)}{(T\phi)_j(p)}}
\bigg )  \\
& = \ln \bigg( \frac{\operatorname*{sup}_{p\in[\underline{p}_j,\overline{p}_j]} \frac{
 \tilde g_j(p)/Pr_j(p;\psi)
 }{
 \text{norm.\ const.\ for $\psi$}
 }
 \frac{\text{norm.\ const.\ for $\phi$}
 }{
 \tilde g_j(p)/Pr_j(p;\phi)
 }
 }{
 \operatorname*{inf}_{p\in[\underline{p}_j,\overline{p}_j]} \frac{
 \tilde g_j(p)/Pr_j(p;\psi)
 }{
 \text{norm.\ const.\ for $\psi$}
 }
 \frac{\text{norm.\ const.\ for $\phi$}
 }{
 \tilde g_j(p)/Pr_j(p;\phi)
 }
 }  \bigg) \\
& =\ln
\bigg(\frac{
\operatorname*{sup}_{p\in[\underline{p}_j,\overline{p}_j]}
\frac{Pr_j(p;\phi)}{Pr_j(p;\psi)}}{\operatorname*{inf}_{p\in[\underline{p}_j,\overline{p}_j]}
\frac{Pr_j(p;\phi)}{Pr_j(p;\psi)}}
\bigg )  = d_H(Pr_j(\cdot;\phi),Pr_j(\cdot;\psi)). 
\end{align*}
This equation reveals an important isometry: measuring the distance between $(T\psi)_j$ and $(T\phi)_j$ is equivalent to measuring the distance between the corresponding selection probabilities under Hilbert's projective metric. 
Consequently, the contraction properties of the operator $T$ can be characterized by studying how perturbations in the conjectured densities affect the implied choice probabilities.

\subsection{General Probability Measure Formulation}
\label{sec:formal}

The density formulation in Section \ref{sec:fixed point density} is useful for illustrating the key ideas but is not essential for the results. We now present the corresponding measure-theoretic formulation for arbitrary probability measures and adopt this framework for the rest of the paper.

As before, the selected outcome distribution $\tilde G_j$ can be derived from the potential outcome distribution $G_j$ via Bayes' rule.
For any Borel set $A\subseteq[\underline p_j,\overline p_j]$, the counterpart of Equation \eqref{eq: expost} for general probability measures is: 
\begin{equation}\label{eq: expost_app}
    \tilde G_j(A)
    =
    \frac{
        \int_A Pr_j(p_j;G)\,dG_j(p_j)
    }{
        \int_{\underline p_j}^{\overline p_j} Pr_j(p_j;G)\,dG_j(p_j)
    }, \text{ for all }j, 
\end{equation}
where 
\begin{equation}
\label{eq:choice_prob_app}
Pr_j(p_j;G)=\int_{\boldsymbol{p}_{-j}}f_j(p_j, \boldsymbol{p}_{-j})\prod_{k\neq j}dG_k(p_k).
\end{equation}
Equation \eqref{eq: expost_app} defines a mapping $F\colon \prod_j \Delta([\underline p_j, \overline p_j])\to \prod_j \Delta([\underline p_j, \overline p_j])$ that maps the potential outcome distribution $G$ to the selected outcome distribution $\tilde G$.

The inversion argument developed in the density-based framework extends directly to the present setting. In particular, the counterpart of Equation \eqref{eq:inversion} is
\begin{equation}
    \label{eq:inversion_app}
   G_j(A)
    =
    \frac{
        \int_A d\tilde G_j(p_j)/Pr_j(p_j;G)
    }{
        \int_{\underline p_j}^{\overline p_j} d\tilde G_j(p_j)/Pr_j(p_j;G)
    }, \text{ for all }j. 
\end{equation}
Equation \eqref{eq:inversion_app} defines an operator 
$T\colon \prod_j \Delta([\underline p_j, \overline p_j])\to \prod_j \Delta([\underline p_j, \overline p_j])$ as follows.
\begin{equation}\label{eq:operator_app}
(T\Psi)_j(A)
    =
    \frac{
        \int_A d\tilde G_j(p_j)/Pr_j(p_j;\Psi)
    }{
        \int_{\underline p_j}^{\overline p_j} d\tilde G_j(p_j)/Pr_j(p_j;\Psi)
    }, \text{ for all }j, 
\end{equation} where $\Psi=(\Psi_1,\Psi_2,\ldots,\Psi_J)\in \prod_j \Delta([\underline p_j, \overline p_j])$.

For two probability measures $\Psi_j,\Phi_j \in \Delta([\underline{p}_j,\overline{p}_j])$, 
let $d_H(\Psi_j,\Phi_j)$ denote their distance under the general form of Hilbert's projective metric; see Appendix \ref{sec:Radon Nikodym} for details.
To account for the presence of $J$ alternatives, we define a metric in the space $ \prod_j \Delta([\underline p_j, \overline p_j])$ by taking the maximum distance among all alternatives: 
$$D(\Psi, \Phi)=\max_{j\in\mathcal{J}} d_H(\Psi_j, \Phi_j)$$
for any $\Psi, \Phi\in \prod_j \Delta([\underline p_j, \overline p_j])$. Henceforth, we work with the metric space $(\prod_j \Delta([\underline p_j, \overline p_j]), D)$.

\subsection{Functional Contraction}
\label{sec:contraction}
We now provide a primitive sufficient condition under which the Bayes-inversion operator \(T\) is a contraction. We first define the \textit{relative selection probability variation}:
\[ \Delta_j = \sup_{\substack{ p_j,p_j'\in [\underline p_j,\overline p_j] \\ \boldsymbol p_{-j},\boldsymbol p_{-j}'\in \prod_{k\neq j}[\underline p_k,\overline p_k] }} \left| \log \frac{f_j(p_j,\boldsymbol p_{-j})}{f_j(p_j',\boldsymbol p_{-j})} - \log \frac{f_j(p_j,\boldsymbol p_{-j}')}{f_j(p_j',\boldsymbol p_{-j}')} \right|. \] Intuitively, \(\Delta_j\) measures how much the relative selection probability of alternative \(j\) between two own-price realizations, \(p_j\) and \(p_j'\), can vary as competitors' prices change from \(\boldsymbol p_{-j}\) to \(\boldsymbol p_{-j}'\).\footnote{In Perron-Frobenius theory, $\Delta_j$ is also known as the projective diameter. Geometrically, \(\Delta_j\) measures the departure of \(f_j\) from multiplicative separability between \(p_j\) and \(\boldsymbol p_{-j}\). If \(f_j(p_j,\boldsymbol p_{-j})=a(p_j)b(\boldsymbol p_{-j})\), then \(\Delta_j=0\).}

Let
\[
\rho
=
(J-1)
\max_{j\in\mathcal J}
\tanh\left(\frac{\Delta_j}{4}\right),
\]
where $\tanh(\cdot)$ denotes the hyperbolic tangent function. Since \(\Delta_j\ge 0\) is finite under the maintained assumption that $f_j$ is continuous and strictly positive on compact support, it follows that \(\tanh(\Delta_j/4)\in[0,1)\).

\begin{thm}[Contraction]\label{thm: contraction}
    If $\rho<1$, the operator $T$ is a contraction with modulus at most $\rho$. In particular, in the binary-choice case ($J=2$), the operator $T$ is a contraction.
\end{thm}
\begin{proof}
    See Appendix \ref{sec:proof_1}.
\end{proof}

Theorem~\ref{thm: contraction} establishes a key identification result for selection models.
Whenever $\rho<1$ (which is automatically satisfied in the binary-choice case), the Bayes-inversion operator $T$ admits a unique fixed point. 
Since the offered price distribution $G$ is characterized as the fixed point of $T$, it follows that $G$ is uniquely determined by
$\tilde G$ given the selection function $f$. Thus $G$ is identified given $f$.
Notably, the theorem imposes no assumptions on the functional form of the potential outcome distributions, allowing the outcome equation to be fully nonparametric and nonseparable in the error terms.\footnote{Theorem \ref{thm: contraction} applies more generally to any selection function \(f\) that is strictly positive and has finite projective diameter. Thus, the continuity and compact-support assumptions on the selection function $f$ are not essential to this theorem.}
We formally state the identification results in the following corollary.

\begin{col}[Identification]\label{col: identification}
    Suppose $\rho<1$. Then, for every selected price distribution $\tilde G$, there exists a unique offered price distribution $G$ consistent with $\tilde G$ given the selection function $f$. Moreover, for any initial distribution $\Psi\in \prod_j \Delta([\underline p_j, \overline p_j])$, $G=\lim_{n\to \infty}T^n \Psi$. Hence, the offered price distribution $G$ is nonparametrically identified from $\tilde G$ given $f$.
\end{col}

Corollary \ref{col: identification} provides a \emph{constructive} identification result for the offered price distribution $G$.
Take any initial distribution $\Psi\in \prod_j \Delta([\underline p_j, \overline p_j])$, by Theorem \ref{thm: contraction},
$$D(T^n \Psi,G)=D(T^n \Psi,TG)\leq \rho D(T^{n-1} \Psi,G)\leq \rho^{n-1} D(T \Psi,G),$$
where $D(T \Psi,G)$ is finite.\footnote{Appendix \ref{sec:Radon Nikodym} shows that $D(T \Psi,G)$ is finite for any initial distribution $\Psi$ under the maintained assumption that $f_j$ is continuous and strictly positive on compact
support.} This implies
\begin{align*}
    &\lim_{n\to \infty}D(T^n \Psi,G)=0, \\
    &\lim_{n\to \infty}T^n \Psi=G.
\end{align*}
Thus, starting from any initial conjecture, repeated application of \(T\) converges geometrically to the potential outcome distributions associated with the selection function.

\paragraph{Interpreting the Contraction Condition.}
We now provide some economic intuition for the factors that determine the strength of the contraction in Theorem~\ref{thm: contraction}.

First, the factor \(J-1\) arises in $\rho$ because the selection probability for alternative \(j\) depends
on the product distribution of the \(J-1\) competing alternatives. When \(J=2\), the
contraction condition holds automatically. This observation is empirically relevant because many applications can be naturally formulated as binary-choice problems, including treatment-versus-control settings, labor force participation decisions, and purchase decisions involving a single product and an outside option. In such environments, the contraction property follows without imposing additional restrictions.

Another key primitive underlying the contraction condition is the relative selection probability variation $\Delta_j$. This quantity can be related to a familiar semi-elasticity measure.
Specifically, if \(\ln f_j\) is continuously differentiable in \(p_j\), define
\[
M_j
=
\sup_{p_j,\boldsymbol p_{-j},\boldsymbol p_{-j}'}
\left|
\frac{\partial \ln f_j(p_j,\boldsymbol p_{-j})}{\partial p_j}
-
\frac{\partial \ln f_j(p_j,\boldsymbol p_{-j}')}{\partial p_j}
\right|.
\]
By the mean value theorem,
\[
\Delta_j
\le
(\bar p_j-\underline p_j)M_j.
\]
The quantity $M_j$ measures the maximum variation in the own-price semi-elasticity of alternative $j$ across different configurations of competitors' prices. It is small when the own-price responsiveness of alternative $j$ remains
stable as competitors’ prices vary. This is likely to occur when alternative $j$ is highly differentiated or only weakly substitutable with other alternatives, so that changes in competitors' prices have little effect on consumers' sensitivity to $p_j$.
In the extreme case where the own-price semi-elasticity of alternative $j$ is independent of competitors' prices, $M_j=0$, which in turn implies $\Delta_j=0$.

In addition to the strength of interaction effects, as captured by the selection function $f_j$, the relative selection probability variation also depends on the range over which prices vary. Holding the selection function fixed, $\Delta_j$ is weakly increasing as the price support expands. 
Consequently, even when competitors' prices affect the responsiveness of demand to product $j$'s own price, the resulting $\Delta_j$ may remain small if prices vary over a sufficiently narrow range. Conversely, the same degree of interaction can generate a larger $\Delta_j$ when prices span a wider interval.

Finally, note that the contraction condition in Theorem \ref{thm: contraction} is sufficient rather than necessary. The operator $T$ may still be a contraction even when the upper bound exceeds one. Moreover, the contraction modulus established in the theorem is a global bound. Consequently, even when the operator is not globally contractive, it may still be locally contractive in a neighborhood of the fixed point.

This distinction is particularly important for empirical applications. While the global bound is driven by worst-case conjectures in Equation \eqref{eq:contraction definition}, estimation in practice is primarily concerned with regions near the economically relevant distributions implied by the observed data. Moreover, numerical inversion procedures based on fixed-point iterations are typically initialized from economically plausible distributions. 
As a result, local contraction may be sufficient to ensure stable recovery of the offered price distributions, even in settings where the global sufficient condition cannot be verified.

\paragraph{Proof Intuition}

We now briefly discuss the key idea underlying the contraction result.
Recall that the operator $T$ defined in Equation (\ref{eq:operator_app}) is a nonlinear self-map that takes a conjectured offered price distribution $\Psi$ and produces an updated distribution $T\Psi$.
Although the update rule is nonlinear, it can be decomposed into three simpler steps, which makes its contraction properties easier to analyze.

\begin{enumerate}
    \item Construct the competitors' joint price distribution from the collection of conjectured distributions \(\Psi=(\Psi_1,\ldots,\Psi_J)\):
    \[
    \Psi \mapsto \Psi_{-j} \equiv \prod_{k\neq j}\Psi_k.
    \]
    \item Use the competitors' price distributions to derive the selection probability of alternative $j$ given its own price $p_j$:
    \[
\Psi_{-j}\mapsto Pr_j(\cdot;\Psi),
\]
where 
\[
Pr_j(p_j;\Psi)=\int f_j(p_j,p_{-j})\,d\Psi_{-j}(p_{-j}).
\]
This mapping is a positive linear operator: it maps the competitors' price distributions $\Psi_{-j}$ to a positive function of $p_j$.\footnote{This mapping is linear in the product measure \(\Psi_{-j}\), although not jointly
linear in the vector of marginal distributions \((\Psi_k)_{k\ne j}\).} 
\item Apply Bayes' rule to obtain the updated distribution:
\[
Pr_j(\cdot;\Psi)\mapsto (T\Psi)_j.
\]
\end{enumerate}

This decomposition separates the source of contraction from the other components of the update. The first step only aggregates the coordinate-wise distances between conjectured marginal distributions into a distance between product measures. The Bayes update in the third step amounts to a positive rescaling. Since Hilbert's projective metric is invariant to positive rescaling, the Bayes step does not affect projective distances. Thus, the nontrivial source of contraction is the positive linear operator in the second step, which maps competitors' price distributions into selection probabilities.

This decomposition is the central ingredient in the proof of Theorem~\ref{thm: contraction}. Although the operator $T$ is nonlinear, its core component is the positive linear operator in Step 2 that maps competitors' price distributions into selection probabilities. This linear structure allows us to apply Perron–Frobenius theory and the Birkhoff–Hopf contraction theorem (see \cite{bushell1973hilbert} and Theorem A.4.1 of \cite{lemmens2012nonlinear}).
The theorem states that a positive linear operator
contracts Hilbert's projective metric at a rate determined by its
projective diameter:
\[
d_H(Pr_j(\cdot;\Psi),Pr_j(\cdot;\Phi))
\le
\tanh\left(\frac{\Delta_j}{4}\right)d_H(\Psi_{-j},\Phi_{-j}).
\]
Theorem~\ref{thm: contraction} shows that
this one-coordinate Perron--Frobenius contraction propagates through the nonlinear Bayes operator $T$ and yields a global contraction of the full vector of offered price distributions.
The complete proof is provided in Appendix \ref{sec:proof_1}.

\subsection{Special Cases}
\label{sec:special}

Thus far, we have not imposed any structure on the selection function. For a general selection function, we have to take the supremum over prices to compute the relative selection probability variation. Now we impose an assumption on the selection function to determine where the supremum is attained.

\begin{asmp}[Log Supermodularity]\label{asmp: monotone}
    Suppose \(\ln f_j\) is continuously differentiable
in \(p_j\). For all $j\in\mathcal{J}$ and $p_j\in[\underline p_j, \overline p_j]$, $\frac{\partial\ln f_j(p_j,\boldsymbol{p}_{-j})}{\partial p_j}$ is weakly increasing in each $p_k$ with $k\neq j$.
\end{asmp}
Under log supermodularity, the relative selection probability variation in Theorem \ref{thm: contraction}
is attained at the boundary. The result is stated below. Let
\[
\rho^*
=
(J-1)\max_{j\in\mathcal J}
\tanh\left\{
\frac{1}{4}
\left[
\ln \frac{f_j(\overline p_j,\overline{\boldsymbol p}_{-j})}{ f_j(\underline p_j,\overline{\boldsymbol p}_{-j})}
-\ln \frac{f_j(\overline p_j,\underline{\boldsymbol p}_{-j})
}{ f_j(\underline p_j,\underline{\boldsymbol p}_{-j})}
\right]
\right\}.
\]

\begin{thm}\label{thm: contraction special}
    Suppose that Assumption \ref{asmp: monotone} holds. If $\rho^*<1$, the operator $T$ is a contraction with modulus at most $\rho^*$.
\end{thm}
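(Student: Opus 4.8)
The plan is to recycle the entire argument behind Theorem~\ref{thm: contraction} up to the point where it produces a \emph{pointwise} (in $p_j$) estimate of the relevant integrand, and only then to depart from it: rather than bounding the key oscillation by the crude product $(\overline p_j-\underline p_j)\sup_{p_j}(\cdots)$, I would integrate the pointwise bound exactly, using the monotonicity supplied by Assumption~\ref{asmp: monotone} to evaluate the resulting integral in closed form. This is precisely what converts the factor $(\overline p_j-\underline p_j)M_j$ appearing in $\rho$ into the bracketed log-difference defining $\rho^*$.

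Concretely, write $g_j(p_j)=\ln\frac{Pr_j(p_j;\Psi)}{Pr_j(p_j;\Phi)}$. The metric computation preceding the theorems already gives $d((T\Psi)_j,(T\Phi)_j)\le \sup_{p_j}g_j(p_j)+\sup_{p_j}(-g_j(p_j))=\operatorname{osc}(g_j)$, the oscillation of $g_j$ over $[\underline p_j,\overline p_j]$. The first move is to replace the bound $\operatorname{osc}(g_j)\le(\overline p_j-\underline p_j)\sup_{p_j}|g_j'(p_j)|$ (which underlies Theorem~\ref{thm: contraction}) by the sharper $\operatorname{osc}(g_j)\le\int_{\underline p_j}^{\overline p_j}|g_j'(y)|\,dy$, obtained from the fundamental theorem of calculus applied between the argmax and argmin of $g_j$. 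Next I would import the key pointwise estimate from the proof of Theorem~\ref{thm: contraction}: writing $g_j'(y)=\partial_y\ln Pr_j(y;\Psi)-\partial_y\ln Pr_j(y;\Phi)$ as a difference of two expectations of the semi-elasticity $\partial_y\ln f_j(y,\cdot)$ under the tilted (change-of-measure) laws on $\boldsymbol p_{-j}$ induced by $\Psi$ and $\Phi$, the transportation bound yields $|g_j'(y)|\le\frac{J-1}{4}\,m_j(y)\,D(\Psi,\Phi)$, where $m_j(y)=\sup_{\boldsymbol p_{-j},\boldsymbol p_{-j}'}\big|\partial_y\ln f_j(y,\boldsymbol p_{-j})-\partial_y\ln f_j(y,\boldsymbol p_{-j}')\big|$ is the local (fixed-$y$) semi-elasticity range, so that $M_j=\sup_y m_j(y)$.

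The decisive simplification comes from Assumption~\ref{asmp: monotone}. Since $\partial_y\ln f_j(y,\boldsymbol p_{-j})$ is weakly increasing in each coordinate of $\boldsymbol p_{-j}$, its extremes over the box are attained at the corners $\underline{\boldsymbol p}_{-j}$ and $\overline{\boldsymbol p}_{-j}$, whence $m_j(y)=\partial_y\ln f_j(y,\overline{\boldsymbol p}_{-j})-\partial_y\ln f_j(y,\underline{\boldsymbol p}_{-j})\ge 0$. Integrating over $y$ and applying the fundamental theorem of calculus twice gives $\int_{\underline p_j}^{\overline p_j}m_j(y)\,dy=\ln f_j(\overline{\boldsymbol p})-\ln f_j(\underline p_j,\overline{\boldsymbol p}_{-j})-\ln f_j(\overline p_j,\underline{\boldsymbol p}_{-j})+\ln f_j(\underline{\boldsymbol p})$, exactly the bracket in $\rho^*$. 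Combining the displays yields $d((T\Psi)_j,(T\Phi)_j)\le\frac{J-1}{4}\big[\ln f_j(\overline{\boldsymbol p})-\ln f_j(\underline p_j,\overline{\boldsymbol p}_{-j})-\ln f_j(\overline p_j,\underline{\boldsymbol p}_{-j})+\ln f_j(\underline{\boldsymbol p})\big]D(\Psi,\Phi)$; taking $\max_{j\in\mathcal J}$ of the left side and bounding each $j$-factor by its maximum gives $D(T\Psi,T\Phi)\le\rho^* D(\Psi,\Phi)$, and $\rho^*<1$ makes $T$ a contraction.

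I expect the main obstacle to be justifying the pointwise transportation bound on $|g_j'(y)|$ at a \emph{fixed} $y$, i.e.\ ensuring that the argument of Theorem~\ref{thm: contraction} delivers the estimate \emph{before} the supremum over $p_j$ is taken, with $m_j(y)$ (the local range) rather than $M_j$ on the right-hand side. If Theorem~\ref{thm: contraction} is organized so that this per-$y$ bound is already available, the present statement is essentially a corollary: everything reduces to replacing $\sup_y$ by $\int$ and to the observation, from Assumption~\ref{asmp: monotone}, that the local range $m_j(y)$ is attained at the corner configuration and integrates telescopically to the stated log-difference. A minor point worth recording is that $m_j(y)\ge 0$ forces $\rho^*\ge 0$, and $\int m_j\le(\overline p_j-\underline p_j)M_j$ forces $\rho^*\le\rho$, so the new modulus is indeed weakly tighter.
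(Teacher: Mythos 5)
Your proposal is correct and follows essentially the same route as the paper: the paper's proof likewise keeps the transportation/tilted-measure bound on $\big|\tfrac{d}{dp_j}\ln\tfrac{Pr_j(p_j;\Psi)}{Pr_j(p_j;\Phi)}\big|$ pointwise in $p_j$, uses log supermodularity to replace the local semi-elasticity range by the corner difference $\tfrac{\partial\ln f_j(p_j,\overline{\boldsymbol p}_{-j})}{\partial p_j}-\tfrac{\partial\ln f_j(p_j,\underline{\boldsymbol p}_{-j})}{\partial p_j}$, and then integrates via the fundamental theorem of calculus to obtain the bracketed log-difference in $\rho^*$. The per-$y$ bound you worried about is indeed available before the supremum over $p_j$ is taken (the supremum only enters at the final step, via Lemma~\ref{lem: limit to zero} and the ensuing FTC argument), so your plan goes through exactly as the paper executes it.
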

\begin{proof}
    See Appendix \ref{sec:proof_2}.
\end{proof}

Under Assumption \ref{asmp: monotone}, the modulus $\rho^\ast$ takes a much simpler form and is straightforward to compute. The log-supermodularity assumption holds in models widely adopted by empirical researchers.
For example, the multinomial logit model satisfies Assumption \ref{asmp: monotone}. 
However, Assumption \ref{asmp: monotone} may not hold for probit models with three or more alternatives; in such cases, the more general results in Theorem \ref{thm: contraction} can be applied.

To summarize, our contraction results provide a novel method for identifying the potential outcome distribution from the selected outcome distribution, given any selection function $f$---whether parametric or nonparametric, and regardless of whether it is microfounded in a utility maximization problem. Moreover, the identification is constructive: starting with an initial guess, iterative application of the operator converges to the potential outcome distributions associated with the selection function. 
This powerful identification result exhausts all the information contained in the selected outcome distributions. Then the estimation of the selection model essentially reduces to recovering the selection function from observed choice patterns. We discuss the estimation strategy in the next section.

\section{Estimation}
\label{sec:estimation}

We now turn to the estimation of the model's primitives, which include the unobserved offered price distributions $G$ and the selection function $f$. 
In the data, for each individual $i$, we observe their choice $y_i \in \mathcal{J}$ and the price of the selected product $p_i$. Let $x_{ij}$ denote a vector of observable characteristics, and define $x_i=(x_{i1}', \cdots, x_{iJ}')'\in X$. We let $x_i^\ast \in X^\ast$ denote an unobservable characteristic which may affect both the selection decision and the distribution of potential outcomes.
Note that the unobserved characteristic need not be one-dimensional. More generally, $x^\ast$ may represent a vector of latent characteristics. In this case, $X^\ast$ denotes the corresponding product space of latent attributes.

Our theoretical results are developed under a highly flexible specification of the selection mechanism $f$.
In principle, the framework could be combined with parametric, semiparametric, or nonparametric approaches to estimating $f$. 
In practice, however, the selection mechanism may depend on the entire vector of prices and product characteristics, as well as consumer heterogeneity, rendering fully nonparametric estimation challenging.
We therefore adopt the standard differentiated-products demand framework as in \cite{berry1994estimating} and \cite{berry1995automobile}, which provides a tractable and widely used parameterization of the selection probabilities.
This parametric specification allows us to propose a simple estimation procedure while substantially reducing the data requirements needed to recover $f$, a feature that is particularly valuable when observed choice variation is limited.

In particular, we assume that the selection function $f$ is derived from a standard multinomial choice model with an indirect utility given by
$$u_{ij}=v_j(p_{ij}, x_{ij}, x_i^\ast, \varepsilon_{ij}; \theta), $$ where $v_j$ is a known function indexed by a finite-dimensional parameter vector $\theta$.
Here, $p_{ij}$ is the offered price of alternative $j$ for individual $i$, and the vector of unobserved shocks $\varepsilon_i=(\varepsilon_{i1}, \cdots, \varepsilon_{iJ})$ follows a known joint distribution, such as Type 1 extreme value.
Each individual chooses the alternative that maximizes utility, and the selection function $f$ is captured by the parameter $\theta$. Throughout the paper, we use $\theta_0$ to denote the true parameter.

For example, a widely used specification takes the following form:
\begin{align}
    u_{ij} = \gamma p_{ij} + x_{ij}'\beta + \xi_j +  x_i^\ast\kappa_j+\varepsilon_{ij}, \quad j=1, 2, \cdots, J, \label{eq:utility_example}
\end{align}
where $\xi_j$ represents a scalar-valued unobserved characteristic of alternative $j$, such as product quality or brand loyalty.  
The term $x_i^\ast \kappa_j$ allows preferences for product $j$ to vary with the unobservable characteristic $x_i^\ast$. 
In this example, $\theta=(\gamma, \beta, \boldsymbol{\xi},\boldsymbol{\kappa})$, where $\boldsymbol{\xi}=(\xi_1, \cdots, \xi_J)$ and $\boldsymbol{\kappa}=(\kappa_1, \cdots, \kappa_J)$.

\subsection{Two-Step Estimation Strategy}
\label{sec:estimation_strategy}

We propose a two-step estimation procedure. In the first step, we estimate the selected outcome distribution conditional on both observable and unobservable covariates using instruments. Once the selected outcome distribution has been recovered, for any given selection function 
$f$, the potential outcome distribution can be recovered iteratively using the contraction mapping results in Section \ref{sec:model_results}.
The second step nests this fixed-point problem within an estimation routine that recovers the parameters of the selection function from agents' observed choice patterns. 
Using the resulting parameter estimates, we then re-run the fixed-point algorithm to recover the offered price distribution $G$.

\paragraph{Step 1: Estimating Selected Outcome Distribution}

The key inputs for our contraction-mapping results are the selected outcome distributions $\tilde{G}$ conditional on $(x, x^\ast)$. 
When all relevant covariates are observed, so that no unobserved component $x^\ast$ is present, $\tilde{G}$ conditional on $x$ can be easily estimated nonparametrically from the data, for example using kernel methods. We therefore do not elaborate on this case.
The more challenging setting arises when an unobserved covariate $x^\ast$ is present. 
In this case, $\tilde{G}$ conditional on $(x, x^\ast)$ cannot be directly estimated from the observed data, and additional information about the unobserved covariates is required in order to recover this distribution.

We follow the instrumental variable approach of \cite{hu2008identification} to estimate the selected outcome distribution conditional on the unobservable $x^\ast$ in the first step. 
Let $z_1$ and $z_2$ denote two instrumental variables. 
We assume that the variables 
$\omega_i=\{x_i,y_i,p_i,z_{1i},z_{2i}\}$ are observed in an i.i.d.\ sample and take values on a finite support. The finite support assumption, however, is \emph{not} essential for identifying and estimating the selected outcome distributions in the first step. \cite{hu2008instrumental} extend the results in \cite{hu2008identification} to settings with continuously distributed variables, so a similar identification argument and estimation procedure remain valid without discreteness.
Nor is the finite-support assumption required for consistency of our estimator. Rather, it is adopted primarily for technical convenience in establishing asymptotic normality, a point to which we return in Section \ref{sec:asymptotics}.

The instrumental variables $z_1$ and $z_2$ are required to satisfy the following condition:
\begin{align}
    & h_{p,z_1|z_2,x,y}(p,z_1|z_2, x,y) \nonumber \\
    =&\sum_{x^*} h_{p|x^*,x,y}(p|x^*, x, y) h_{z_1|x^*,x,y}(z_1|x^*, x, y)h_{x^*|z_2,x,y}(x^*|z_2, x, y), \label{eq: decomposition}
\end{align}
where $h(\cdot)$ represents probability mass functions. 
Equation (\ref{eq: decomposition}) shows that the joint distribution of $(p,z_1)$ conditional on $(z_2,x,y)$ can be expressed as a mixture over the latent variable $x^\ast$. This condition requires, first, that the two instrumental variables are informative about the latent variable $x^\ast$, and second, that once we condition on $x^\ast$, the price and the instruments are independent.\footnote{Alternatively, suppose we have three instruments $(z_1, z_2, z_3)$ such that, 
\begin{align*}
    & h_{z_3,z_1|z_2,x,y, p}(z_3,z_1|z_2, x,y, p) \\
    =&\sum_{x^*} h_{z_3|x^*,x,y,p}(z_3|x^*, x, y, p) h_{z_1|x^*,x,y, p}(z_1|x^*, x, y, p)h_{x^*|z_2,x,y,p}(x^*|z_2, x, y,p).
\end{align*}
Under this condition, the instruments are allowed to depend arbitrarily on the price, while only requiring independence across instruments conditional on the latent variable.
}
In practice, finding such instruments is often feasible. For example, in the insurance pricing setting studied by \cite{cosconati2024competing}, the latent variable $x^\ast$ represents a consumer’s unobserved risk type, which affects the premiums offered by insurers. Realized claims are informative about underlying risk and can therefore serve as proxy variables for the latent type.
In labor applications, the latent variable might correspond to a worker’s unobserved productivity, which affects wages. Measures such as work-performance evaluations or test scores can provide useful proxies in these settings.

Theorem 1 in \cite{hu2008identification} shows that, under additional rank and ordering assumptions, the unknown probability mass functions on the right-hand side of Equation (\ref{eq: decomposition}), 
$h=(h_{p|x^*,x,y}, h_{z_1|x^*,x,y}, h_{x^*|z_2,x,y})\in H$, are nonparametrically identified. 
We do not restate these additional assumptions here and instead refer readers to \cite{hu2008identification} for the technical details.

Given Equation (\ref{eq: decomposition}), a maximum likelihood estimator of $h$ can be obtained in a straightforward manner. We denote this estimator by $\hat{h}=(\hat h_{p|x^*,x,y},\hat{h}_{z_1|x^*,x,y}, \hat{h}_{x^*|z_2,x,y})$.
The term $\hat h_{p|x^*,x,y}$ represents the estimate of the selected price distribution conditional on $(x,x^\ast)$, which corresponds to $\tilde G(x,x^*)$.
%The only distinction is that $h_{p|x^*,x,y}$ is a probability mass function, whereas $\tilde G(x,x^*)$ is its associated cumulative mass function. 
We do not distinguish between these two objects in what follows.
Finally, by taking the expectation of $\hat{h}_{x^*|z_2, x, y}$ with respect to the distribution of $z_2$, we obtain an estimate of the distribution of the latent variable $x^\ast$ conditional on $(x,y)$, which we denote by $\hat{h}_{x^*|x,y}$.

\paragraph{Step 2: Estimating Selection Function Parameters and Offered Price Distribution}

Given the first-step estimates 
$\hat h_{p|x^*,x,y}$ and $\hat h_{x^*|x,y}$, we propose a semiparametric maximum likelihood estimator for parameter $\theta$ in the selection function:  
\begin{align} \label{eq:estimator}
    \hat{\theta} = \arg\max_{\theta \in \Theta} \hat Q_n(\theta), 
\end{align}
where
\begin{align}
    & \hat Q_n(\theta)=\frac{1}{n} \sum_{i=1}^n \sum_{x^*} \hat h_{x^*|x,y}(x^*|x_i,y_i)\ln Prob_{y_i}( x_i,x^* , \theta,\hat h_{p|x^*,x,y}), \label{eq: estimator} \\
    & Prob_j( x,x^* ; \theta,\tilde G) = \int_{\boldsymbol{p}} f_j(\boldsymbol{p}; x,x^*, \theta)d\big(F^{-1}(\tilde G(x,x^*);\theta,x,x^*)\big)(\boldsymbol{p}). \label{eq: prob j}
\end{align}
Equation (\ref{eq: prob j}) derives the probability that alternative $j$ is chosen conditional on $(x, x^\ast)$ for any utility parameters $\theta$ and any selected outcome distributions $\tilde{G}$. This probability is obtained by integrating the selection function $f_j(\boldsymbol{p}; x, x^\ast, \theta)$ with respect to $F^{-1}(\tilde G(x,x^*);\theta,x,x^*)$, which is the recovered offered price distribution for all alternatives. Recall that $F$ denotes the mapping $G\mapsto\tilde G$ defined by Equation \eqref{eq: expost_app}. The inverse $F^{-1}$ therefore maps the selected outcome distribution $\tilde{G}$ back to the corresponding potential outcome distribution $G$.

To recover the offered price distribution, we rely on the contraction-mapping results in Theorem \ref{thm: contraction}, which guarantees that we can replicate $F^{-1}$ by iterating the operator $T$ until convergence.
Note that the operator $T$ depends on two components: (1) the parameters of the selection function, $ \theta$, and (2) the selected outcome distributions $\tilde{G}$, for which a first-step estimate $\hat h_{p|x^*, x, y}$ is obtained. 
We use $\hat T_{\theta,\hat{h}}$ to denote the operator constructed given $\theta$ and $\hat{h}$.
Let $\hat T^{\infty}_{\theta,\hat{h}} \Psi$ denote the limit of the iterates of $\hat T_{\theta,\hat{h}}$ starting from an initial distribution $\Psi$.\footnote{In practice, the algorithm used to solve the fixed point is terminated after a finite number of iterations. We show that the resulting approximation error is asymptotically negligible, provided that the number of iterations grows fast enough compared to the logarithm of the sample size. Further details are provided at the end of Section \ref{sec:asymptotics}.}

Our second-step estimation follows a nested fixed-point algorithm. In the inner loop, for any candidate value of the parameter $\theta$ in the selection function, we obtain the fixed point of the operator $T$ as $\hat T^{\infty}_{\theta,\hat{h}} \Psi$.
Given the resulting offered price distribution, we compute agents’ choice probabilities using Equation \eqref{eq: prob j} and then construct the sample analogue of the likelihood function in Equation (\ref{eq: estimator}).
In the outer loop, we then search over $\theta$ to maximize this likelihood.

Once $\hat{\theta}$ is obtained, a plug-in estimator of the offered price distribution $G$ can be constructed by
$$\hat{G}=\hat T^{\infty}_{\hat{\theta},\hat{h}} \Psi.$$
This step essentially repeats the inner-loop procedure, except that we replace $\theta$ with its estimate $\hat{\theta}$.

\subsection{Consistency and Asymptotic Normality}
\label{sec:asymptotics}

We now discuss the asymptotic properties of our proposed estimators $\hat{\theta}$ and $\hat{G}$. 
When constructing the model-implied choice probabilities in Equation \eqref{eq: prob j}, the inverse mapping $F^{-1}$ maps the selected price distribution $\tilde G$ back to the offered price distribution $G$. This inverse mapping is the key nonstandard component of our likelihood function. 
Since $F^{-1}$ does not admit a closed-form expression, its properties are not immediate.
We therefore first establish its theoretical properties, which provide the foundation for the subsequent asymptotic analysis of $\hat{\theta}$ and $\hat{G}$.
%We therefore begin by analyzing the properties of this inverse mapping $F^{-1}$. 

\begin{prop}\label{prop:homeomorphism}
Suppose $\rho<1$. The mapping $F$ is a homeomorphism. Moreover, both $F$ and $F^{-1}$ are Lipschitz continuous, with Lipschitz constants $1+\rho$ and $\frac{1}{1-\rho}$, respectively.
\end{prop}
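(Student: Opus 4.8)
The plan is to build everything on the fixed-point characterization supplied by Theorem~\ref{thm: contraction}. First I would pin down what $F^{-1}$ is. For a fixed target $\tilde G$, let $T_{\tilde G}$ denote the operator in Equation~\eqref{eq:operator}; since the condition $\rho<1$ depends only on $f$ and the supports and not on $\tilde G$, Theorem~\ref{thm: contraction} guarantees $T_{\tilde G}$ is a contraction and hence has a unique fixed point, which I would call $G$. I would then verify the consistency of $T$ with $F$: differentiating the fixed-point identity $(T_{\tilde G}G)_j=G_j$ gives $d\tilde G_j(p)=c_j\,Pr_j(p;G)\,dG_j(p)$ for a normalizing constant $c_j$, and substituting this into the definition of $F$ in Equation~\eqref{eq: expost} makes the factor $c_j\,Pr_j$ cancel, collapsing the ratio to $\tilde G_j(p)$, so $F(G)=\tilde G$. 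This shows $F$ is surjective, and injectivity follows from uniqueness of the fixed point; thus $F$ is a bijection with $F^{-1}(\tilde G)$ equal to the unique fixed point of $T_{\tilde G}$.

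For the Lipschitz bound on $F$, I would fix $\Psi,\Phi$ and compute the Radon--Nikodym derivative
\[
\frac{dF(\Psi)_j}{dF(\Phi)_j}(p)=\frac{Pr_j(p;\Psi)}{Pr_j(p;\Phi)}\cdot\frac{d\Psi_j}{d\Phi_j}(p)\cdot K_j,
\]
where $K_j$ is a positive constant independent of $p$ coming from the two normalizing denominators. The crucial structural feature is that the metric $d$ is a sum of two logarithmic $\esssup$'s of reciprocal derivatives, so any multiplicative constant such as $K_j$ cancels. Writing $d(F(\Psi)_j,F(\Phi)_j)=\esssup\ln R-\essinf\ln R$ with $R(p)=\frac{Pr_j(p;\Psi)}{Pr_j(p;\Phi)}\frac{d\Psi_j}{d\Phi_j}(p)$ and using subadditivity of the oscillation over the product, I would split the bound into a ``price-ratio'' term and a ``measure-ratio'' term. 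The measure-ratio term equals exactly $d(\Psi_j,\Phi_j)\le D(\Psi,\Phi)$, while the price-ratio term is precisely the quantity $\sup_{p_j}\ln\frac{Pr_j(p_j;\Psi)}{Pr_j(p_j;\Phi)}+\sup_{p_j}\ln\frac{Pr_j(p_j;\Phi)}{Pr_j(p_j;\Psi)}$ that the proof of Theorem~\ref{thm: contraction} already bounds by $\rho D(\Psi,\Phi)$. Adding the two and maximizing over $j$ yields $D(F(\Psi),F(\Phi))\le(1+\rho)D(\Psi,\Phi)$.

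For $F^{-1}$, I would compare the two operators $T_{\tilde G}$ and $T_{\tilde G'}$ associated with two selected distributions. Writing $G=F^{-1}(\tilde G)$, $G'=F^{-1}(\tilde G')$ and using that they are the respective fixed points, the triangle inequality gives $D(G,G')=D(T_{\tilde G}G,T_{\tilde G'}G')\le D(T_{\tilde G}G,T_{\tilde G}G')+D(T_{\tilde G}G',T_{\tilde G'}G')$. The first term is at most $\rho D(G,G')$ by Theorem~\ref{thm: contraction}. For the second term, the key observation is that $T_{\tilde G}G'$ and $T_{\tilde G'}G'$ use the \emph{same} conjecture $G'$, so the factor $Pr_j(p;G')$ cancels in the Radon--Nikodym derivative, leaving $\frac{d(T_{\tilde G}G')_j}{d(T_{\tilde G'}G')_j}(p)=\frac{d\tilde G_j}{d\tilde G'_j}(p)\cdot(\text{const})$; by the same constant-cancellation this term equals $d(\tilde G_j,\tilde G'_j)\le D(\tilde G,\tilde G')$. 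Rearranging $(1-\rho)D(G,G')\le D(\tilde G,\tilde G')$ delivers the Lipschitz constant $\frac{1}{1-\rho}$. Since $F$ is then a continuous bijection with continuous inverse, it is a homeomorphism.

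Finally, the main obstacle: the delicate part is not the triangle-inequality bookkeeping but correctly exploiting the algebraic structure of the Thompson-type metric---verifying that normalizing constants and the common selection-probability factors genuinely cancel in the sum of reciprocal log-$\esssup$'s, and recognizing that the residual ``price-ratio'' oscillation is exactly the object already controlled in the contraction proof, so that no new estimate on $f$ is required. I would also take care to confirm that all relevant measures stay mutually equivalent along each step, so that $d$ is finite and the derivatives are well defined; this holds because $f$ is strictly positive and $T$ preserves supports, guaranteeing $(T\Psi)_j\sim\tilde G_j\sim(T\Phi)_j$ throughout.
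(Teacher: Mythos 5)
Your proposal is correct and follows essentially the same route as the paper: bijectivity from the Banach fixed point, the $(1+\rho)$ bound from splitting the Radon--Nikodym derivative of $F(\Psi)_j$ against $F(\Phi)_j$ into the measure-ratio term ($\le D(\Psi,\Phi)$) and the price-ratio term already bounded by $\rho D(\Psi,\Phi)$ in the contraction proof, and the $\frac{1}{1-\rho}$ bound from the exact identity $D(T_{\tilde G}H,T_{\tilde G'}H)=D(\tilde G,\tilde G')$ obtained by cancelling the common $Pr_j(\cdot;H)$ factor. The only (immaterial) difference is that you obtain the inverse bound by the one-step rearrangement $(1-\rho)D(G,G')\le D(\tilde G,\tilde G')$ --- which implicitly uses that $D(G,G')<\infty$ whenever $D(\tilde G,\tilde G')<\infty$, true here since the selection probabilities are bounded away from $0$ and $1$ --- whereas the paper sums the geometric series $\sum_k \rho^k D(\tilde G,\tilde G')$ over iterates of $T_{\tilde G'}$.
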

\begin{proof}
    See Appendix \ref{sec:proof_3}.
\end{proof}

Proposition \ref{prop:homeomorphism} has three important implications. First, because $F$ is a homeomorphism, its inverse $F^{-1}$ is well-defined, and we have $G=F^{-1}(\tilde G)$. 
Second, the continuity of $F^{-1}$ implies that if a consistent estimator $\tilde G_n$ of the selected outcome distribution is used in place of $\tilde{G}$, then 
$$F^{-1}(\tilde G_n)\overset{p}{\to}  F^{-1}(\tilde G) = G \quad\text{as}\quad \tilde G_n\overset{p}{\to} \tilde G.$$ 
Finally, since $F^{-1}$ is Lipschitz continuous, $F^{-1}(\tilde G_n)$ converges to $G$ at the same rate as $\tilde G_n$ converges to $\tilde G$.

We now turn to the consistency and asymptotic normality of our estimators. To establish consistency, we rely on the fundamental consistency theorem for extremum estimators (Theorem 2.1 in \cite{newey1994large}).
We construct the true population objective function as follows:
$$Q_0(\theta)=\mathbb E_{x,x^*}\sum_{j=1}^J\bigg( \int_{\boldsymbol{p}} f_j(\boldsymbol{p};x,x^*, \theta_0)dG(x,x^*)(\boldsymbol{p})\bigg) \ln\big(Prob_j( x,x^*, \theta,\tilde G)\big),$$
where 
$\int_{\boldsymbol{p}} f_j(\boldsymbol{p};x,x^*, \theta_0)dG(x,x^*)(\boldsymbol{p})$ represents the true probability of selecting alternative $j$ conditional on $x$ and $x^*$.

We maintain the standing assumptions that \(f_j\) is strictly positive and continuous.
The following additional technical conditions are required for consistency.

\begin{asmp}\label{asmp:compact}
    (i) The space $\Theta$ of parameter $\theta$ is compact; (ii) for each $x,x^*$, the selection function $f(\boldsymbol{p};x,x^*, \theta)$ is jointly continuous in $\theta$ and $\boldsymbol{p}$; (iii) the condition in Theorem \ref{thm: contraction} holds for all $\theta\in \Theta$, that is, $\sup_{\theta\in\Theta} \rho(\theta)\leq \bar \rho<1$ for some $\bar \rho$.
\end{asmp}

\begin{asmp}\label{asmp: identification}
    There does not exist $\theta'\in \Theta$, $\theta'\neq \theta_0$  such that for all $j\in \mathcal J$ and all $x,x^*$,
$$Prob_j( x,x^* ; \theta_0,\tilde G)=Prob_j( x,x^* ; \theta',\tilde G).$$
\end{asmp}
Assumption \ref{asmp:compact} (i) and (ii) are standard regularity conditions. Assumption \ref{asmp:compact} (iii) ensures that for all $\theta \in \Theta$, the operator $T$ is a contraction.

Assumption \ref{asmp: identification} imposes an injectivity condition that merits additional discussion. 
The conditional choice probabilities $Prob_j( x,x^* ; \theta,\tilde G)$ in this assumption are defined in Equation \eqref{eq: prob j} as the integral of the selection function  $f_j(\boldsymbol{p}; x, x^\ast, \theta)$ with respect to the recovered offered price distribution $F^{-1}(\tilde G(x,x^*);\theta,x,x^*)$.
The unknown objects in our model are the parameter vector $\theta$ in the selection function $f$ and the offered price distribution $G$. 
A key insight from our contraction mapping result (Theorem \ref{thm: contraction}) is that, for any given selection function $f$, the operator $T$ admits a \emph{unique} fixed point, and this fixed point is the offered price distribution. In other words, given $\theta$, the offered price distribution $G$ is fully nonparametrically identified from the accepted price distribution $\tilde G$. 
Therefore, the remaining identification problem is to recover the finite-dimensional parameter vector $\theta$ governing the selection function. 
The substantive content of Assumption \ref{asmp: identification} is thus that no distinct parameter value $\theta'\neq\theta_0$ can generate the same conditional choice probabilities.

This type of injectivity requirement is familiar
in structural demand models, including the seminal random-coefficients demand framework of \cite{berry1995automobile}.
In their setting, for any given value of the parameters $\theta$ characterizing consumer preference heterogeneity, the mean utility levels (and hence the unobserved demand shocks, $\xi$) can be recovered from observed market shares through the fixed-point inversion.
The parameters $\theta$ are then identified and estimated through moment conditions of the form $E[\xi(\theta) z]=0$, where $z$ denotes a set of excluded instruments. 
Since $\xi(\theta)$ is a nonlinear function of $\theta$ and generally does not admit a closed-form expression, 
identification therefore relies on the uniqueness of the mapping from the implied moments to the underlying parameters.

Our setting is analogous. For any candidate value of the selection function parameters $\theta$, the offered price distribution can be recovered from the identified accepted price distribution through a fixed point procedure. 
This recovered distribution implies a set of model-predicted choice probabilities. 
The parameter $\theta$ is then estimated by matching these model-implied moments to their empirical counterparts.
The mapping from $\theta$ to the implied choice probabilities is highly nonlinear, particularly due to the presence of the inverse mapping $F^{-1}$, which lacks a closed form.
As a result, deriving primitive global identification conditions is challenging.
We therefore follow the standard practice and impose Assumption \ref{asmp: identification} as a high-level injectivity condition ensuring that the identified conditional choice probabilities contain sufficient information to distinguish alternative parameter values.

If additional instrumental variables are available, such as exogenous cost shocks that shift the offered price distribution, they provide extra moment conditions for identifying the price sensitivity parameter as in the classical demand estimation literature.
For expositional simplicity, we present the high-level injectivity condition above. However, the framework readily accommodates such additional instrumental variables when available. These extra moments can be included in the outer loop of the Step 2 estimation procedure described in Section \ref{sec:estimation_strategy}.
We summarize the consistency result in the following theorem.

\begin{thm}[Consistency]\label{thm: consistency}
Under Assumptions \ref{asmp:compact} and \ref{asmp: identification},  $\hat\theta\overset{p}{\to }\theta_0$, $\hat T^{\infty}_{\hat{\theta},\hat{h}} \Psi \overset{p}{\to } G$.
\end{thm}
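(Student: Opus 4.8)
The plan is to treat $\hat\theta$ as a semiparametric extremum estimator and invoke the standard consistency argument: compactness of $\Theta$, a limiting criterion $Q_0$ that is continuous and uniquely maximized at $\theta_0$, and uniform convergence $\sup_{\theta\in\Theta}|\hat Q_n(\theta)-Q_0(\theta)|\overset{p}{\to}0$. The right limiting object is the \emph{infeasible} choice probability built from the true selected distribution $\tilde G$ and the \emph{exact} fixed point,
$$Prob_j^\ast(x;\theta)=\int_{\boldsymbol{p}} f_j(\boldsymbol{p};x,\theta)\,dF^{-1}(\tilde G(x);\theta)(\boldsymbol{p}),$$
where $F^{-1}(\cdot;\theta)$ is the inverse of the $\theta$-indexed map $F(\cdot;\theta)$, well-defined for every $\theta\in\Theta$ by Assumption~\ref{asmp:compact}(iii) and Proposition~\ref{prop:homeomorphism}. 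Setting $Q_0(\theta)=E\big[\sum_j y_{ij}\ln Prob_j^\ast(x_i;\theta)\big]$ and using $E[y_{ij}\mid x_i]=Prob_j^\ast(x_i;\theta_0)$ at the truth gives $Q_0(\theta)=E\big[\sum_j Prob_j^\ast(x_i;\theta_0)\ln Prob_j^\ast(x_i;\theta)\big]$.

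The main obstacle, and the step that departs from a textbook proof, is showing that the feasible $Prob_j(x;\theta,\hat G,m(n))$ converges to $Prob_j^\ast(x;\theta)$ uniformly in $(\theta,x)$, because the operator iteration injects two errors at once. First I would bound the finite-iteration error: since $F^{-1}(\hat G(x);\theta)$ is the fixed point of $T_{\hat G(x),\theta}$ and the modulus is uniformly below $\bar\rho<1$ by Assumption~\ref{asmp:compact}(iii),
$$D\big(T^{m(n)}_{\hat G(x),\theta}\Psi,\,F^{-1}(\hat G(x);\theta)\big)\le \bar\rho^{\,m(n)-1}\,D\big(T_{\hat G(x),\theta}\Psi,\,F^{-1}(\hat G(x);\theta)\big),$$
which is finite after one iteration (the image is equivalent to $\hat G$) and vanishes as $m(n)\to\infty$. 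Second, the sampling error is controlled by the Lipschitz constant of $F^{-1}$ from Proposition~\ref{prop:homeomorphism}, namely $D\big(F^{-1}(\hat G(x);\theta),F^{-1}(\tilde G(x);\theta)\big)\le (1-\bar\rho)^{-1}D(\hat G(x),\tilde G(x))$, which $\overset{p}{\to}0$ by Assumption~\ref{asmp:compact}(v). A triangle inequality then yields $D\big(T^{m(n)}_{\hat G(x),\theta}\Psi,F^{-1}(\tilde G(x);\theta)\big)\overset{p}{\to}0$ uniformly (immediate when $x$ is discrete, and controlled by continuity and compactness otherwise). Finally, since convergence in $d$ drives the Radon--Nikodym derivatives to one in essential supremum, integrals of the bounded $f_j$ converge, so $\sup_{x,\theta}\big|Prob_j(x;\theta,\hat G,m(n))-Prob_j^\ast(x;\theta)\big|\overset{p}{\to}0$.

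Given this, uniform convergence of the criterion follows by splitting $\hat Q_n-Q_0$ into the piece replacing $Prob_j$ by $Prob_j^\ast$ and the piece $\bar Q_n-Q_0$, where $\bar Q_n(\theta)=\tfrac1n\sum_i\sum_j y_{ij}\ln Prob_j^\ast(x_i;\theta)$. Because $f_j$ is continuous and strictly positive on a compact domain, $f_j\ge\delta>0$, hence $Prob_j,Prob_j^\ast\ge\delta$ and $\ln$ is Lipschitz there, so the previous paragraph controls the first piece uniformly; the second piece vanishes uniformly by a uniform law of large numbers, using compactness of $\Theta$, joint continuity of $f_j$ in $(\boldsymbol{p},\theta)$ (Assumption~\ref{asmp:compact}(ii)), continuity of $\theta\mapsto F^{-1}(\tilde G(x);\theta)$ (a consequence of the uniform contraction together with continuity of the operator in $\theta$), and the envelope $|\ln Prob_j^\ast|\le|\ln\delta|$. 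The same ingredients make $Q_0$ continuous, while the information (Gibbs) inequality gives $Q_0(\theta)\le Q_0(\theta_0)$ with equality iff $Prob_j^\ast(\cdot;\theta)=Prob_j^\ast(\cdot;\theta_0)$ a.s.; Assumption~\ref{asmp: identification} then upgrades this to $\theta=\theta_0$, so $\theta_0$ is the unique maximizer. The extremum-estimator consistency theorem delivers $\hat\theta\overset{p}{\to}\theta_0$.

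For the second conclusion I would write $G=F^{-1}(\tilde G;\theta_0)$ and bound, by the triangle inequality,
$$D\big(T^{m(n)}_{\hat G,\hat\theta}\Psi,G\big)\le D\big(T^{m(n)}_{\hat G,\hat\theta}\Psi,F^{-1}(\hat G;\hat\theta)\big)+D\big(F^{-1}(\hat G;\hat\theta),F^{-1}(\tilde G;\hat\theta)\big)+D\big(F^{-1}(\tilde G;\hat\theta),F^{-1}(\tilde G;\theta_0)\big).$$
The first term $\overset{p}{\to}0$ by the finite-iteration bound with $m(n)\to\infty$; the second by Lipschitz continuity of $F^{-1}$ and $\hat G\overset{p}{\to}\tilde G$; the third by continuity of the fixed point $\theta\mapsto F^{-1}(\tilde G;\theta)$, combined with $\hat\theta\overset{p}{\to}\theta_0$ and the continuous mapping theorem. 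The fixed-point continuity in the last step is itself established by a one-line contraction estimate, $(1-\bar\rho)\,D\big(F^{-1}(\tilde G;\theta),F^{-1}(\tilde G;\theta_0)\big)\le D\big(T_{\tilde G,\theta}F^{-1}(\tilde G;\theta_0),T_{\tilde G,\theta_0}F^{-1}(\tilde G;\theta_0)\big)$, whose right side vanishes as $\theta\to\theta_0$ by continuity of $f$ in $\theta$. This gives $T^{m(n)}_{\hat G,\hat\theta}\Psi\overset{p}{\to}G$ and completes the proof.
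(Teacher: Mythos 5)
Your proof is correct and follows essentially the same route as the paper's: consistency via the Newey--McFadden extremum-estimator theorem, with uniform convergence of the criterion obtained by separately controlling the finite-iteration error through the uniform contraction modulus $\bar\rho$ and the sampling error through the Lipschitz continuity of $F^{-1}$ (Proposition \ref{prop:homeomorphism}), and the plug-in convergence $T^{m(n)}_{\hat G,\hat\theta}\Psi\overset{p}{\to}G$ via the same three-term triangle inequality. The only organizational difference is that you compare the feasible criterion directly to the infeasible sample average built from the true $\tilde G$ (so the residual term is a clean ULLN and the identification step is made explicit via the Gibbs inequality), whereas the paper inserts the intermediate objective $\hat Q_n^*$ that uses $\hat G$ with infinitely many iterations and must then establish equicontinuity in $\theta$ uniformly over $\hat G$; both are valid.
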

\begin{proof}
    See Appendix \ref{sec:proof_3}.
\end{proof}

Next, we show that the estimator defined in Equation \eqref{eq:estimator} is asymptotically normal. Let 
 $$\mathfrak g(\omega ;\theta, h)=\nabla_\theta \bigg(\sum_{x^*}  h_{x^*|x,y}(x^*|x, y )\ln Prob_{y }( x ,x^* , \theta, h_{p|x^*, x, y})\bigg ),$$
where $\nabla_\theta$ denotes the gradient operator with respect to $\theta$. The estimator $\hat\theta$ solves the first-order condition 
$$\frac{1}{n}\sum_{i=1}^n \mathfrak g(\omega_i;\theta, \hat h)=0.$$
Moreover, we define 
$$\mathfrak m(\omega_i, h)=\nabla_{h}\ln\bigg( \sum_{x^*}  h_{p|x^*, x, y}(p_i|x^*, x_i, y_i)  h_{z_1|x^*, x, y}(z_{1i}|x^*, x_i, y_i) h_{x^*|z_2, x, y}(x^*|z_{2i}, x_i, y_i)  \bigg). $$

We stack $\mathfrak g$ and $\mathfrak m$ to form 
$$\tilde{\mathfrak g}(\omega, \theta,   h)=[\mathfrak g(\omega, \theta,    h)', \mathfrak m(\omega,   {h})']',$$
then the estimators in the first two steps can be viewed as a GMM estimator. We impose the following standard regularity conditions.

\begin{asmp}\label{asmp: normal}
    (i) $(\theta_0,h_0)$ is in the interior of $\Theta\times H$. (ii) $f$ is twice continuously differentiable in $\theta$. (iii)  $\mathbb E \nabla_{\theta,h} \tilde {\mathfrak g}(\omega;\theta_0,h_0)$ is nonsingular. (iv) After representing \(G\) and \(\tilde G\) by free coordinates,
\(\nabla_G F(G,\theta_0)\) is nonsingular.
\end{asmp}

\begin{thm}[Asymptotic Normality]\label{thm: normal}
    Suppose that Assumption \ref{asmp:compact}, \ref{asmp: identification}, and \ref{asmp: normal} hold. Then $\hat\theta$, $\hat h$, $\hat T^{\infty}_{\hat{\theta},\hat{h}} \Psi$ are $\sqrt{n}$-asymptotically normal and $\sqrt{n} (\hat\theta-\theta_0)\overset{d}{\to} \mathcal N(0,V)$.\footnote{See the analytical form of $V$ in the proof of Theorem \ref{thm: normal}.}  
\end{thm}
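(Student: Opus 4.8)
The plan is to establish asymptotic normality for the intermediate estimator $\hat\theta^*=\arg\max_\theta \hat Q^*_n(\theta)$ first and then transfer the conclusion to $\hat\theta$. The reduction is already recorded just above the statement: Assumption \ref{asmp: normal}(v) forces the finite-iteration error to decay faster than $n^{-1/2}$, so that $\sqrt n(\hat\theta-\hat\theta^*)\overset{p}{\to}0$; by Slutsky it therefore suffices to show $\sqrt n(\hat\theta^*-\theta_0)\overset{d}{\to}\mathcal N(0,V)$. Since $\hat Q^*_n$ is smooth in $\theta$ (Assumption \ref{asmp: normal}(iii)), and by Theorem \ref{thm: consistency} $\hat\theta^*\overset{p}{\to}\theta_0$ with $\theta_0$ interior (Assumption \ref{asmp: normal}(ii)), $\hat\theta^*$ eventually solves the first-order condition $\frac1n\sum_i\mathfrak g^*(z_i;\hat\theta^*,\hat G)=0$.

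First I would perform the usual mean-value expansion of this first-order condition in $\theta$ around $\theta_0$: $0=\frac1n\sum_i\mathfrak g^*(z_i;\theta_0,\hat G)+\big[\frac1n\sum_i\nabla_\theta\mathfrak g^*(z_i;\bar\theta,\hat G)\big](\hat\theta^*-\theta_0)$ for some $\bar\theta$ on the segment between $\hat\theta^*$ and $\theta_0$. A uniform law of large numbers, together with the consistency of $\hat\theta^*$ and $\hat G$, shows the bracketed Hessian converges in probability to $H:=\mathbb E\,\nabla_\theta\mathfrak g^*(z;\theta_0,\tilde G)$, which is nonsingular by Assumption \ref{asmp: normal}(iv). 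Hence $\sqrt n(\hat\theta^*-\theta_0)=-H^{-1}\,\sqrt n\,\frac1n\sum_i\mathfrak g^*(z_i;\theta_0,\hat G)+o_p(1)$, and the whole problem reduces to finding an asymptotically linear representation of $\frac1n\sum_i\mathfrak g^*(z_i;\theta_0,\hat G)$.

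The crux, and the main obstacle, is the dependence of the score on the first-step estimator $\hat G$, which enters only through the implicitly defined inverse $F^{-1}$ appearing in \eqref{eq: prob^*} and for which we have no closed form. Here I would exploit Assumption \ref{asmp: normal}(i): when $\mathrm{supp}(\tilde G)$ is finite, $\hat G$ and $\tilde G$ are finite-dimensional probability vectors, on which (for densities bounded away from $0$) the metric $D$ is equivalent to the Euclidean norm, and $F$ is a smooth map between finite-dimensional simplices --- it is assembled from integrals of the $\mathscr C^1$ selection function against $G$, hence continuously differentiable in $G$. The two-sided Lipschitz bounds of Proposition \ref{prop:homeomorphism} make $F$ bi-Lipschitz, giving the lower bound $\|F(G_1)-F(G_2)\|\ge(1-\rho)\|G_1-G_2\|$, so the Jacobian $D_GF$ is everywhere injective and therefore invertible; the inverse function theorem then yields $F^{-1}\in\mathscr C^1$ with derivative $(D_GF)^{-1}$ at $\tilde G$. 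This makes $G\mapsto Prob^*_j(x;\theta_0,G)$ differentiable, and a delta-method / stochastic-equicontinuity argument gives $\frac1n\sum_i\mathfrak g^*(z_i;\theta_0,\hat G)=\frac1n\sum_i\mathfrak g^*(z_i;\theta_0,\tilde G)+\Lambda\,(\hat G-\tilde G)+o_p(n^{-1/2})$, where $\Lambda$ is the Gateaux derivative of the population score in the direction of $G$, computed by differentiating through $(D_GF)^{-1}$.

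Finally I would supply the asymptotically linear representation of the first step: since $\mathrm{supp}(\tilde G)$ is finite, $\hat G$ is essentially a cell-frequency estimator admitting $\hat G-\tilde G=\frac1n\sum_i\psi(z_i)+o_p(n^{-1/2})$ for an explicit mean-zero influence function $\psi$. Substituting yields $\sqrt n(\hat\theta^*-\theta_0)=-H^{-1}\frac1{\sqrt n}\sum_i\big[\mathfrak g^*(z_i;\theta_0,\tilde G)+\Lambda\,\psi(z_i)\big]+o_p(1)$, an average of i.i.d. mean-zero terms, so the Lindeberg--L\'evy central limit theorem delivers $\sqrt n(\hat\theta^*-\theta_0)\overset{d}{\to}\mathcal N(0,V)$ with $V=H^{-1}\,\mathbb E\big[(\mathfrak g^*+\Lambda\psi)(\mathfrak g^*+\Lambda\psi)'\big]\,H^{-1}$, which I would record as the analytic form of $V$; transferring to $\hat\theta$ via the reduction above completes the first claim. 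For the second claim, consistency of $\hat\theta$ (Theorem \ref{thm: consistency}) with the Lipschitz continuity of $F^{-1}$ with constant $\frac1{1-\rho}$ (Proposition \ref{prop:homeomorphism}) shows $F^{-1}(\hat G,\hat\theta)$ inherits the $\sqrt n$ rate of $\hat G$ and $\hat\theta$, while Assumption \ref{asmp: normal}(v) and Corollary \ref{col: convergence rate} render the finite-iteration gap $D\big(T^{m(n)}_{\hat G,\hat\theta}\Psi,\,F^{-1}(\hat G,\hat\theta)\big)=O_p(\bar\rho^{\,m(n)})=o_p(n^{-1/2})$ negligible; the triangle inequality then gives $\sqrt n$-convergence of $T^{m(n)}_{\hat G,\hat\theta}\Psi$ to $G$.
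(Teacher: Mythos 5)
Your proposal is correct and follows essentially the same route as the paper: reduce to $\hat\theta^*$ via Assumption \ref{asmp: normal}(v), use the finite support of $\tilde G$ to make $\hat G$ finite-dimensional, obtain differentiability of $F^{-1}$ from the implicit/inverse function theorem with the Jacobian $\nabla_G F$ nonsingular precisely because $F^{-1}$ is Lipschitz, and correct for the first step through the influence function of $\hat G$. The only difference is packaging: the paper stacks the score $\mathfrak g^*$ with explicit moment conditions for $\hat{Prob}$ and $\hat G$ and invokes Theorem 6.1 of \cite{newey1994large}, rather than carrying out the mean-value expansion and delta-method linearization by hand as you do.
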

\begin{proof}
    See Appendix \ref{sec:proof_4}.
\end{proof}

Theorems \ref{thm: consistency} and \ref{thm: normal} present the main consistency and asymptotic normality results for our estimators. 
%So far, the asymptotic results have been 
They are stated under the assumption that the operator is iterated infinitely many times. In practice, however, the iteration used to obtain the offered price distribution is stopped after a finite number of steps. The resulting approximation error is asymptotically negligible as long as the number of iterations grows fast enough relative to the logarithm of the sample size.
Formally, let $m(n)$ denote the number of iterations given the sample size $n$. Consistency of our estimator (Theorem \ref{thm: consistency}) can be achieved as long as $\lim_{n\to+\infty} m(n)\to\infty$. Asymptotic normality (Theorem \ref{thm: normal}) continues to hold if in addition, $\liminf_{n\to+\infty}\frac{m(n)}{\ln n}>\frac{1}{2}(\ln(1/\bar\rho))^{-1}$.

Finally, we discuss the finite support assumption imposed on the outcome $p_i$.
This assumption is not essential for the consistency result in Theorem \ref{thm: consistency}. As long as the estimator of $h$ is consistent, our proposed estimator remains consistent even when $p_i$ is continuous. 
Assuming that $p_i$ has finite support mainly keeps the proof of asymptotic normality in Theorem \ref{thm: normal} tractable.
If $p_i$ is instead continuous, establishing asymptotic normality for a semiparametric two-step estimator typically requires a first-order expansion around the nonparametric estimator (see Theorem 8.1 in \cite{newey1994large}). In our setting, this would require expanding the function $\mathfrak g$ around $\hat h_{p|x^*,x,y}$.
A standard argument would apply if $\hat h_{p|x^*, x, y}$ entered Equation \eqref{eq: prob j} directly. However, in our case it enters only through $F^{-1}$, for which no analytic expression is available. As a result, working with the infinite-dimensional distribution $\tilde G$ is extremely challenging and remains an interesting avenue for future research.

In practice, when the selected outcome distribution is estimated nonparametrically, even if $p_i$ is conceptually continuous, the estimator necessarily evaluates its CDF on a finite grid of points. The leading empirical applications of the framework also involve monetary outcomes, such as transaction prices, accepted wages, and bids, which are typically recorded in some reporting units, for example in cents or dollars. For this reason, the finite-support assumption is best viewed as a finite-dimensional implementation of the nonparametric first step, rather than as a substantive restriction on the underlying economic model.

\section{Monte Carlo Simulations}
\label{sec:monte_carlo}

To examine how our estimators for $\theta$ and the offered price distribution $G$ perform in finite samples, we conduct a Monte Carlo simulation experiment with $J=2$. In this case, the contraction condition $\rho<1$ is automatically satisfied.
The utilities of individual $i$ from the two alternatives are specified as follows:
\begin{align*}
    & u_{i1}=-\gamma \ln(p_{i1}) + \xi_1 + \beta x_{i1} + \kappa x_i^\ast+\varepsilon_i,\\
    & u_{i2}=-\gamma \ln(p_{i2}) + \xi_2,
\end{align*}
where $p_{ij}$ and $\xi_j$ are, respectively, the offered price and unobserved heterogeneity for alternative $j$; $x_{i1} \in \{0, 1\}$ is a binary observable with $Pr(x_{i1}=1)=0.5$ that shifts individual $i$'s choice probabilities; 
$x_i^\ast \in \{-1, 1\}$ is a binary unobservable with $Pr(x_{i}^\ast=1)=0.5$; 
and $\varepsilon_i \sim N(0,1)$ is the error term. 
Throughout the main simulation exercises, we set the utility parameters as follows: $\gamma=1$, $\xi_1=0$, $\xi_2=0.5$, $\beta=0.5$ and $\kappa=0.1$.\footnote{Note that $x_{i1}$ is included in the utility specification to facilitate the implementation of the two-step method for sample selection. Our method does not require this type of conventional excluded variable to exogenously shift the selection probability. We therefore also consider a simulation exercise in which $x_{i1}$ is omitted, that is, $\beta=0$. The results are reported in Tables \ref{table:est_theta_noz} and \ref{table:est_price_noz} in Appendix \ref{sec:tables_app}.
As shown, our estimator performs well in finite samples without an excluded variable in the selection model.} Let $y_i \in \{1, 2\}$ denote the choice of individual $i$.

We consider four data generating processes for the offered prices. Let $x_{i2}$ denote the observable characteristic of individual $i$ that enters the pricing equation. 
We assume that $x_{i2}$ takes values in $\{0, 0.25, 0.5, 0.75, 1\}$ with equal probability.

\begin{enumerate}[DGP 1:]
    \item 
    $\ln(p_{ij}) = \delta_{0j}+\delta_{1j}x_{i2}+\delta_{2j}x_i^\ast+\eta_{ij}$, where $\eta_{ij} \sim N(0, \sigma_j^2)$.
    For alternative 1, we set $\delta_{01}=0.2, \delta_{11}=0.5, \delta_{21}=0.1, \sigma_1=0.1$. For alternative 2, we set $\delta_{02}=0.1, \delta_{12}=1,\delta_{22}=0.1, \sigma_2=0.2$.
\item 
$\ln(p_{ij}) = \delta_{0j}+\delta_{1j}x_{i2}^2+\delta_{2j}x_i^\ast+\eta_{ij}$, where $\eta_{ij} \sim N(0, \sigma_j^2)$.
For alternative 1, we set $\delta_{01}=0.2, \delta_{11}=0.5, \delta_{21}=0.1, \sigma_1=0.1$. For alternative 2, we set $\delta_{02}=0.1, \delta_{12}=1,\delta_{22}=0.1, \sigma_2=0.2$. 
\item 
$\ln(p_{ij}) = \exp\left( (\delta_{0j}+\delta_{1j}x_{i2})(\delta_{2j}x_i^\ast+\eta_{ij})\right)$, where $\eta_{ij} \sim N(1, \sigma_j^2)$.
For alternative 1, we set $\delta_{01}=0.2, \delta_{11}=0.3, \delta_{21}=0.1, \sigma_1=0.1$. For alternative 2, we set $\delta_{02}=0.1, \delta_{12}=0.5,\delta_{22}=0.1, \sigma_2=0.2$. 
\item 
$\ln(p_{ij}) =  (\delta_{0j}+\delta_{1j}x_{i2}^2)(\delta_{2j}x_i^\ast+\eta_{ij})^{-1}$, where $\eta_{ij} \sim N(-2, \sigma_j^2)$.
For alternative 1, we set $\delta_{01}=0.2, \delta_{11}=0.1, \delta_{21}=0.1, \sigma_1=0.1$. For alternative 2, we set $\delta_{02}=0.1, \delta_{12}=0.3,\delta_{22}=0.1, \sigma_2=0.2$. 
\end{enumerate}

Across all data generating processes, the unobserved characteristic $x_i^\ast$ enters the pricing equations for both alternatives, which induces correlation in prices conditional on observables. In addition, $x_i^\ast$ also enters the utility specification, allowing the unobserved type to jointly affect the prices individuals face and their preferences over alternatives.
DGP 1 specifies an additively separable linear pricing equation, which is commonly assumed in empirical applications. DGP 2 introduces a nonlinear term.
DGPs 3 and 4 consider scenarios where the pricing function takes a nonseparable form.\footnote{Although the offered price distributions have unbounded support, in the implementation
we approximate the support by the realized price range. Given the large sample size,
this range contains almost all simulated probability mass. Later we show that the estimation of the offered price distribution performs well.}

For each DGP, we simulate offered prices and individual choices. To implement our estimator, we require an instrument $z_i$ to recover the selected price distribution conditional on $(x_{i1},x_{i2},x_i^\ast)$ in the first step, since $x_i^\ast$ is unobserved. We construct such an instrument by assuming $z_i \sim Poisson(x_i^\ast)$ when $x_{i}^\ast=1$, and $z_i=0$ otherwise. This choice is motivated by settings where $x_i^\ast$ can be interpreted as an individual’s unobserved risk type, and such risk types may be reflected in the ex post realization of accidents, which are often modeled using a Poisson distribution. Because we impose a parametric relationship between the instrument and the unobservable, only one instrument is needed.

We assume that the econometrician observes $(y_i, x_{i1}, x_{i2}, p_i, z_i)$, where $p_i$ denotes the price of the chosen alternative. 
Using these data, we apply the procedure described in Section \ref{sec:estimation_strategy} to estimate the parameters of the selection function, $\theta=(\gamma, \xi_2, \beta, \kappa)$ with $\xi_1$ normalized to 0, along with the offered price distribution for each alternative.\footnote{We estimate the cumulative distribution function of prices at 300 grid points.} 
For comparison, we first implement the classic Heckman parametric two-step method, assuming that the pricing equations are linearly separable and that the error terms in the selection and pricing equations follow a bivariate normal distribution.
We also compare our estimator with the quantile selection model of \cite{arellano2017quantile}. To implement their approach, we follow the standard practice of assuming that the quantile functions are linear in $x_{i2}$ and that the dependence structure is governed by a Gaussian copula.\footnote{Although \cite{arellano2017quantile} discuss identification under more general settings, their empirical implementation focuses on cases in which the copula depends on a low-dimensional vector of parameters, which is the specification we adopt here.}
For each design, we run 500 simulations with sample sizes of 2,000 and 5,000 observations.

Table \ref{tab:est_theta} reports the Monte Carlo biases, standard deviations, and root mean squared errors for the estimates of $\theta$ obtained using our method.
Overall, the estimator performs well in finite samples across all DGPs, including those with nonseparable pricing equations. The biases are small, and the root mean squared errors remain modest for all parameters in the selection function. The standard deviation decreases as the sample size increases in all simulation designs.

\begin{table}[htbp!]
    \centering
    \caption{Simulation Results for Utility Parameters}
    \label{tab:est_theta}    \scalebox{0.9}{\begin{tabular}{lccc|ccc}
\hline \hline 
 & \multicolumn{6}{c}{DGP 1} \\ \cline{2-7} 
  & \multicolumn{3}{c|}{$N=2000$}  & \multicolumn{3}{c}{$N=5000$}  \\ \cline{2-7} 
  & Bias & Std. Dev. & RMSE & Bias & Std. Dev. & RMSE \\ \cline{2-7} 
 $\gamma$ & -0.1017 & 0.1762 & 0.2033 & -0.0697 & 0.1087 & 0.1290 \\ 
$\beta$ & -0.0058 & 0.0619 & 0.0621 & -0.0013 & 0.0372 & 0.0371 \\ 
$\kappa$ & -0.0224 & 0.0491 & 0.0540 & -0.0095 & 0.0350 & 0.0362 \\ 
$\xi_2$ & -0.0267 & 0.0570 & 0.0629 & -0.0156 & 0.0349 & 0.0382 \\ \hline 
 & \multicolumn{6}{c}{DGP 2} \\ \cline{2-7} 
  & \multicolumn{3}{c|}{$N=2000$}  & \multicolumn{3}{c}{$N=5000$}  \\ \cline{2-7} 
  & Bias & Std. Dev. & RMSE & Bias & Std. Dev. & RMSE \\ \cline{2-7} 
 $\gamma$ & -0.1003 & 0.1672 & 0.1949 & -0.0705 & 0.1067 & 0.1278 \\ 
$\beta$ & -0.0049 & 0.0626 & 0.0627 & -0.0013 & 0.0379 & 0.0379 \\ 
$\kappa$ & -0.0185 & 0.0487 & 0.0521 & -0.0072 & 0.0344 & 0.0351 \\ 
$\xi_2$ & -0.0197 & 0.0508 & 0.0544 & -0.0114 & 0.0319 & 0.0338 \\ \hline 
 & \multicolumn{6}{c}{DGP 3} \\ \cline{2-7} 
  & \multicolumn{3}{c|}{$N=2000$}  & \multicolumn{3}{c}{$N=5000$}  \\ \cline{2-7} 
  & Bias & Std. Dev. & RMSE & Bias & Std. Dev. & RMSE \\ \cline{2-7} 
 $\gamma$ & -0.1092 & 0.2557 & 0.2778 & -0.0685 & 0.1535 & 0.1680 \\ 
$\beta$ & -0.0011 & 0.0639 & 0.0639 & 0.0015 & 0.0367 & 0.0367 \\ 
$\kappa$ & -0.0194 & 0.0485 & 0.0522 & -0.0076 & 0.0330 & 0.0339 \\ 
$\xi_2$ & -0.0074 & 0.0458 & 0.0463 & -0.0032 & 0.0278 & 0.0279 \\ \hline 
 & \multicolumn{6}{c}{DGP 4} \\ \cline{2-7} 
  & \multicolumn{3}{c|}{$N=2000$}  & \multicolumn{3}{c}{$N=5000$}  \\ \cline{2-7} 
  & Bias & Std. Dev. & RMSE & Bias & Std. Dev. & RMSE \\ \cline{2-7} 
 $\gamma$ & 0.0934 & 0.8041 & 0.8087 & 0.0666 & 0.4806 & 0.4847 \\ 
$\beta$ & 0.0005 & 0.0613 & 0.0613 & 0.0038 & 0.0361 & 0.0363 \\ 
$\kappa$ & -0.0194 & 0.0474 & 0.0512 & -0.0103 & 0.0335 & 0.0350 \\ 
$\xi_2$ & -0.0003 & 0.0451 & 0.0451 & 0.0016 & 0.0280 & 0.0280 \\ \hline 
 \hline 
\end{tabular}}

\end{table}

For the cumulative distribution functions of $log(price)$, Tables \ref{tab:est_price1} and \ref{tab:est_price2} report the integrated squared biases and integrated mean squared errors for our proposed estimator, the Heckman two-step estimator, and the copula-based sample-selection correction estimator for quantile regression, separately for the two alternatives. Each row of the tables corresponds to the price distribution conditional on a specific value of $x_{i2}$. 
We also plot the true CDFs for alternatives 1 and 2 alongside the estimates produced by these models conditional on $x_{i2}=0.25$ and $x_{i2}=0.75$ in Figures \ref{fig:compare_cdf_x2} and \ref{fig:compare_cdf_x4}, respectively.
To save space, we report the CDF results only for the sample size of 2,000 observations, and we omit the figures for other values of the observable covariates.

Our method allows for nonparametric estimation of the offered price distributions, whereas the alternative approaches impose parametric restrictions on either the conditional mean or quantiles of the pricing distributions, or on the dependence structure through the copula.
Tables \ref{tab:est_price1} and \ref{tab:est_price2} show that our estimator achieves very low integrated squared bias and integrated mean squared error for the CDFs of $log(price)$ across all simulation designs and for all values of $x_{i2}$.  
In contrast, while the classic Heckman two-step method and the quantile selection model perform well in DGP 1, their biases and mean squared errors increase substantially as the pricing equation becomes more complex in DGPs 2–4. 
These results are expected, since the parametric assumptions underlying these methods, such as linear conditional mean or quantile functions and a Gaussian copula, are severely violated in these designs.

Figures \ref{fig:compare_cdf_x2} and \ref{fig:compare_cdf_x4} provide a visual illustration of these results.
We can see that across all simulation designs, the estimated CDFs of $log(price)$ for both alternatives produced by our functional contraction approach closely track the true CDFs, as indicated by the black curves with ``+'' markers and the red solid curve in Figures \ref{fig:compare_cdf_x2} and \ref{fig:compare_cdf_x4}. By comparison, the biases of the Heckman two-step method (blue dashed curves) and the quantile selection model (purple dash–dotted curves) can be substantial, particularly in DGPs 3 and 4. The direction and magnitude of these biases also vary with the values of the observable covariates.

\begin{table}[htpb!]
    \centering
     \caption{Simulation Results for CDF of $log(p_1)$}
     \label{tab:est_price1}\scalebox{0.9}{\begin{tabular}{lcc|cc|cc}
\hline \hline 
 & \multicolumn{6}{c}{DGP 1} \\ \cline{2-7} 
  & \multicolumn{2}{c|}{Functional Contraction}  & \multicolumn{2}{c|}{Heckman Two-Step}  & \multicolumn{2}{c}{Quantile Selection}  \\ \cline{2-7} 
  & IBias$^2$ & IMSE & IBias$^2$ & IMSE & IBias$^2$ & IMSE \\ \cline{2-7} 
 $x_{i2}=0$ & 0.0005 & 0.0017 & 0.0001 & 0.0039 & 0.0001 & 0.0046 \\ 
$x_{i2}=0.25$ & 0.0004 & 0.0015 & 0.0001 & 0.0033 & 0.0001 & 0.0038 \\ 
$x_{i2}=0.5$ & 0.0002 & 0.0012 & 0.0001 & 0.0027 & 0.0001 & 0.0031 \\ 
$x_{i2}=0.75$ & 0.0002 & 0.0010 & 0.0001 & 0.0023 & 0.0001 & 0.0026 \\ 
$x_{i2}=1$ & 0.0001 & 0.0010 & 0.0001 & 0.0021 & 0.0001 & 0.0023 \\ \hline 
 & \multicolumn{6}{c}{DGP 2} \\ \cline{2-7} 
  & \multicolumn{2}{c|}{Functional Contraction}  & \multicolumn{2}{c|}{Heckman Two-Step}  & \multicolumn{2}{c}{Quantile Selection}  \\ \cline{2-7} 
  & IBias$^2$ & IMSE & IBias$^2$ & IMSE & IBias$^2$ & IMSE \\ \cline{2-7} 
 $x_{i2}=0$ & 0.0005 & 0.0017 & 0.0235 & 0.0275 & 0.0185 & 0.0231 \\ 
$x_{i2}=0.25$ & 0.0005 & 0.0017 & 0.0017 & 0.0053 & 0.0031 & 0.0072 \\ 
$x_{i2}=0.5$ & 0.0003 & 0.0014 & 0.0124 & 0.0153 & 0.0145 & 0.0177 \\ 
$x_{i2}=0.75$ & 0.0002 & 0.0011 & 0.0037 & 0.0062 & 0.0052 & 0.0080 \\ 
$x_{i2}=1$ & 0.0001 & 0.0010 & 0.0133 & 0.0154 & 0.0106 & 0.0129 \\ \hline 
 & \multicolumn{6}{c}{DGP 3} \\ \cline{2-7} 
  & \multicolumn{2}{c|}{Functional Contraction}  & \multicolumn{2}{c|}{Heckman Two-Step}  & \multicolumn{2}{c}{Quantile Selection}  \\ \cline{2-7} 
  & IBias$^2$ & IMSE & IBias$^2$ & IMSE & IBias$^2$ & IMSE \\ \cline{2-7} 
 $x_{i2}=0$ & 0.0007 & 0.0019 & 0.0207 & 0.0259 & 0.0017 & 0.0071 \\ 
$x_{i2}=0.25$ & 0.0002 & 0.0013 & 0.0061 & 0.0102 & 0.0011 & 0.0052 \\ 
$x_{i2}=0.5$ & 0.0002 & 0.0013 & 0.0017 & 0.0049 & 0.0013 & 0.0049 \\ 
$x_{i2}=0.75$ & 0.0002 & 0.0012 & 0.0016 & 0.0041 & 0.0002 & 0.0037 \\ 
$x_{i2}=1$ & 0.0002 & 0.0013 & 0.0053 & 0.0074 & 0.0007 & 0.0040 \\ \hline 
 & \multicolumn{6}{c}{DGP 4} \\ \cline{2-7} 
  & \multicolumn{2}{c|}{Functional Contraction}  & \multicolumn{2}{c|}{Heckman Two-Step}  & \multicolumn{2}{c}{Quantile Selection}  \\ \cline{2-7} 
  & IBias$^2$ & IMSE & IBias$^2$ & IMSE & IBias$^2$ & IMSE \\ \cline{2-7} 
 $x_{i2}=0$ & 0.0014 & 0.0023 & 0.0432 & 0.0463 & 0.0390 & 0.0425 \\ 
$x_{i2}=0.25$ & 0.0014 & 0.0024 & 0.0147 & 0.0182 & 0.0153 & 0.0190 \\ 
$x_{i2}=0.5$ & 0.0011 & 0.0021 & 0.0412 & 0.0443 & 0.0358 & 0.0392 \\ 
$x_{i2}=0.75$ & 0.0012 & 0.0021 & 0.0106 & 0.0141 & 0.0059 & 0.0100 \\ 
$x_{i2}=1$ & 0.0005 & 0.0018 & 0.0270 & 0.0304 & 0.0346 & 0.0384 \\ \hline 
 \hline 
\end{tabular}}

\vspace{0.2cm} \\
\justifying \footnotesize 
\noindent Note: 
The $\text{IBias}^2$ of a function $h$ is calculated as follows. Let $\hat{h}_r$ be the estimate of $h$ from the $r$-th simulated dataset, and $\bar{h}(p) = \frac{1}{R}\sum_{r=1}^{R} \hat{h}_r(p)$ be the point-wise average over $R$ simulations. The integrated squared bias is calculated by numerically integrating the point-wise squared bias $(\bar{h}(p)-h(p))^2$ over the distribution of $p$. The integrated MSE is computed in a similar way. The values reported in each row correspond to the price distributions conditional on a given value of $x_{i2}$. The results shown in this table are based on 500 Monte Carlo replications with a sample size of 2,000. Corresponding results for a sample size of 5,000 are available upon request.   
\end{table}

\begin{table}[htpb!]
    \centering
     \caption{Simulation Results for CDF of $log(p_2)$}
     \label{tab:est_price2}\scalebox{0.9}{\begin{tabular}{lcc|cc|cc}
\hline \hline 
 & \multicolumn{6}{c}{DGP 1} \\ \cline{2-7} 
  & \multicolumn{2}{c|}{Functional Contraction}  & \multicolumn{2}{c|}{Heckman Two-Step}  & \multicolumn{2}{c}{Quantile Selection}  \\ \cline{2-7} 
  & IBias$^2$ & IMSE & IBias$^2$ & IMSE & IBias$^2$ & IMSE \\ \cline{2-7} 
 $x_{i2}=0$ & 0.0002 & 0.0008 & 0.0000 & 0.0016 & 0.0001 & 0.0018 \\ 
$x_{i2}=0.25$ & 0.0002 & 0.0009 & 0.0000 & 0.0019 & 0.0001 & 0.0020 \\ 
$x_{i2}=0.5$ & 0.0002 & 0.0010 & 0.0000 & 0.0023 & 0.0001 & 0.0024 \\ 
$x_{i2}=0.75$ & 0.0002 & 0.0011 & 0.0000 & 0.0028 & 0.0001 & 0.0030 \\ 
$x_{i2}=1$ & 0.0002 & 0.0012 & 0.0000 & 0.0034 & 0.0001 & 0.0038 \\ \hline 
 & \multicolumn{6}{c}{DGP 2} \\ \cline{2-7} 
  & \multicolumn{2}{c|}{Functional Contraction}  & \multicolumn{2}{c|}{Heckman Two-Step}  & \multicolumn{2}{c}{Quantile Selection}  \\ \cline{2-7} 
  & IBias$^2$ & IMSE & IBias$^2$ & IMSE & IBias$^2$ & IMSE \\ \cline{2-7} 
 $x_{i2}=0$ & 0.0002 & 0.0008 & 0.0231 & 0.0246 & 0.0211 & 0.0228 \\ 
$x_{i2}=0.25$ & 0.0002 & 0.0008 & 0.0058 & 0.0075 & 0.0067 & 0.0086 \\ 
$x_{i2}=0.5$ & 0.0002 & 0.0009 & 0.0200 & 0.0220 & 0.0213 & 0.0234 \\ 
$x_{i2}=0.75$ & 0.0002 & 0.0010 & 0.0030 & 0.0058 & 0.0040 & 0.0069 \\ 
$x_{i2}=1$ & 0.0003 & 0.0011 & 0.0368 & 0.0401 & 0.0329 & 0.0365 \\ \hline 
 & \multicolumn{6}{c}{DGP 3} \\ \cline{2-7} 
  & \multicolumn{2}{c|}{Functional Contraction}  & \multicolumn{2}{c|}{Heckman Two-Step}  & \multicolumn{2}{c}{Quantile Selection}  \\ \cline{2-7} 
  & IBias$^2$ & IMSE & IBias$^2$ & IMSE & IBias$^2$ & IMSE \\ \cline{2-7} 
 $x_{i2}=0$ & 0.0031 & 0.0036 & 0.0492 & 0.0515 & 0.0030 & 0.0047 \\ 
$x_{i2}=0.25$ & 0.0005 & 0.0011 & 0.0190 & 0.0212 & 0.0052 & 0.0072 \\ 
$x_{i2}=0.5$ & 0.0003 & 0.0010 & 0.0047 & 0.0068 & 0.0019 & 0.0041 \\ 
$x_{i2}=0.75$ & 0.0002 & 0.0009 & 0.0014 & 0.0035 & 0.0002 & 0.0026 \\ 
$x_{i2}=1$ & 0.0002 & 0.0011 & 0.0112 & 0.0131 & 0.0049 & 0.0072 \\ \hline 
 & \multicolumn{6}{c}{DGP 4} \\ \cline{2-7} 
  & \multicolumn{2}{c|}{Functional Contraction}  & \multicolumn{2}{c|}{Heckman Two-Step}  & \multicolumn{2}{c}{Quantile Selection}  \\ \cline{2-7} 
  & IBias$^2$ & IMSE & IBias$^2$ & IMSE & IBias$^2$ & IMSE \\ \cline{2-7} 
 $x_{i2}=0$ & 0.0011 & 0.0016 & 0.1443 & 0.1459 & 0.1123 & 0.1141 \\ 
$x_{i2}=0.25$ & 0.0009 & 0.0015 & 0.0568 & 0.0592 & 0.0455 & 0.0483 \\ 
$x_{i2}=0.5$ & 0.0005 & 0.0011 & 0.1095 & 0.1109 & 0.0952 & 0.0966 \\ 
$x_{i2}=0.75$ & 0.0003 & 0.0009 & 0.0290 & 0.0308 & 0.0133 & 0.0155 \\ 
$x_{i2}=1$ & 0.0002 & 0.0007 & 0.0530 & 0.0546 & 0.0722 & 0.0741 \\ \hline 
 \hline 
\end{tabular}}

\vspace{0.2cm} \\
\justifying \footnotesize 
\noindent Note: 
The $\text{IBias}^2$ of a function $h$ is calculated as follows. Let $\hat{h}_r$ be the estimate of $h$ from the $r$-th simulated dataset, and $\bar{h}(p) = \frac{1}{R}\sum_{r=1}^{R} \hat{h}_r(p)$ be the point-wise average over $R$ simulations. The integrated squared bias is calculated by numerically integrating the point-wise squared bias $(\bar{h}(p)-h(p))^2$ over the distribution of $p$. The integrated MSE is computed in a similar way. The values reported in each row correspond to the price distributions conditional on a given value of $x_{i2}$. The results shown in this table are based on a 500 Monte Carlo replications with a sample size of 2,000. Corresponding results for a sample size of 5,000 are available upon request.
\end{table}

\begin{figure}[htbp!]
	\centering
 \includegraphics[width=0.9\linewidth]{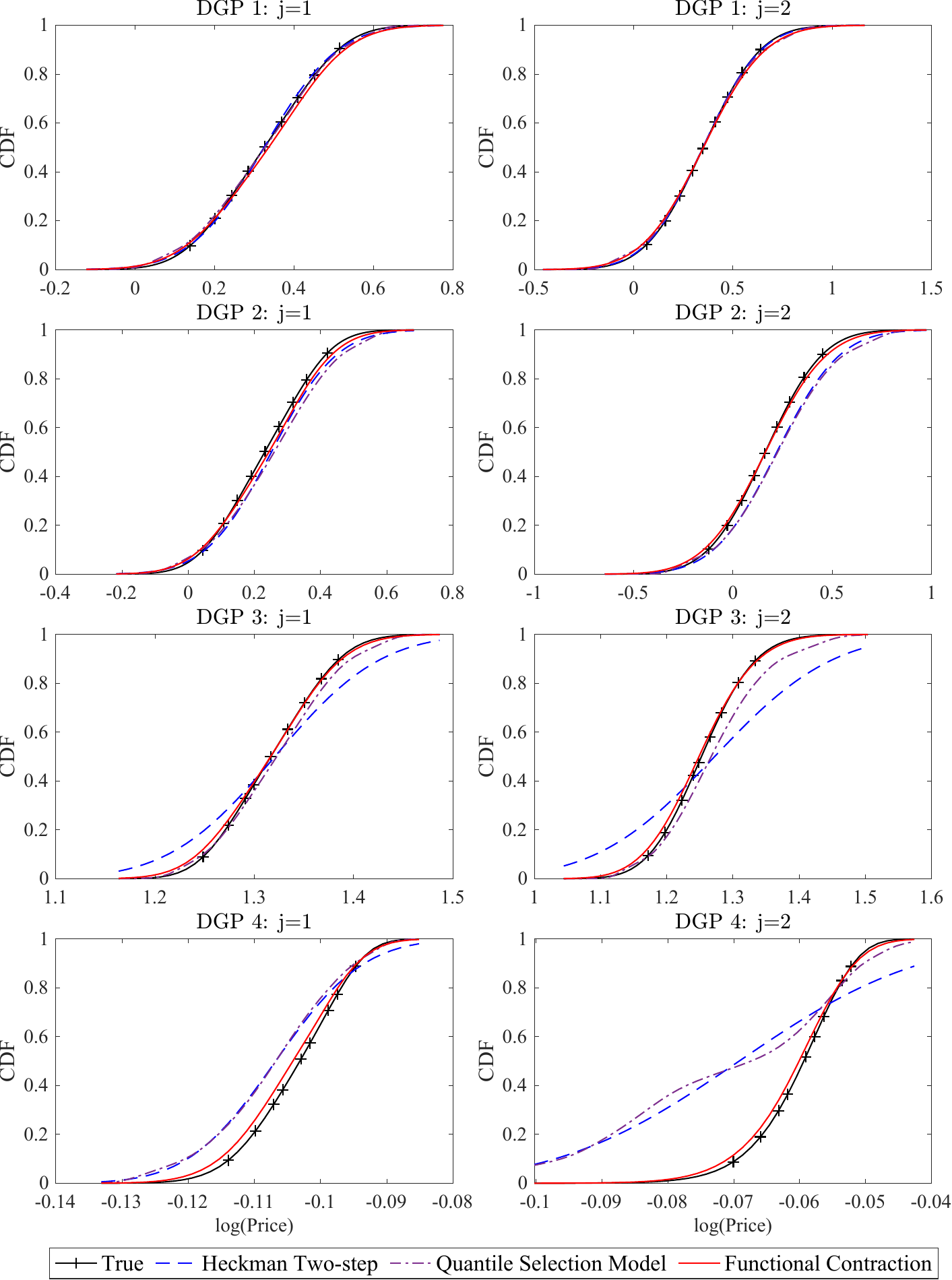}
 \caption{CDFs of $log(price)$ for alternatives 1 and 2, conditional on $x_{i2}=0.25$. The black curve with “+” markers, the blue dashed curve, the purple dash–dotted curve, and the red solid curve correspond to the true CDF, the Heckman two-step estimate, the quantile selection estimate, and the functional contraction estimate, respectively, based on 500 Monte Carlo replications with a sample size of 2,000.
 }
	\label{fig:compare_cdf_x2}
\end{figure}

\begin{figure}[htbp!]
	\centering
 \includegraphics[width=0.9\linewidth]{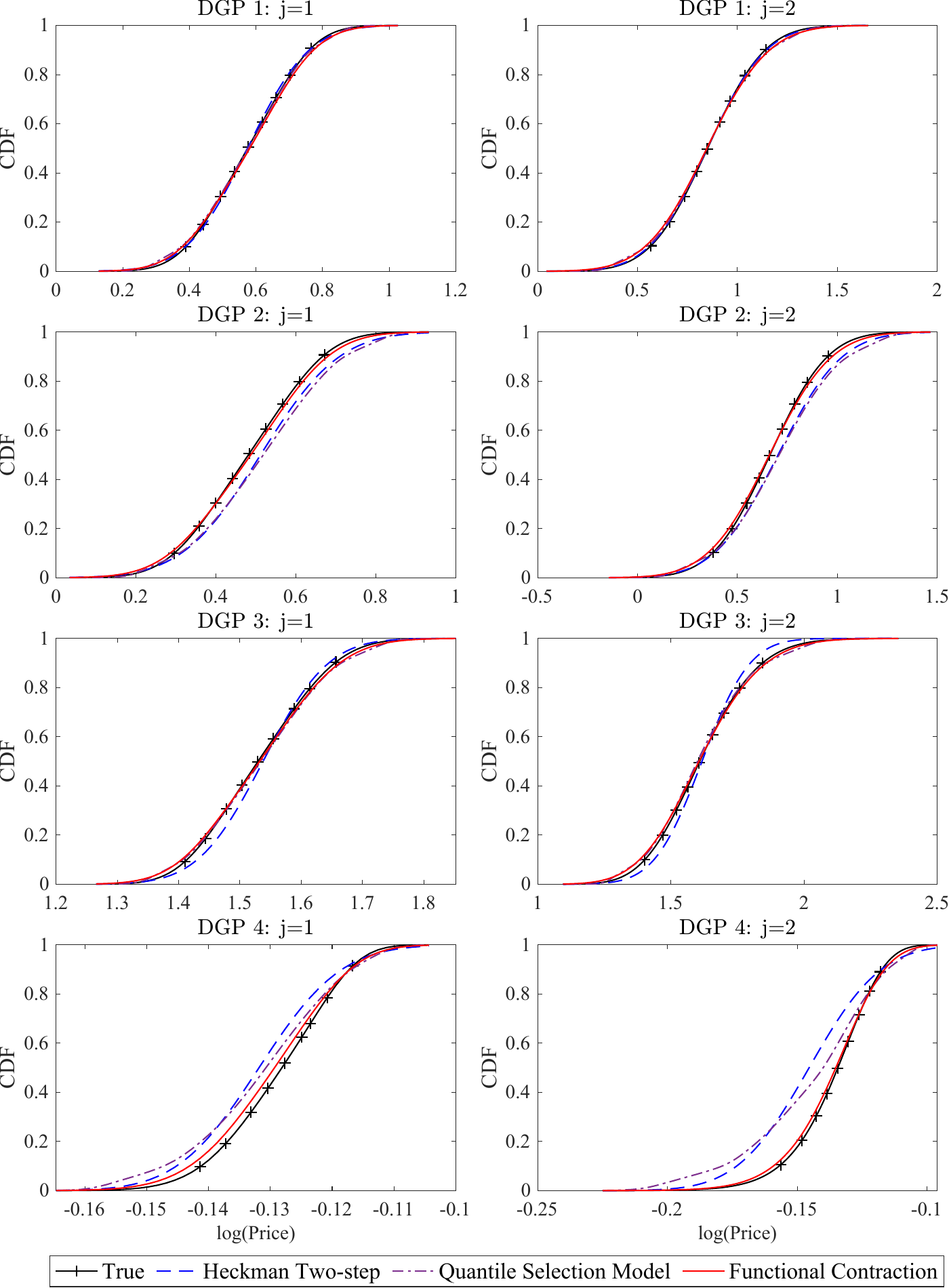}
 \caption{CDFs of $log(price)$ for alternatives 1 and 2, conditional on $x_{i2}=0.75$. The black curve with “+” markers, the blue dashed curve, the purple dash–dotted curve, and the red solid curve correspond to the true CDF, the Heckman two-step estimate, the quantile selection estimate, and the functional contraction estimate, respectively, based on 500 Monte Carlo replications with a sample size of 2,000.
 }
	\label{fig:compare_cdf_x4}
\end{figure}

Our method requires the econometrician to correctly specify the functional form of the selection function. To evaluate how the estimator performs under misspecification, we conduct a series of Monte Carlo simulations in which the econometrician assumes that 
$\varepsilon$ follows a logistic distribution, while in truth it is generated from a normal distribution. In Tables \ref{table:est_theta_mis}--\ref{table:est_price_mis} in Appendix \ref{sec:tables_app}, we report the estimation results for the utility parameters and CDFs of $log(price)$ under this misspecification.
For the utility parameters, we rescale the estimates by the scale parameter of the logit model to make them comparable to those in the original probit specification. After this adjustment, the biases are small. The estimator for the offered price distributions also performs well: the integrated squared biases and mean squared errors of the CDFs remain close to those in Tables \ref{tab:est_price1} and \ref{tab:est_price2}. These results suggest that the estimator of the offered price distributions may be relatively insensitive to moderate misspecification of the selection function. This property could be useful in practice, particularly when limited prior information is available regarding the appropriate functional form.

Finally, we briefly discuss how the functional contraction performs computationally in practice.
The average numbers of iterations needed to reach convergence (with a tolerance of $10^{-5}$) are 3.8, 3.8, 3.4, and 2.1 for DGPs 1 through 4, respectively (averaged over 500 replications).
These results indicate that the proposed estimator is computationally efficient, converges rapidly, and remains stable across a range of data generating processes, making it attractive for applied work.

\section{Discussion of Empirical Applications}
\label{sec:applications}

The estimator introduced in Section \ref{sec:estimation} is broadly applicable to a wide range of empirical settings. It offers an alternative approach to addressing the challenge of selection bias that arises when only the outcomes of chosen alternatives are observed. 
The method has several features that can be useful in empirical applications. First, it imposes no parametric or separability restrictions on the potential outcome distributions and allows them to vary flexibly across alternatives. Second, the framework accommodates unobservable characteristics in both the outcome distributions and the selection model, capturing selection on unobservables. Third, the selection function can incorporate alternative-specific unobserved heterogeneity and does not require an excluded variable to exogenously shift the choice probability. Such an excluded variable may not be readily available in certain empirical settings.

A natural application of the proposed method is consumer demand estimation in markets where only transaction prices are observed.
In classic differentiated product demand estimation pioneered by \cite{berry1994estimating} and \cite{berry1995automobile}, the price of a product is often assumed to be uniform across all consumers (e.g., the list price of a vehicle). But this assumption does not hold in contexts involving price discrimination or personalized pricing 
\citep{d2019automobile,
sagl2023dispersion, buchholz2020value, dube2023personalized}, discount negotiation \citep{goldberg1996dealer, allen2014price}, or risk-based pricing 
\citep{crawford2018asymmetric, cosconati2024competing}.
In these contexts, researchers can relatively easily gather data on the transaction prices consumers pay, but it is challenging to gain access to competing prices offered to consumers. 

In \cite{cosconati2024competing}, we apply the method to estimate demand and insurance companies' information technology in the auto insurance market, where only the transaction prices of selected insurance plans are observed. In this market, insurance companies employ risk-based pricing. For each consumer, an insurance company generates a noisy estimate of their risk type and prices accordingly. 
Our goal is to quantify the heterogeneity in insurers' information technology, as measured by the dispersion of their risk estimates. Since the shape of the offered price distribution reflects the distribution of risk estimates, allowing for flexible estimation of the offered price distribution is crucial.

In this application, we assume that the offered prices across different firms are independent conditional on observable characteristics and the consumer’s true unobserved risk type. At the same time, the consumer’s risk type may also influence their preferences over insurance products. For example, higher-risk consumers may prefer insurers with higher service quality. We therefore allow the true risk type to affect both the pricing distributions and the utility parameters. 
Our data include realized claim records for each consumer over multiple years, and we use these records as instruments for the latent risk type in the first-step estimation.

We nonparametrically estimate each insurance company’s offered price distribution using our functional contraction approach. 
In Figure \ref{fig:insurance_CDF}, we plot the CDFs of offered price for several firms based on estimates in \cite{cosconati2024competing}.
The distributions differ substantially across firms, indicating significant heterogeneity in their pricing strategies. 
Building on this result, we estimate each firm's information precision parameter using supply-side model restrictions. These estimates provide important insights for analyzing competition under heterogeneous information structures in this market.

\begin{figure}[htpb!]
    \centering
    \includegraphics[width=0.6\linewidth]{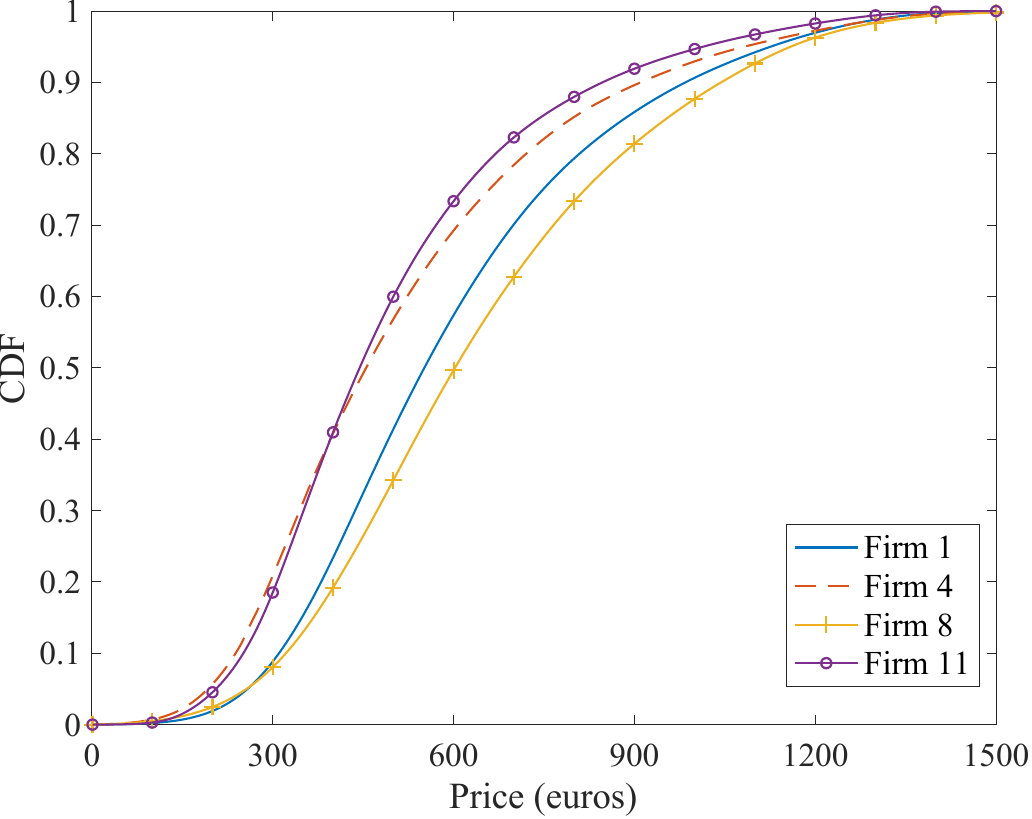}
    \caption{CDFs of the offered price distributions for firms 1, 4, 8, and 11 based on estimates from \cite{cosconati2024competing}. The CDFs are averaged across different characteristic groups.}
    \label{fig:insurance_CDF}
\end{figure}

From a practical point of view, our iterative procedure to numerically solve for the offered price distributions given demand parameters is easy to implement and performs well in practice. In our empirical application using data from 11 insurers, the iterative algorithm converges very quickly, typically requiring only 6--7 iterations.

The usefulness of this method is not limited to consumer demand. It can also be applied to auction models and Roy models, where similar selection issues arise. 
For example, in multi-attribute auctions, our approach can be used to nonparametrically recover the full bid distribution and the auctioneer’s scoring weights when only the winning bids and the winner’s identity are observed, even in the presence of bidder asymmetry.\footnote{Flexibly accommodating bidder asymmetries is a well-known challenge in auction models (see discussions in  \cite{athey2007nonparametric}). Bidder asymmetries may arise from factors such as distance to the contract location \citep{flambard2006asymmetry}, information advantages \citep{hendricks1988empirical,de2009effect}, varying risk attitudes \citep{campo2012risk}, or strategic sophistication \citep{hortaccsu2019does}.
} 
Auctions in
many settings have used the scoring rule that departs from the pure price-based criterion by accounting for quality differences,\footnote{See, for example, \cite{asker2008properties,lewis2011procurement,nakabayashi2013small,yoganarasimhan2016estimation,takahashi2018strategic,krasnokutskaya2020role,allen2024resolving}.} and our framework can flexibly accommodate these multi-attribute scoring mechanisms with both observable and unobservable components.
A similar application arises in Roy-type models, where our method can recover the distribution of potential wages flexibly when only realized wages in the chosen sector are observed. 
Unlike identification strategies that rely on excluded variables in the selection equation, our approach instead requires proxy variables to recover the selected outcome distribution conditional on the latent characteristics.
For example, in studies of occupational choice and wage
inequality, cognitive and noncognitive abilities may be treated as latent characteristics that jointly affect occupational choices and wages, while standardized test scores
or other skill assessments can be used as proxies for these latent characteristics.

\section{Conclusion}
\label{sec:conclusion}

We introduce a new method for estimating nonseparable selection models. We show that, for a given selection function and under suitable contraction conditions, the potential outcome distributions are nonparametrically identified from the distribution of selected outcomes and can be recovered using a simple iterative algorithm.
We achieve this by constructing an operator whose fixed point is the potential outcome distributions and establishing sufficient conditions under which the operator is a functional contraction. 
Building on this theoretical result, we propose a two-step estimation strategy for both the selection function and potential outcome distributions.
The consistency and asymptotic normality of the proposed estimators are established under suitable regularity conditions.

Our method has several features that are relevant for empirical applications.
First, we allow the outcome equation to be fully nonparametric and nonseparable in error terms, and we recover the entire distribution of potential outcomes rather than focusing on specific moments or quantiles. 
In essence, we correct for sample selection bias 
by examining how the bias is \emph{systematically} generated by the selection model. 
Second, our approach allows for fully heterogeneous effects of covariates on outcomes. 
Third, it does not require an excluded variable that shifts the selection probability
without entering the outcome equation, nor does it rely on identification-at-infinity arguments. However, our approach does require a different class of instruments satisfying the conditional independence assumptions commonly imposed in the nonclassical measurement-error literature. 
Finally, the framework accommodates asymmetry in outcome distributions across alternatives and flexibly incorporates unobserved alternative-specific heterogeneity in the selection model.

We find that the proposed estimation strategy performs well in both simulations and real-world data applications; see our demand estimation using insurance market data in \cite{cosconati2024competing}. 
The approach is straightforward to implement and computationally efficient, making it appealing to empirical researchers.
More broadly, the estimator can be applied in a wide range of settings in which only selected outcomes are observed, including consumer demand models with only transaction prices, multi-attribute auctions with incomplete bid data, and various selection models in labor economics.
Our method is particularly valuable in applications where the entire distribution of outcomes is of interest.

\newpage

\begin{singlespace}
    \bibliographystyle{ecta}
    \bibliography{ins_ref}

@article{honore2020selection,
  title={Selection without exclusion},
  author={Honor{\'e}, Bo E and Hu, Luojia},
  journal={Econometrica},
  volume={88},
  number={3},
  pages={1007--1029},
  year={2020},
  publisher={Wiley Online Library}
}

@article{chernozhukov2025distribution,
author = {Chernozhukov, Victor and Fern\'{a}ndez-Val, Iv\'{a}n and Luo, Siyi},
title = {Distribution regression with sample selection and UK wage                         decomposition},
journal = {Journal of Political Economy},
volume = {133},
number = {12},
pages = {3952-3992},
year = {2025},
doi = {10.1086/738148},

URL = { 
    
        https://doi.org/10.1086/738148
    
    

},
eprint = { 
    
        https://doi.org/10.1086/738148
    
    

}
,
    abstract = { We develop a distribution regression model under endogenous sample selection. This model is a semiparametric generalization of the Heckman selection model, which accommodates much richer patterns of heterogeneity in the effect of covariates and selection process and allows for departures from Gaussian errors while maintaining the same level of tractability. The model applies to continuous, discrete, and mixed outcomes. We provide identification, estimation, and inference methods and apply them to obtain wage decompositions in the United Kingdom. Here we decompose the difference between the male and female wage distributions into composition, wage structure, selection structure, and selection sorting effects. }
}

@article{d2019automobile,
  title={Automobile prices in market equilibrium with unobserved price discrimination},
  author={D’Haultf{\oe}uille, Xavier and Durrmeyer, Isis and F{\'e}vrier, Philippe},
  journal={The Review of Economic Studies},
  volume={86},
  number={5},
  pages={1973--1998},
  year={2019},
  publisher={Oxford University Press}
}

@article{andrews1998semiparametric,
  title={Semiparametric estimation of the intercept of a sample selection model},
  author={Andrews, Donald WK and Schafgans, Marcia MA},
  journal={The Review of Economic Studies},
  volume={65},
  number={3},
  pages={497--517},
  year={1998},
  publisher={Wiley-Blackwell}
}

@article{gronau1974wage,
  title={Wage comparisons--A selectivity bias},
  author={Gronau, Reuben},
  journal={Journal of Political Economy},
  volume={82},
  number={6},
  pages={1119--1143},
  year={1974},
  publisher={The University of Chicago Press}
}

@article{newey1994large,
  title={Large sample estimation and hypothesis testing},
  author={Newey, Whitney K and McFadden, Daniel},
  journal={Handbook of Econometrics},
  volume={4},
  pages={2111--2245},
  year={1994},
  publisher={Elsevier}
}

@article{meilijson1981estimation,
  title={Estimation of the lifetime distribution of the parts from the autopsy statistics of the machine},
  author={Meilijson, Isaac},
  journal={Journal of Applied Probability},
  volume={18},
  number={4},
  pages={829--838},
  year={1981},
  publisher={Cambridge University Press}
}

@article{athey2007nonparametric,
  title={Nonparametric approaches to auctions},
  author={Athey, Susan and Haile, Philip A},
  journal={Handbook of econometrics},
  volume={6},
  pages={3847--3965},
  year={2007},
  publisher={Elsevier}
}

@article{d2018extremal,
  title={Extremal quantile regressions for selection models and the black--white wage gap},
  author={D’Haultf{\oe}uille, Xavier and Maurel, Arnaud and Zhang, Yichong},
  journal={Journal of Econometrics},
  volume={203},
  number={1},
  pages={129--142},
  year={2018},
  publisher={Elsevier}
}

@article{arellano2017quantile,
  title={Quantile selection models with an application to understanding changes in wage inequality},
  author={Arellano, Manuel and Bonhomme, St{\'e}phane},
  journal={Econometrica},
  volume={85},
  number={1},
  pages={1--28},
  year={2017},
  publisher={Wiley Online Library}
}

@article{fernandez2024nonseparable,
  title = {Nonseparable sample selection models with censored selection rules},
journal = {Journal of Econometrics},
volume = {240},
number = {2},
pages = {105088},
year = {2024},
issn = {0304-4076},
doi = {https://doi.org/10.1016/j.jeconom.2021.01.009},
url = {https://www.sciencedirect.com/science/article/pii/S0304407621000567},
author = {Ivan Fernández-Val and Aico {van Vuuren} and Francis Vella}
}

@article{newey2007nonparametric,
  title={Nonparametric continuous/discrete choice models},
  author={Newey, Whitney K},
  journal={International Economic Review},
  volume={48},
  number={4},
  pages={1429--1439},
  year={2007},
  publisher={Wiley Online Library}
}

@article{chen2003semiparametric,
  title={Semiparametric estimation of a heteroskedastic sample selection model},
  author={Chen, Songnian and Khan, Shakeeb},
  journal={Econometric Theory},
  volume={19},
  number={6},
  pages={1040--1064},
  year={2003},
  publisher={Cambridge University Press}
}

@incollection{heckman1976common,
  title={The common structure of statistical models of truncation, sample selection and limited dependent variables and a simple estimator for such models},
  author={Heckman, James J},
  booktitle={Annals of economic and social measurement, volume 5, number 4},
  pages={475--492},
  year={1976},
  publisher={NBER}
}

@article{lee1983generalized,
  title={Generalized econometric models with selectivity},
  author={Lee, Lung-Fei},
  journal={Econometrica: Journal of the Econometric Society},
  pages={507--512},
  year={1983},
  publisher={JSTOR}
}

@article{lee1982some,
  title={Some approaches to the correction of selectivity bias},
  author={Lee, Lung-Fei},
  journal={The Review of Economic Studies},
  volume={49},
  number={3},
  pages={355--372},
  year={1982},
  publisher={Wiley-Blackwell}
}

@article{vella1998estimating,
  title={Estimating models with sample selection bias: A survey},
  author={Vella, Francis},
  journal={Journal of Human Resources},
  pages={127--169},
  year={1998},
  publisher={JSTOR}
}

@article{heckman1974shadow,
  title={Shadow prices, market wages, and labor supply},
  author={Heckman, James J},
  journal={Econometrica: journal of the econometric society},
  pages={679--694},
  year={1974},
  publisher={JSTOR}
}

@article{newey2009two,
  title={Two-step series estimation of sample selection models},
  author={Newey, Whitney K},
  journal={The Econometrics Journal},
  volume={12},
  number={suppl\_1},
  pages={S217--S229},
  year={2009},
  publisher={Oxford University Press Oxford, UK}
}

@article{ahn1993semiparametric,
  title={Semiparametric estimation of censored selection models with a nonparametric selection mechanism},
  author={Ahn, Hyungtaik and Powell, James L},
  journal={Journal of Econometrics},
  volume={58},
  number={1-2},
  pages={3--29},
  year={1993},
  publisher={Elsevier}
}

@article{heckman1979sample,
  title={Sample selection bias as a specification error},
  author={Heckman, James J},
  journal={Econometrica},
  volume={47},
  pages={153--161},
  year={1979}
}

@incollection{french2011identification,
  title={Identification of models of the labor market},
  author={French, Eric and Taber, Christopher},
  booktitle={Handbook of labor economics},
  volume={4},
  pages={537--617},
  year={2011},
  publisher={Elsevier}
}

@article{das2003nonparametric,
  title={Nonparametric estimation of sample selection models},
  author={Das, Mitali and Newey, Whitney K and Vella, Francis},
  journal={The Review of Economic Studies},
  volume={70},
  number={1},
  pages={33--58},
  year={2003},
  publisher={Wiley-Blackwell}
}

@article{hortaccsu2019does,
  title={Does strategic ability affect efficiency? Evidence from electricity markets},
  author={Horta{\c{c}}su, Ali and Luco, Fernando and Puller, Steven L and Zhu, Dongni},
  journal={American Economic Review},
  volume={109},
  number={12},
  pages={4302--4342},
  year={2019},
  publisher={American Economic Association 2014 Broadway, Suite 305, Nashville, TN 37203}
}

@article{campo2012risk,
  title={Risk aversion and asymmetry in procurement auctions: Identification, estimation and application to construction procurements},
  author={Campo, Sandra},
  journal={Journal of Econometrics},
  volume={168},
  number={1},
  pages={96--107},
  year={2012},
  publisher={Elsevier}
}

@article{de2009effect,
  title={The effect of information on the bidding and survival of entrants in procurement auctions},
  author={De Silva, Dakshina G and Kosmopoulou, Georgia and Lamarche, Carlos},
  journal={Journal of Public Economics},
  volume={93},
  number={1-2},
  pages={56--72},
  year={2009},
  publisher={Elsevier}
}

@article{flambard2006asymmetry,
  title={Asymmetry in procurement auctions: Evidence from snow removal contracts},
  author={Flambard, V{\'e}ronique and Perrigne, Isabelle},
  journal={The Economic Journal},
  volume={116},
  number={514},
  pages={1014--1036},
  year={2006},
  publisher={Oxford University Press Oxford, UK}
}

@article{takahashi2018strategic,
  title={Strategic design under uncertain evaluations: Structural analysis of design-build auctions},
  author={Takahashi, Hidenori},
  journal={The RAND Journal of Economics},
  volume={49},
  number={3},
  pages={594--618},
  year={2018},
  publisher={Wiley Online Library}
}

@article{krasnokutskaya2020role,
  title={The role of quality in internet service markets},
  author={Krasnokutskaya, Elena and Song, Kyungchul and Tang, Xun},
  journal={Journal of Political Economy},
  volume={128},
  number={1},
  pages={75--117},
  year={2020},
  publisher={The University of Chicago Press Chicago, IL}
}

@article{yoganarasimhan2016estimation,
  title={Estimation of beauty contest auctions},
  author={Yoganarasimhan, Hema},
  journal={Marketing Science},
  volume={35},
  number={1},
  pages={27--54},
  year={2016},
  publisher={INFORMS}
}

@article{nakabayashi2013small,
  title={Small business set-asides in procurement auctions: An empirical analysis},
  author={Nakabayashi, Jun},
  journal={Journal of Public Economics},
  volume={100},
  pages={28--44},
  year={2013},
  publisher={Elsevier}
}

@article{asker2008properties,
  title={Properties of scoring auctions},
  author={Asker, John and Cantillon, Estelle},
  journal={The RAND Journal of Economics},
  volume={39},
  number={1},
  pages={69--85},
  year={2008},
  publisher={Wiley Online Library}
}

@article{lewis2011procurement,
  title={Procurement contracting with time incentives: Theory and evidence},
  author={Lewis, Gregory and Bajari, Patrick},
  journal={The Quarterly Journal of Economics},
  volume={126},
  number={3},
  pages={1173--1211},
  year={2011},
  publisher={MIT Press}
}

@article{guerre2019nonparametric,
  title={Nonparametric identification of first-price auction with unobserved competition: A density discontinuity framework},
  author={Guerre, Emmanuel and Luo, Yao},
  journal={arXiv preprint arXiv:1908.05476},
  year={2019}
}

@article{komarova2013new,
  title={A new approach to identifying generalized competing risks models with application to second-price auctions},
  author={Komarova, Tatiana},
  journal={Quantitative Economics},
  volume={4},
  number={2},
  pages={269--328},
  year={2013},
  publisher={Wiley Online Library}
}

@article{athey2002identification,
  title={Identification of standard auction models},
  author={Athey, Susan and Haile, Philip A},
  journal={Econometrica},
  volume={70},
  number={6},
  pages={2107--2140},
  year={2002},
  publisher={Wiley Online Library}
}

@article{allen2024resolving,
  title={Resolving failed banks: Uncertainty, multiple bidding and auction design},
  author={Allen, Jason and Clark, Robert and Hickman, Brent and Richert, Eric},
  journal={Review of Economic Studies},
  volume={91},
  number={3},
  pages={1201--1242},
  year={2024},
  publisher={Oxford University Press UK}
}

@article{dube2023personalized,
  title={Personalized pricing and consumer welfare},
  author={Dub{\'e}, Jean-Pierre and Misra, Sanjog},
  journal={Journal of Political Economy},
  volume={131},
  number={1},
  pages={131--189},
  year={2023},
  publisher={The University of Chicago Press Chicago, IL}
}

@article{buchholz2020value,
  title={The value of time: Evidence from auctioned cab rides},
  author={Buchholz, Nicholas and Doval, Laura and Kastl, Jakub and Mat{\v{e}}jka, Filip and Salz, Tobias},
  year={2020},
  journal={CEPR Discussion Paper No. DP14666}
}

@article{cosconati2024competing,
  title={Competing under information heterogeneity: Evidence from auto insurance},
  author={Cosconati, Marco and Xin, Yi and Wu, Fan and Jin, Yizhou},
  journal={The Review of Economic Studies},
  year    = {2025},
  note    = {Forthcoming}
}

@article{d2013another,
  title={Another look at the identification at infinity of sample selection models},
  author={D’Haultfoeuille, Xavier and Maurel, Arnaud},
  journal={Econometric Theory},
  volume={29},
  number={1},
  pages={213--224},
  year={2013},
  publisher={Cambridge University Press}
}

@article{lee2013nonparametric,
  title={Nonparametric identification of accelerated failure time competing risks models},
  author={Lee, Sokbae and Lewbel, Arthur},
  journal={Econometric Theory},
  volume={29},
  number={5},
  pages={905--919},
  year={2013},
  publisher={Cambridge University Press}
}

@article{heckman1990empirical,
  title={The empirical content of the Roy model},
  author={Heckman, James J and Honore, Bo E},
  journal={Econometrica: Journal of the Econometric Society},
  pages={1121--1149},
  year={1990},
  publisher={JSTOR}
}

@article{roy1951some,
  title={Some thoughts on the distribution of earnings},
  author={Roy, Andrew Donald},
  journal={Oxford Economic Papers},
  volume={3},
  number={2},
  pages={135--146},
  year={1951},
  publisher={Oxford University Press}
}

@article{crawford2018asymmetric,
  title={Asymmetric information and imperfect competition in lending markets},
  author={Crawford, Gregory S and Pavanini, Nicola and Schivardi, Fabiano},
  journal={American Economic Review},
  volume={108},
  number={7},
  pages={1659--1701},
  year={2018},
  publisher={American Economic Association 2014 Broadway, Suite 305, Nashville, TN 37203}
}

@article{sagl2023dispersion,
  title={Dispersion, discrimination, and the price of your pickup},
  author={Sagl, Stephan},
  journal={Working paper},
year={2023}
}

@article{berry1995automobile,
 ISSN = {00129682, 14680262},
 URL = {http://www.jstor.org/stable/2171802},
 abstract = {This paper develops techniques for empirically analyzing demand and supply in differentiated products markets and then applies these techniques to analyze equilibrium in the U.S. automobile industry. Our primary goal is to present a framework which enables one to obtain estimates of demand and cost parameters for a class of oligopolistic differentiated products markets. These estimates can be obtained using only widely available product-level and aggregate consumer-level data, and they are consistent with a structural model of equilibrium in an oligopolistic industry. When we apply the techniques developed here to the U.S. automobile market, we obtain cost and demand parameters for (essentially) all models marketed over a twenty year period.},
 author = {Steven Berry and James Levinsohn and Ariel Pakes},
 journal = {Econometrica},
 number = {4},
 pages = {841--890},
 publisher = {[Wiley, Econometric Society]},
 title = {Automobile prices in market equilibrium},
 urldate = {2024-01-20},
 volume = {63},
 year = {1995}
}

@article{hendricks1988empirical,
  title={An empirical study of an auction with asymmetric information},
  author={Hendricks, Kenneth and Porter, Robert H},
  journal={The American Economic Review},
  pages={865--883},
  year={1988},
  publisher={JSTOR}
}

@article{berry1994estimating,
  title={Estimating discrete-choice models of product differentiation},
  author={Berry, Steven T},
  journal={The RAND Journal of Economics},
  pages={242--262},
  year={1994},
  publisher={JSTOR}
}

@article{allen2014price,
  title={Price dispersion in mortgage markets},
  author={Allen, Jason and Clark, Robert and Houde, Jean-Fran{\c{c}}ois},
  journal={The Journal of Industrial Economics},
  volume={62},
  number={3},
  pages={377--416},
  year={2014},
  publisher={Wiley Online Library}
}

@article{allen2019search,
  title={Search frictions and market power in negotiated-price markets},
  author={Allen, Jason and Clark, Robert and Houde, Jean-Fran{\c{c}}ois},
  journal={Journal of Political Economy},
  volume={127},
  number={4},
  pages={1550--1598},
  year={2019},
  publisher={The University of Chicago Press Chicago, IL}
}

@article{goldberg1996dealer,
  title={Dealer price discrimination in new car purchases: Evidence from the consumer expenditure survey},
  author={Goldberg, Pinelopi Koujianou},
  journal={Journal of Political Economy},
  volume={104},
  number={3},
  pages={622--654},
  year={1996},
  publisher={The University of Chicago Press}
}

@article{cicala2015does,
  title={When does regulation distort costs? Lessons from fuel procurement in us electricity generation},
  author={Cicala, Steve},
  journal={American Economic Review},
  volume={105},
  number={1},
  pages={411--444},
  year={2015},
  publisher={American Economic Association 2014 Broadway, Suite 305, Nashville, TN 37203}
}

@article{hu2008identification,
  title={Identification and estimation of nonlinear models with misclassification error using instrumental variables: A general solution},
  author={Hu, Yingyao},
  journal={Journal of Econometrics},
  volume={144},
  number={1},
  pages={27--61},
  year={2008},
  publisher={Elsevier}
}

@article{hu2008instrumental,
  title={Instrumental variable treatment of nonclassical measurement error models},
  author={Hu, Yingyao and Schennach, Susanne M},
  journal={Econometrica},
  volume={76},
  number={1},
  pages={195--216},
  year={2008},
  publisher={Wiley Online Library}
}

@article{bushell1973hilbert,
  title={Hilbert's metric and positive contraction mappings in a Banach space},
  author={Bushell, Peter J},
  journal={Archive for Rational Mechanics and Analysis},
  volume={52},
  number={4},
  pages={330--338},
  year={1973},
  publisher={Springer}
}

@book{lemmens2012nonlinear,
  title={Nonlinear Perron-Frobenius Theory},
  author={Lemmens, Bas and Nussbaum, Roger},
  volume={189},
  year={2012},
  publisher={Cambridge University Press}
}
\end{singlespace}
\newpage

\appendix

\numberwithin{equation}{section}

\clearpage 
\section{Omitted Proofs}
\label{sec:proof}

All proofs are developed using the measure-theoretic formulation of the model presented in Section \ref{sec:formal}.

\subsection{Hilbert’s Projective Metric}\label{sec:Radon Nikodym}

For two probability measures $\Psi_j,\Phi_j \in \Delta([\underline{p}_j,\overline{p}_j])$, we measure their distance using Hilbert’s projective metric. 
If $\Psi_j$ and $ \Phi_j$ are mutually absolutely continuous,
$$d_H(\Psi_j,\Phi_j)
=
\ln
\frac{
\operatorname*{ess\,sup}_{p\in[\underline{p}_j,\overline{p}_j]}
\frac{d\Psi_j}{d\Phi_j}(p)
}{
\operatorname*{ess\,inf}_{p\in[\underline{p}_j,\overline{p}_j]}
\frac{d\Psi_j}{d\Phi_j}(p)
},$$
where $\frac{d\Psi_j}{d\Phi_j}(p)$ denotes the Radon-Nikodym derivative. If $\Psi_j$ and $\Phi_j$ are not mutually absolutely continuous, set $d_H(\Psi_j,\Phi_j)=+\infty$.
When both $\Psi_j$ and $\Phi_j$ have continuous densities, the Radon-Nikodym derivative simplifies to the ratio of densities as in Equation \eqref{eq:hilbert}.

Recall that two probability measures $\Psi_j$ and $\Phi_j$ are equivalent, denoted $\Psi_j\sim \Phi_j$, if they are mutually absolutely continuous. Since the selection function $f_j$ is strictly positive, $Pr_j(\cdot; G)$ is also strictly positive.  
It follows from Equation \eqref{eq: expost_app} that  $G_j\sim \tilde G_j$. When $\Psi_j\sim \Phi_j$, the Radon-Nikodym derivative, $$\frac{d\Psi_j}{d\Phi_j}\colon [\underline p_j, \overline p_j] \to \mathbb (0,\infty),$$ exists, as guaranteed by the Radon-Nikodym Theorem. Note that
$$\Psi_j=\Phi_j\quad\Leftrightarrow \quad \frac{d\Psi_j}{d\Phi_j}(p)=1 \quad\quad \Phi_j\text{-a.e. }$$

Given our operator $T$ in Equation \eqref{eq:operator_app}, for all $\Psi_j,\Phi_j\in   \Delta([\underline p_j, \overline p_j])$,
$$(T\Psi)_j\sim \tilde G_j\sim (T\Phi)_j.$$
Moreover, since $f_j>0$ is continuous with compact support, $Pr_j$ is bounded away from $0$. Thus, $d_H((T\Psi)_j,G_j)$, $d_H((T\Psi)_j,\tilde G_j)$ and $d_H((T\Psi)_j,(T\Phi)_j)$ are all finite.

\subsection{Proof of Theorem \ref{thm: contraction}}
\label{sec:proof_1}

Fix a probability measure
\(G_j\in\Delta([\underline p_j,\overline p_j])\) and define
\[
\mathcal X(G_j)
=
\left\{
\Psi\in\Delta([\underline p_j,\overline p_j]):
d_H(\Psi,G_j)<\infty
\right\}.
\]
Equivalently, \(\Psi\in\mathcal X(G_j)\) if and only if \(\Psi\sim G_j\) and
\[
\log\frac{d\Psi}{dG_j}\in L^\infty(G_j).
\]
\begin{lem}[Completeness of the finite-distance component]
Metric space \((\mathcal X_j(G_j),d_H)\) is complete. Consequently,
\[
\mathcal X( G)
=
\prod_{j\in\mathcal J}\mathcal X_j( G_j)
\]
is complete under the max metric $D$.
\end{lem}

\begin{proof}
Consider the one-dimensional component first. Let \((\Phi_n)_{n\ge1}\) be a Cauchy sequence in
\((\mathcal X(G_j),d_H)\). Then
\[
\log\frac{d\Phi_n}{dG_j}\in L^\infty(G_j)
\]
and
\[
\int \exp\left(\log\frac{d\Phi_n}{dG_j}\right)dG_j=1
\]
for all \(n\).

For any \(n,m\), we have
\[
\begin{aligned}
d_H(\Phi_n,\Phi_m)
&=
\operatorname{osc}_{G_j}
\left(
\log\frac{d\Phi_n}{dG_j}
-
\log\frac{d\Phi_m}{dG_j}
\right),
\end{aligned}
\]
where
\[
\operatorname{osc}_G(v)
=
\operatorname*{ess\,sup}_G v
-
\operatorname*{ess\,inf}_G v.
\]
Moreover,
\[
\int
\exp\left(
\log\frac{d\Phi_n}{dG_j}
-
\log\frac{d\Phi_m}{dG_j}
\right)
d\Phi_m
=
\int \frac{d\Phi_n}{d\Phi_m}\,d\Phi_m
=
1.
\]
Hence,
\[
\operatorname*{ess\,inf}_{G_j}
\left(
\log\frac{d\Phi_n}{dG_j}
-
\log\frac{d\Phi_m}{dG_j}
\right)
\le 0
\le
\operatorname*{ess\,sup}_{G_j}
\left(
\log\frac{d\Phi_n}{dG_j}
-
\log\frac{d\Phi_m}{dG_j}
\right),
\]
where the essential infimum and supremum can equivalently be taken with respect to
\(G_j\) or \(\Phi_m\), since \(\Phi_m\sim G_j\). Therefore,
\[
\left\|
\log\frac{d\Phi_n}{dG_j}
-
\log\frac{d\Phi_m}{dG_j}
\right\|_\infty
\le
\operatorname{osc}_{G_j}
\left(
\log\frac{d\Phi_n}{dG_j}
-
\log\frac{d\Phi_m}{dG_j}
\right)
=
d_H(\Phi_n,\Phi_m).
\]
Since \((\Phi_n)\) is Cauchy under \(d_H\), it follows that
\[
\left(
\log\frac{d\Phi_n}{dG_j}
\right)_{n\ge1}
\]
is a Cauchy sequence in \(L^\infty(G_j)\). Because \(L^\infty(G_j)\) is a Banach space,
there exists \(\ell\in L^\infty(G_j)\) such that
\[
\left\|
\log\frac{d\Phi_n}{dG_j}
-
\ell
\right\|_\infty
\to 0.
\]

Define a measure \(\Phi\) on \([\underline p_j,\overline p_j]\) by
\[
d\Phi=e^\ell dG_j.
\]
We first show that \(\Phi\) is a probability measure. Since
\[
\left\|
\log\frac{d\Phi_n}{dG_j}
-
\ell
\right\|_\infty
\to0,
\]
for all sufficiently large \(n\),
\[
\exp\left(\log\frac{d\Phi_n}{dG_j}\right)
\le
e^{\ell+1}.
\]
The function \(e^{\ell+1}\) is \(G_j\)-integrable because \(\ell\in L^\infty(G_j)\) and
\(G_j\) is a probability measure. Hence, by dominated convergence,
\[
\int e^\ell dG_j
=
\lim_{n\to\infty}
\int
\exp\left(\log\frac{d\Phi_n}{dG_j}\right)dG_j
=
1.
\]
Thus \(\Phi\in\Delta([\underline p_j,\overline p_j])\).

Since \(\ell\in L^\infty(G_j)\), the density \(e^\ell=d\Phi/dG_j\) is bounded above
and bounded away from zero. Hence \(\Phi\sim G_j\) and
\[
d_H(\Phi,G_j)<\infty.
\]
Therefore \(\Phi\in\mathcal X(G_j)\).

Finally,
\[
\begin{aligned}
d_H(\Phi_n,\Phi)
&=
\operatorname{osc}_{G_j}
\left(
\log\frac{d\Phi_n}{dG_j}
-
\log\frac{d\Phi}{dG_j}
\right) \\
&=
\operatorname{osc}_{G_j}
\left(
\log\frac{d\Phi_n}{dG_j}
-
\ell
\right) \\
&\le
2
\left\|
\log\frac{d\Phi_n}{dG_j}
-
\ell
\right\|_\infty
\to0.
\end{aligned}
\]
Therefore \((\mathcal X(G_j),d_H)\) is complete.

Since the full space
\[
\mathcal X( G)
=
\prod_{j\in\mathcal J}\mathcal X( G_j)
\]
is a finite product of complete metric spaces, it is complete under the max metric
\[
D(\Psi,\Phi)
=
\max_{j\in\mathcal J}d_H(\Psi_j,\Phi_j).
\]
\end{proof}

\begin{proof}[Proof of Theorem \ref{thm: contraction}]

Fix \(j\in\mathcal J\). By the definition of \(T\),
\[
d(T\Psi)_j(p_j)
=
\frac{d\tilde G_j(p_j)/Pr_j(p_j;\Psi)}
{\int d\tilde G_j(y)/Pr_j(y;\Psi)}.
\]
Therefore,
\[
\frac{d(T\Psi)_j}{d(T\Phi)_j}(p_j)
=
C_j(\Psi,\Phi)
\frac{Pr_j(p_j;\Phi)}{Pr_j(p_j;\Psi)},
\]
where \(C_j(\Psi,\Phi)>0\) is a normalizing constant that does not depend on
\(p_j\). Since Hilbert's projective metric is invariant to multiplication by
positive constants, the normalizing constant cancels out. Hence,
\[
d_H((T\Psi)_j,(T\Phi)_j)
\le
\operatorname{osc}_{p_j}
\log
\frac{Pr_j(p_j;\Psi)}{Pr_j(p_j;\Phi)}.
\] where \(\operatorname{osc} g=\sup g-\inf g\). The inequality becomes equality when \(\tilde G_j\) has full support on
\([\underline p_j,\bar p_j]\).

Write
\[
\Psi_{-j}=\bigotimes_{k\ne j}\Psi_k,
\qquad
\Phi_{-j}=\bigotimes_{k\ne j}\Phi_k.
\]
For fixed \(j\), the map
\[
\Psi_{-j}\mapsto Pr_j(\cdot;\Psi)
\]
is the restriction to product probability measures of a positive linear operator from
the cone of finite nonnegative measures on
\[
\prod_{k\ne j}[\underline p_k,\overline p_k]
\]
to the cone of positive functions on \([\underline p_j,\overline p_j]\). Indeed, for
any finite positive measure \(\Psi_{-j}\) on
\(\prod_{k\ne j}[\underline p_k,\overline p_k]\), define
\[
Pr_j(p_j;\Psi_{-j})
=
\int f_j(p_j,p_{-j})\,d\Psi_{-j}(p_{-j}).
\]
This map is linear in \(\Psi_{-j}\) and positive because \(f_j>0\). 
To invoke Birkhoff contraction theorem, the first step is to bound the projective diameter of this positive linear operator. See \cite{bushell1973hilbert} for an excellent introduction on Hilbert's metric and Birkhoff theorem.\footnote{\cite{lemmens2012nonlinear} provide a detailed discussion on Perron-Frobenius theory and Birkhoff theorem.} 

Take any two finite positive measures \(\Phi_{-j}\) and \(\Psi_{-j}\) on
\(\prod_{k\ne j}[\underline p_k,\bar p_k]\). For any \(p_j,p_j'\),
\[
\frac{
Pr_j(p_j;\Phi_{-j})Pr_j(p_j';\Psi_{-j})
}{
Pr_j(p_j;\Psi_{-j})Pr_j(p_j';\Phi_{-j})
}
=
\frac{
\int\!\!\int
f_j(p_j,\boldsymbol{p}_{-j})f_j(p_j',\boldsymbol{p}_{-j}')\,
d\Phi_{-j}(\boldsymbol{p}_{-j})d\Psi_{-j}(\boldsymbol{p}_{-j}')
}{
\int\!\!\int
f_j(p_j,\boldsymbol{p}_{-j}')f_j(p_j',\boldsymbol{p}_{-j})\,
d\Phi_{-j}(\boldsymbol{p}_{-j})d\Psi_{-j}(\boldsymbol{p}_{-j}')
}.
\]
By the definition of \(\Delta_j\), for all \(p_j,p_j',\boldsymbol{p}_{-j},\boldsymbol{p}_{-j}'\),
\[
f_j(p_j,\boldsymbol{p}_{-j})f_j(p_j',\boldsymbol{p}_{-j}')
\le
\exp(\Delta_j)
f_j(p_j,\boldsymbol{p}_{-j}')f_j(p_j',\boldsymbol{p}_{-j}).
\]
Integrating both sides with respect to
\(d\Phi_{-j}(\boldsymbol{p}_{-j})d\Psi_{-j}(\boldsymbol{p}_{-j}')\), we obtain
\[
\frac{
Pr_j(p_j;\Phi_{-j})Pr_j(p_j';\Psi_{-j})
}{
Pr_j(p_j;\Psi_{-j})Pr_j(p_j';\Phi_{-j})
}
\le
\exp(\Delta_j).
\]
Taking logs and then the supremum over \(p_j,p_j'\) gives
\[
d_H\!\left(Pr_j(\cdot;\Phi_{-j}),
Pr_j(\cdot;\Psi_{-j})\right)
\le
\Delta_j.
\]
Thus, the projective diameter of the positive linear operator
\(\Psi_{-j}\mapsto Pr_j(\cdot;\Psi)\) is at most \(\Delta_j\).

By the Birkhoff--Hopf contraction theorem for positive linear operators under
Hilbert's projective metric,
\begin{equation}\label{eq: Pr distance}  d_H\!\left(Pr_j(\cdot;\Psi),Pr_j(\cdot;\Phi)\right)
\le
\tanh\left(\frac{\Delta_j}{4}\right)
d_H(\Psi_{-j},\Phi_{-j}).
\end{equation}
Combining this with the previous bound on \(d((T\Psi)_j,(T\Phi)_j)\), we have
\[
d_H((T\Psi)_j,(T\Phi)_j)
\le
\tanh\left(\frac{\Delta_j}{4}\right)
d_H(\Psi_{-j},\Phi_{-j}).
\]

It remains to bound the distance between the product measures. If
\(D(\Psi,\Phi)<\infty\), then \(\Psi_k\sim\Phi_k\) for every \(k\), and
\[
\frac{d\Psi_{-j}}{d\Phi_{-j}}(p_{-j})
=
\prod_{k\ne j}\frac{d\Psi_k}{d\Phi_k}(p_k).
\]
Therefore,
\[
\log\operatorname*{ess\,sup}
\frac{d\Psi_{-j}}{d\Phi_{-j}}
\le
\sum_{k\ne j}
\log\operatorname*{ess\,sup}
\frac{d\Psi_k}{d\Phi_k},
\]
and similarly,
\[
\log\operatorname*{ess\,sup}
\frac{d\Phi_{-j}}{d\Psi_{-j}}
\le
\sum_{k\ne j}
\log\operatorname*{ess\,sup}
\frac{d\Phi_k}{d\Psi_k}.
\]
Hence,
\[
d_H(\Psi_{-j},\Phi_{-j})
\le
\sum_{k\ne j}d_H(\Psi_k,\Phi_k)
\le
(J-1)D(\Psi,\Phi).
\]
If \(D(\Psi,\Phi)=\infty\), the desired inequality is trivial under the extended
metric convention.

Therefore, for each \(j\),
\[
d_H((T\Psi)_j,(T\Phi)_j)
\le
(J-1)
\tanh\left(\frac{\Delta_j}{4}\right)
D(\Psi,\Phi).
\]
Taking the maximum over \(j\) yields
\[
D(T\Psi,T\Phi)
\le
(J-1)\max_{j\in\mathcal J}
\tanh\left(\frac{\Delta_j}{4}\right)
D(\Psi,\Phi).
\]
Thus, if
\[
\rho_B
=
(J-1)\max_{j\in\mathcal J}
\tanh\left(\frac{\Delta_j}{4}\right)
<1,
\]
then \(T\) is a contraction under \(D\) with modulus at most \(\rho_B\).

To obtain existence and uniqueness of the fixed point, fix the selected distribution
\(\tilde G\) and consider the finite-distance component
\[
\mathcal X(\tilde G)
=
\prod_{j\in\mathcal J}
\left\{
\Psi_j\in \Delta([\underline p_j,\bar p_j]):
d_H(\Psi_j,\tilde G_j)<\infty
\right\}.
\]
By Lemma 1, \(\mathcal X(\tilde G)\) is complete under the metric \(D\).
Moreover, for any
\[
\Psi\in \prod_{j\in\mathcal J}\Delta([\underline p_j,\bar p_j]),
\]
the update \(T\Psi\) satisfies
\[
(T\Psi)_j \sim \tilde G_j
\qquad \text{for all } j\in\mathcal J .
\]
Since \(f_j\) is continuous and strictly positive on the compact support, the selection
probability \(Pr_j(\cdot;\Psi)\) is bounded above and bounded away from zero. Hence
\[
d_H((T\Psi)_j,\tilde G_j)<\infty
\qquad \text{for all } j\in\mathcal J ,
\]
and therefore
\[
T\Psi\in \mathcal X(\tilde G).
\]
Thus, \(T\) maps into \(\mathcal X(\tilde G)\).

The contraction inequality established above therefore applies on the complete metric
space \(\mathcal X(\tilde G)\). By the Banach fixed-point theorem, \(T\) has a
unique fixed point in \(\mathcal X(\tilde G)\), which must be $G$.

Finally, if the initial conjecture
\[
\Psi\in \prod_{j\in\mathcal J}\Delta([\underline p_j,\bar p_j])
\]
does not lie in \(\mathcal X(\tilde G)\), then its first iterate \(T\Psi\) does.
Applying the contraction result from the second iterate onward yields
\[
D(T^n\Psi,G)
\le
\rho^{\,n-1}D(T\Psi,G)
\to 0 .
\]
Hence, for any initial conjecture \(\Psi\),
\[
\lim_{n\to\infty}T^n\Psi = G.
\]

\end{proof}

\subsection{Proof of Theorem \ref{thm: contraction special}}
\label{sec:proof_2}

\begin{proof}[Proof of Theorem \ref{thm: contraction special}]
By Theorem \ref{thm: contraction}, it suffices to show that, under Assumption
\ref{asmp: monotone}, the projective diameter for each \(j\) satisfies
\[
\Delta_j
=
\ln f_j(\overline{\boldsymbol p})
-\ln f_j(\underline p_j,\overline{\boldsymbol p}_{-j})
-\ln f_j(\overline p_j,\underline{\boldsymbol p}_{-j})
+\ln f_j(\underline{\boldsymbol p}).
\]

Fix \(j\in\mathcal J\). For any
\(p_j,p_j'\in[\underline p_j,\overline p_j]\) and
\(\boldsymbol p_{-j},\boldsymbol p_{-j}'\in
\prod_{k\ne j}[\underline p_k,\overline p_k]\), consider
\[
\ln
\frac{
f_j(p_j,\boldsymbol p_{-j})
f_j(p_j',\boldsymbol p_{-j}')
}{
f_j(p_j',\boldsymbol p_{-j})
f_j(p_j,\boldsymbol p_{-j}')
}.
\]
Since interchanging \(p_j\) and \(p_j'\) changes the sign of this expression, we may
assume without loss of generality that \(p_j\ge p_j'\). Then
\[
\begin{aligned}
&\ln
\frac{
f_j(p_j,\boldsymbol p_{-j})
f_j(p_j',\boldsymbol p_{-j}')
}{
f_j(p_j',\boldsymbol p_{-j})
f_j(p_j,\boldsymbol p_{-j}')
} \\
&=
\int_{p_j'}^{p_j}
\left[
\frac{\partial \ln f_j(s,\boldsymbol p_{-j})}{\partial s}
-
\frac{\partial \ln f_j(s,\boldsymbol p_{-j}')}{\partial s}
\right]ds .
\end{aligned}
\]
By Assumption \ref{asmp: monotone},
\(\partial \ln f_j(s,\boldsymbol p_{-j})/\partial s\) is weakly increasing in each
component of \(\boldsymbol p_{-j}\). Therefore, for every \(s\),
\[
\left|
\frac{\partial \ln f_j(s,\boldsymbol p_{-j})}{\partial s}
-
\frac{\partial \ln f_j(s,\boldsymbol p_{-j}')}{\partial s}
\right|
\le
\frac{\partial \ln f_j(s,\overline{\boldsymbol p}_{-j})}{\partial s}
-
\frac{\partial \ln f_j(s,\underline{\boldsymbol p}_{-j})}{\partial s}.
\]
It follows that
\[
\begin{aligned}
&\left|
\ln
\frac{
f_j(p_j,\boldsymbol p_{-j})
f_j(p_j',\boldsymbol p_{-j}')
}{
f_j(p_j',\boldsymbol p_{-j})
f_j(p_j,\boldsymbol p_{-j}')
}
\right| \\
&\le
\int_{\underline p_j}^{\overline p_j}
\left[
\frac{\partial \ln f_j(s,\overline{\boldsymbol p}_{-j})}{\partial s}
-
\frac{\partial \ln f_j(s,\underline{\boldsymbol p}_{-j})}{\partial s}
\right]ds \\
&=
\ln f_j(\overline{\boldsymbol p})
-\ln f_j(\underline p_j,\overline{\boldsymbol p}_{-j})
-\ln f_j(\overline p_j,\underline{\boldsymbol p}_{-j})
+\ln f_j(\underline{\boldsymbol p}).
\end{aligned}
\]
The upper bound is attained by taking
\[
p_j=\overline p_j,\qquad
p_j'=\underline p_j,\qquad
\boldsymbol p_{-j}=\overline{\boldsymbol p}_{-j},\qquad
\boldsymbol p_{-j}'=\underline{\boldsymbol p}_{-j}.
\]
Hence,
\[
\Delta_j
=
\ln f_j(\overline{\boldsymbol p})
-\ln f_j(\underline p_j,\overline{\boldsymbol p}_{-j})
-\ln f_j(\overline p_j,\underline{\boldsymbol p}_{-j})
+\ln f_j(\underline{\boldsymbol p}).
\]
By Theorem \ref{thm: contraction},
\[
D(T\Psi,T\Phi)
\le
(J-1)\max_{j\in\mathcal J}
\tanh\left(\frac{\Delta_j}{4}\right)
D(\Psi,\Phi).
\]
Substituting the expression for \(\Delta_j\) derived above gives
\[
D(T\Psi,T\Phi)\le \rho^*D(\Psi,\Phi).
\]
Therefore, if \(\rho^*<1\), the operator \(T\) is a contraction with modulus at most
\(\rho^*\).

\end{proof}

\subsection{Proof of Theorem \ref{thm: consistency}}
\label{sec:proof_3}

\begin{proof}[Proof of Proposition \ref{prop:homeomorphism}]
Suppose $\rho<1$. By Theorem \ref{thm: contraction}, the operator $T$ is a contraction. This implies that $F$ is surjective, since for any $\tilde G$, we can take a $\Psi\in \prod_j \Delta([\underline p_j, \overline p_j])$, 
$$F(\lim_{n\to\infty} T^n \Psi )=\tilde G.$$
Moreover, $F$ is injective. Towards a contradiction, suppose $F$ maps both $G_1\neq G_2\in \prod_j \Delta([\underline p_j, \overline p_j])$ to the same $\tilde G$. Then both $G_1$ and $G_2$ are fixed points for operator $T$, contradicting contraction.

The mapping $F$ is continuous by Equation \eqref{eq:choice_prob_app} and \eqref{eq: expost_app}. Take two offered distributions $G$ and $G'$. By Equation \eqref{eq: expost_app} and the definition of our metric,
\[
\begin{split}
    d(F(G)_j,F( G')_j)&=\ln\esssup_{p\in[\underline p_j,\overline p_j]}\bigg(\frac{dG_j}{dG_j'}(p)\frac{Pr_j(p;G)}{Pr_j(p;G')}\bigg)+\ln\esssup_{p\in[\underline p_j,\overline p_j]}\bigg(\frac{dG'_j}{dG_j}(p)\frac{Pr_j(p;G')}{Pr_j(p;G)}\bigg)\\
    &\leq \ln\esssup_{p\in[\underline p_j,\overline p_j]}\frac{dG_j}{dG_j'}(p)+\ln\esssup_{p\in[\underline p_j,\overline p_j]}\frac{dG'_j}{dG_j}(p)\\&+\ln\sup_{p\in[\underline p_j,\overline p_j]}\bigg(\frac{Pr_j(p;G)}{Pr_j(p;G')}\bigg)+\ln\sup_{p\in[\underline p_j,\overline p_j]}\bigg(\frac{Pr_j(p;G')}{Pr_j(p;G)}\bigg)\\
    &\leq D(G,G')+\rho D(G,G')
\end{split}
\]
where the last inequality is by Equation \eqref{eq: Pr distance}. Consequently,
$$D(F(G),F(G'))\leq (1+\rho) D(G,G')$$
$F$ is Lipschitz continuous with Lipschitz constant $1+\rho$. 

Next, we show $F^{-1}$ is Lipschitz continuous. Take two selected distributions $\tilde G\neq \tilde G'\in \prod_j \Delta([\underline p_j, \overline p_j])$ where $\tilde G=F(G)$. Let $T_{\tilde G}$ and $T_{\tilde G'}$ denote the corresponding operator $T$. Here we express dependence on the selected distribution. Note that
$$D(\tilde G,\tilde G')=D(T_{\tilde G} G, T_{\tilde G'}G)=D(G, T_{\tilde G'}G)$$
where the first equality is by the definition of the operator $T$ and the metric $D$, while the second equality is by $G$ being a fixed point of $T_{\tilde G}$. Observe that
$$D(T_{\tilde G'}^k G, T_{\tilde G'}^{k+1} G)\leq \rho^k D(G, T_{\tilde G'}G)=\rho^k D(\tilde G,\tilde G')$$
\[
\begin{split}
     D(F^{-1}(\tilde G), F^{-1}(\tilde G'))=D(G, F^{-1}(\tilde G'))= & D(G, T_{\tilde G'}^{\infty} G)\\
    \leq & \sum_{k=0}^{\infty} D(T_{\tilde G'}^k G, T_{\tilde G'}^{k+1} G)\\
    \leq & \sum_{k=0}^{\infty} \rho^k D(\tilde G,\tilde G')\\
    =&\frac{1}{1-\rho} D(\tilde G,\tilde G')
\end{split}
\]
where the first inequality is by triangular inequality. This proves that $F^{-1}$ is Lipschitz continuous with Lipschitz constant $\frac{1}{1-\rho}$.
\end{proof}

We next prove the consistency result (Theorem \ref{thm: consistency}). For proofs below, we shall suppress the dependence on variable $x$ and $x^*$. The proof requires a combination of Lemma \ref{lem: equicontinuity}-\ref{lem: Q_n^* uniform converge} below.

For the next lemma, we view $F^{-1}(\theta;\tilde G)$ as a function of $\theta$ parametrized by $\tilde G$.
\begin{lem}\label{lem: equicontinuity}
    The function $F^{-1}(\theta;\tilde G)$ is equicontinuous in $\theta$, i.e., for all $\theta\in \Theta$, $\epsilon>0$, there exists a $\delta>0$ such  that for all $|\theta'-\theta|<\delta$, $\tilde G\in \prod_j \Delta([\underline p_j, \overline p_j])$,
    $$D(F^{-1}(\theta;\tilde G), F^{-1}(\theta';\tilde G))\leq \epsilon.$$
\end{lem}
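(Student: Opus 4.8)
The plan is to mimic the continuity argument in Lemma \ref{lem: F^-1 continuous}, but to track carefully where $\hat G$ enters so that the resulting modulus of continuity in $\theta$ can be chosen independently of $\hat G$. Fix $\theta\in\Theta$ and $\hat G\in\prod_j\Delta([\underline p_j,\overline p_j])$, and write $G=F^{-1}(\hat G;\theta)$ and $G'=F^{-1}(\hat G;\theta')$. The idea is to insert the intermediate distribution $\tilde G':=F(G';\theta)$, the selected distribution that $G'$ would induce under parameter $\theta$. Since $F^{-1}(\tilde G';\theta)=G'$ and $F^{-1}(\hat G;\theta)=G$, applying the Lipschitz continuity of $F^{-1}(\cdot;\theta)$ from Proposition \ref{prop:homeomorphism} --- whose constant $\tfrac{1}{1-\rho(\theta)}$ is bounded by $\tfrac{1}{1-\bar\rho}$ uniformly in $\theta$ under Assumption \ref{asmp:compact}(iii) --- gives
$$D(G,G')\leq \frac{1}{1-\bar\rho}\,D(\hat G,\tilde G')=\frac{1}{1-\bar\rho}\,D\big(F(G';\theta'),F(G';\theta)\big).$$
Everything then reduces to bounding $D(F(G';\theta'),F(G';\theta))$ by a quantity that is small when $\theta'\to\theta$ and, crucially, uniform over the offered distribution $G'$.

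For this reduction step I would compute the per-coordinate distance directly from the definition of $F$ in Equation \eqref{eq: expost}. Because $F(G';\theta)_j$ and $F(G';\theta')_j$ are built from the same $G'_j$ and differ only through $Pr_j(\cdot;G',\theta)$ versus $Pr_j(\cdot;G',\theta')$, the factor $dG'_j$ and the two normalizing denominators cancel in the sum of log-ess-sups defining $d$ (exactly as in the cancellation noted in Section \ref{sec:metric}), leaving
$$d\big(F(G';\theta)_j,F(G';\theta')_j\big)\leq \ln\sup_{p_j}\frac{Pr_j(p_j;G',\theta)}{Pr_j(p_j;G',\theta')}+\ln\sup_{p_j}\frac{Pr_j(p_j;G',\theta')}{Pr_j(p_j;G',\theta)}.$$
The key observation that delivers uniformity in $G'$ is that each selection probability is an average of $f_j$ against the probability measure $\prod_{k\neq j}dG'_k$, so a ratio of two such averages is bounded by the pointwise supremum of the ratio of integrands:
$$\frac{Pr_j(p_j;G',\theta)}{Pr_j(p_j;G',\theta')}\leq \sup_{\boldsymbol p_{-j}}\frac{f_j(p_j,\boldsymbol p_{-j};\theta)}{f_j(p_j,\boldsymbol p_{-j};\theta')}\leq \sup_{\boldsymbol p}\frac{f_j(\boldsymbol p;\theta)}{f_j(\boldsymbol p;\theta')}.$$
This last bound no longer depends on $G'$, so combining the displays yields $D(F(G';\theta),F(G';\theta'))\leq 2\max_j\sup_{\boldsymbol p}\big|\ln f_j(\boldsymbol p;\theta)-\ln f_j(\boldsymbol p;\theta')\big|$ for every $G'$.

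It then remains to control $\sup_{\boldsymbol p}|\ln f_j(\boldsymbol p;\theta)-\ln f_j(\boldsymbol p;\theta')|$ as $\theta'\to\theta$. Here I would use that, under Assumption \ref{asmp:compact}(i)--(ii), $f_j$ is jointly continuous on the compact set $\prod_k[\underline p_k,\overline p_k]\times\Theta$ and takes values in $(0,1)$, hence attains a strictly positive minimum; therefore $\ln f_j$ is continuous on a compact set and thus uniformly continuous. Uniform continuity yields, for any $\epsilon'>0$, a $\delta>0$ with $|\theta'-\theta|<\delta\Rightarrow\sup_{\boldsymbol p}|\ln f_j(\boldsymbol p;\theta)-\ln f_j(\boldsymbol p;\theta')|<\epsilon'$ for every $j$ simultaneously. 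Taking $\epsilon'=(1-\bar\rho)\epsilon/2$ and chaining through the displays above gives $D(F^{-1}(\theta;\hat G),F^{-1}(\theta';\hat G))\leq\epsilon$, with $\delta$ depending only on $\epsilon$ (through $\bar\rho$ and the common modulus of continuity of the $\ln f_j$), not on $\hat G$. I expect the main obstacle to be exactly this uniformity in $\hat G$: ordinary continuity of $F^{-1}$ in $\theta$ (Lemma \ref{lem: F^-1 continuous}) follows from continuity of $F$ and $F^{-1}$ alone, but equicontinuity forces the intermediate bound on $D(F(G';\theta),F(G';\theta'))$ to be independent of the offered distribution $G'$, which is precisely what the ``ratio of averages $\leq$ supremum of pointwise ratios'' step secures.
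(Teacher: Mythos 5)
Your proposal is correct and follows essentially the same route as the paper's proof: both reduce equicontinuity to a one-step perturbation bound that is uniform over offered distributions---obtained from uniform continuity of $f$ on the compact set $\prod_j[\underline p_j,\overline p_j]\times\Theta$ together with the uniform positive lower bound on $f_j$---and then amplify by the geometric factor $\tfrac{1}{1-\bar\rho}$. The only cosmetic differences are that you perturb around $G'=F^{-1}(\hat G;\theta')$ and invoke the Lipschitz constant of $F^{-1}$ from Proposition~\ref{prop:homeomorphism} directly (rather than re-telescoping the iterates of $T_{\theta'}$ around the fixed point $G_\theta$ as the paper does), and that you control $\ln\frac{Pr_j(\cdot;\theta)}{Pr_j(\cdot;\theta')}$ via the ratio-of-averages inequality and uniform continuity of $\ln f_j$ instead of the paper's bound $\ln\frac{\underline f+\epsilon'}{\underline f}$.
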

\begin{proof}[Proof of Lemma \ref{lem: equicontinuity}]
    Since the function $f$ is continuous on a compact set $\prod_j [\underline p_j, \overline p_j]\times \Theta$ and $f_j>0$, there exists $\underline f>0$ such that for all $j\in \mathcal{J}$, $\theta\in\Theta$, $\boldsymbol{p}\in \prod_j [\underline p_j, \overline p_j]$,
    $$\underline f<f_j(\boldsymbol{p};\theta).$$
    Consequently, for all $j\in \mathcal{J}$, $\theta\in\Theta$, $p_j\in  [\underline p_j, \overline p_j]$, $G\in \prod_j \Delta([\underline p_j, \overline p_j])$,
   \begin{equation}\label{eq: bounded Pr}
       \underline f<Pr_j(p_j;G,\theta).
   \end{equation}
   Moreover, since the function $f$ is continuous on a compact set $\prod_j [\underline p_j, \overline p_j]\times \Theta$, $f$ is uniformly continuous. Thus, for any $\epsilon'>0$, there exists a $\delta'>0$ such that for all $j\in \mathcal{J}$, $\boldsymbol{p} \in  \prod_j [\underline p_j, \overline p_j]$, $\theta,\theta'\in\Theta$ with $|\theta-\theta'|<\delta'$,
   $$|f_j(\boldsymbol{p},\theta)-f_j(\boldsymbol{p},\theta')|<\epsilon'.$$
   Therefore, for all $j\in \mathcal{J}$, $p_j\in  [\underline p_j, \overline p_j]$, $G\in \prod_j \Delta([\underline p_j, \overline p_j])$, $\theta,\theta'\in\Theta$ with $|\theta-\theta'|<\delta'$,
   \begin{align}
       &|
   Pr_j(p_j;G,\theta)-Pr_j(p_j;G,\theta')|\nonumber\\
   =&\bigg| \int_{\boldsymbol{p}_{-j}}[f_j(p_j, \boldsymbol{p}_{-j};\theta)-f_j(p_j, \boldsymbol{p}_{-j};\theta')]\prod_{k,k\neq j}dG_k(p_k)\bigg|<\epsilon'.\label{eq: Pr equicontinuous}
   \end{align}

Take an arbitrary $\tilde G\in \prod_j \Delta([\underline p_j, \overline p_j])$. Let $G_\theta=F^{-1}(\theta;\tilde G)$. Let $T_\theta$ and $T_{\theta'}$ be the operator $T$ associated with selected distribution $\tilde G$, when the parameter is $\theta$ and $\theta'$, respectively: for any $\Psi\in \prod_j \Delta([\underline p_j, \overline p_j])$,
$$(T_\theta \Psi)_j(p_j)=\frac{\int_{\underline p_j}^{p_j} d\tilde G_j(p)/Pr_j(p;\Psi,\theta)}{\int_{\underline p_j}^{\overline p_j} d\tilde G_j(p)/Pr_j(p;\Psi,\theta)}.$$
By the definition of metric $D$,
$$D(T_\theta G_\theta, T_{\theta'} G_\theta)\leq \max_{j} \bigg [\sup_{p}\ln \frac{Pr_j(p;G_\theta,\theta)}{Pr_j(p;G_\theta,\theta')}+\sup_{p}\ln \frac{Pr_j(p;G_\theta,\theta')}{Pr_j(p;G_\theta,\theta)}    \bigg].$$
By Equation \eqref{eq: bounded Pr} and \eqref{eq: Pr equicontinuous}, for all $\tilde G\in \prod_j \Delta([\underline p_j, \overline p_j])$, $\theta,\theta'\in\Theta$ with $|\theta-\theta'|<\delta'$,
$$D(T_\theta G_\theta, T_{\theta'} G_\theta)\leq 2\ln\frac{\underline f+\epsilon'}{\underline f},$$
\[
\begin{split}
    D(F^{-1}(\theta;\tilde G), F^{-1}(\theta';\tilde G))=&D(G_\theta, T_{\theta'}^{\infty} G_\theta)\\
    \leq & \sum_{k=0}^\infty D(T_{\theta'}^{k}G_\theta, T_{\theta'}^{k+1} G_\theta)\\
    \leq & \sum_{k=0}^\infty \bar \rho ^k D(G_\theta, T_{\theta'} G_\theta)\\
    = &  \frac{1}{1-\bar\rho} D(T_\theta G_\theta, T_{\theta'} G_\theta)\\
    \leq & \frac{2}{1-\bar\rho} \ln\frac{\underline f+\epsilon'}{\underline f}.
\end{split}
\]
Finally, for any $\epsilon>0$, let $\epsilon'$ be such that $\frac{2}{1-\bar\rho} \ln\frac{\underline f+\epsilon'}{\underline f}=\epsilon$. The $\delta'$ corresponding to this $\epsilon'$ is the desired $\delta$  in the statement of the Lemma.

\end{proof}

\begin{lem}\label{lem: Q_n^* uniform converge}
    $\hat Q_n(\theta)$ converges uniformly in probability to $Q_0(\theta)$.
\end{lem}

\begin{proof}[Proof of Lemma \ref{lem: Q_n^* uniform converge}]
    By standard consistency argument of MLE (Theorem 2.5 in \citep{newey1994large}) and identification result (Theorem 1) in \cite{hu2008identification}, $\hat h\overset{p}{\to} h_0$.

    Let \(\omega_i=(x_i,y_i,p_i,z_{1i},z_{2i})\). For a generic first-step object
\[
    h=\left(h_{p\mid x^*,x,y},h_{x^*\mid x,y}\right),
\]
define the individual criterion
\[
    \ell(\omega,\theta,h)
    =
    \sum_{x^*\in X^*}
    h_{x^*\mid x,y}(x^*\mid x,y)
    \log
    \operatorname{Prob}_{y}
    \left(
        x,x^*;\theta,h_{p\mid x^*,x,y}
    \right).
\]
Then the sample criterion can be written as
\[
    \hat Q_n(\theta)
    =
    \frac{1}{n}\sum_{i=1}^n
    \ell(\omega_i,\theta,\hat h),
\]
and the population criterion is
\[
    Q_0(\theta)
    =
    \mathbb E\left[
        \ell(\omega,\theta,h_0)
    \right],
\]
where the expectation is taken with respect to the true distribution of a generic
observation \(\omega\).

For any \(\theta\in\Theta\), we decompose
\[
\begin{aligned}
    \hat Q_n(\theta)-Q_0(\theta)
    &=
    \frac{1}{n}\sum_{i=1}^n
    \ell(\omega_i,\theta,\hat h)
    -
    \mathbb E\left[
        \ell(\omega,\theta,h_0)
    \right]                                      \\
    &=
    \frac{1}{n}\sum_{i=1}^n
    \left\{
        \ell(\omega_i,\theta,\hat h)
        -
        \ell(\omega_i,\theta,h_0)
    \right\}                                      \\
    &\quad+
    \left[
    \frac{1}{n}\sum_{i=1}^n
    \ell(\omega_i,\theta,h_0)
    -
    \mathbb E\left[
        \ell(\omega,\theta,h_0)
    \right]
    \right].
\end{aligned}
\]
Therefore,
\[
\begin{aligned}
    \sup_{\theta\in\Theta}
    \left|
        \hat Q_n(\theta)-Q_0(\theta)
    \right|
    &\leq
    \sup_{\theta\in\Theta}
    \left|
    \frac{1}{n}\sum_{i=1}^n
    \left\{
        \ell(\omega_i,\theta,\hat h)
        -
        \ell(\omega_i,\theta,h_0)
    \right\}
    \right|                                      \\
    &\quad+
    \sup_{\theta\in\Theta}
    \left|
    \frac{1}{n}\sum_{i=1}^n
    \ell(\omega_i,\theta,h_0)
    -
    \mathbb E\left[
        \ell(\omega,\theta,h_0)
    \right]
    \right|.
\end{aligned}
\]

We first show that the first term is \(o_p(1)\). By Assumption 2,
\(f_j(p;x,x^*,\theta)\) is continuous in \((p,\theta)\) on a compact set and is
strictly positive. Hence, uniformly over \(j,x,x^*,p\), and
\(\theta\in\Theta\), there exists a constant \(\underline f>0\) such that
\[
    f_j(p;x,x^*,\theta)\geq \underline f .
\]
\[
    \operatorname{Prob}_{j}
    \left(
        x,x^*;\theta,h_{p\mid x^*,x,y} 
    \right)\geq \underline f.
\]
 Thus the logarithm in \(\ell(\omega,\theta,h)\) is uniformly bounded.

Note that $h$ enters $\ell$ in two ways. First it enters linearly through the mixing weight. Second it enters through $Prob$. Then, by the logarithm in \(\ell(\omega,\theta,h)\) being uniformly bounded, the first part influence of $h$ on $\ell$ is Lipschitz. By the logarithm in \(\ell(\omega,\theta,h)\) is uniformly bounded and map \(F^{-1}\) being Lipschitz continuous, the second part influence of $h$ on $\ell$ is Lipschitz. Thus, 
\(\ell(\omega,\theta,h)\) is Lipschitz continuous in \(h\), uniformly over
\(\theta\in\Theta\) and over \(\omega\) in the support of the data. Since
\(\|\hat h-h_0\|=o_p(1)\), we have
\[
    \sup_{\theta\in\Theta}\sup_{\omega}
    \left|
        \ell(\omega,\theta,\hat h)
        -
        \ell(\omega,\theta,h_0)
    \right|
    =
    o_p(1).
\]
Consequently,
\[
\begin{aligned}
    \sup_{\theta\in\Theta}
    \left|
    \frac{1}{n}\sum_{i=1}^n
    \left\{
        \ell(\omega_i,\theta,\hat h)
        -
        \ell(\omega_i,\theta,h_0)
    \right\}
    \right|
    &\leq
    \sup_{\theta\in\Theta}\sup_{\omega}
    \left|
        \ell(\omega,\theta,\hat h)
        -
        \ell(\omega,\theta,h_0)
    \right|                                      \\
    &=
    o_p(1).
\end{aligned}
\]

It remains to control the second term. For fixed \(h_0\), consider the class of
functions
\[
    \mathcal L
    =
    \left\{
        \ell(\omega,\theta,h_0):\theta\in\Theta
    \right\}.
\]
By Lemma 2 and the continuity of \(f_j\), the function
\(\ell(\omega,\theta,h_0)\) is continuous in \(\theta\). Since \(\Theta\) is compact
and the log choice probability is uniformly bounded, \(\mathcal L\) is a bounded
and continuous class indexed by a compact parameter space. Because
\(\{\omega_i\}_{i=1}^n\) are i.i.d., a standard uniform law of large numbers implies
\[
    \sup_{\theta\in\Theta}
    \left|
    \frac{1}{n}\sum_{i=1}^n
    \ell(\omega_i,\theta,h_0)
    -
    \mathbb E\left[
        \ell(\omega,\theta,h_0)
    \right]
    \right|
    \xrightarrow{p}0 .
\]
For example, this follows from Lemma 2.4 of \cite{newey1994large}.

Combining the two bounds yields
\[
    \sup_{\theta\in\Theta}
    \left|
        \hat Q_n(\theta)-Q_0(\theta)
    \right|
    \xrightarrow{p}0 .
\]
\end{proof}

\begin{proof}[Proof of Theorem \ref{thm: consistency}]
   We are ready to apply Theorem 2.1 in \cite{newey1994large}. (1). By the identification assumption \ref{asmp: identification}, $Q_0(\theta)$ is uniquely maximized at $\theta_0$. (2). $\Theta$ is compact. (3). As $Prob_j(\theta; \tilde G)$ is  also bounded below by $\underline f>0$ and continuous in $\theta$ by Lemma \ref{lem: equicontinuity}, $Q_0(\theta)$ is continuous. (4). $\hat Q_n(\theta)$ converges uniformly in probability to $Q_0(\theta)$, by Lemma \ref{lem: Q_n^* uniform converge}. Thus, $\hat\theta$ is consistent.

   To see $\hat T^{\infty} \Psi \overset{p}{\to } G$, note that $\hat T^{\infty} \Psi=F^{-1}(\hat h_{p|y},\hat \theta)$,
   $$D(\hat T^{\infty} \Psi, G)\leq  D( F^{-1}(\hat h_{p|y},\hat \theta), F^{-1}(\hat h_{p|y}, \theta_0))+ D( F^{-1}(\hat h_{p|y}, \theta_0), G).$$
   The first term  $$D( F^{-1}(\hat h_{p|y},\hat \theta), F^{-1}(\hat h_{p|y}, \theta_0))\overset{p}{\to} 0,\quad\text{as}\quad\hat\theta\overset{p}{\to} \theta_0$$ since $F^{-1}$ is continuous in $\theta$ by Lemma \ref{lem: equicontinuity}. The second term $$D( F^{-1}(\hat h_{p|y}, \theta_0), G)\overset{p}{\to} 0,\quad\text{as}\quad\hat h_{p|y}\overset{p}{\to} \tilde G$$ since $F$ is a homeomorphism by Proposition \ref{prop:homeomorphism}.

\end{proof}

\subsection{Proof of Theorem \ref{thm: normal}}
\label{sec:proof_4}

\begin{proof}[Proof of Theorem \ref{thm: normal}]
    Our GMM estimator is 
$$\frac{1}{n}\sum_{i=1}^n \tilde{\mathfrak g}(\omega_i,\hat\theta,\hat h)=0.$$
For this GMM estimator, we can directly invoke Theorem 6.1 in \cite{newey1994large}. Note that our $\mathfrak g$ is their $g$ and our $ h$ is their $ \gamma$ in \cite{newey1994large}.

By the proof of Theorem \ref{thm: consistency}, $\hat\theta\overset{p}{\to}\theta_0$. By standard argument of MLE and identification result in \cite{hu2008identification}, $\hat h\overset{p}{\to} h_0$. By Assumption \ref{asmp: normal}, $(\theta_0,h_0)$ is in the interior of $\Theta\times H$. Next, we verify that $\tilde{\mathfrak g}(\omega, \theta,  h)$ is continuously differentiable in a neighborhood $\mathscr{N}$ of $(\theta_0, h_0)$.

First, we verify that $\mathfrak g(\omega, \theta, h)$ is continuously differentiable in $\theta$. It suffices to show that $Prob(\theta,h_{p|y})$ is twice continuously differentiable in $\theta$. As $f$ is twice continuously differentiable in $\theta$, we only need to show that $F^{-1}(\tilde G, \theta)$ is twice continuously differentiable in $\theta$. By Equation \eqref{eq:choice_prob_app}, \eqref{eq: expost_app}, $F(G, \theta)$ is  infinitely continuously differentiable in $G$. By $f$ being twice continuously differentiable in $\theta$, $F(G, \theta)$ is twice continuously differentiable in $\theta$. Moreover, matrix $\nabla_G F(G, \theta)$ is non-singular by Assumption \ref{asmp: normal}. Thus, by the implicit function theorem,
$$ \nabla_{\theta}F^{-1}(\tilde G, \theta)=-\big[\nabla_G F(G, \theta)\big]^{-1}  \nabla_{\theta} F(G, \theta)$$ and $F^{-1}$ is twice continuously differentiable in $\theta$. 

Next, we verify that $\mathfrak g(\omega, \theta,h)$ is continuously differentiable in $h$. It suffices to show that $Prob(\theta,h_{p|y})$ is continuously differentiable in $h_{p|y}$, which follows if \(F^{-1}(\tilde G,\theta)\) is continuously differentiable
in \(\tilde G\). Since \(F(G,\theta)\) is continuously differentiable in \(G\), and
\(\nabla_G F(G,\theta)\) is nonsingular by Assumption \ref{asmp: normal},
the implicit function theorem implies that \(F^{-1}(\tilde G,\theta)\) is continuously
differentiable in \(\tilde G\). Moreover, for \(G=F^{-1}(\tilde G,\theta)\),
\[
\nabla_{\tilde G}F^{-1}(\tilde G,\theta)
=
[\nabla_G F(G,\theta)]^{-1}.
\] Additionally, $\mathfrak m$  is infinitely continuously differentiable in all parameters $\theta,h$. Consequently, we have shown that $\tilde{\mathfrak g}(\omega, \theta,  h)$ is continuously differentiable in $\theta, h$.

In addition, $$\mathbb E[\tilde{\mathfrak g}(\omega, \theta_0, h_0)]=0$$ by the first-order condition of MLE and $Q_0$. Since for each observation $\omega$, the value of Equation \eqref{eq: decomposition} is strictly positive, $||\mathfrak m(\omega,h_0)||$ is finite. Since $f_j\geq \underline{f}>0$ is bounded from $0$ and $Prob(\theta, h_{p|y})$ is continuously differentiable in $\theta$,  $||\mathfrak g(\omega, \theta_0, h_0)||$ is finite for each $\omega$. Since there are only finite possible realizations of $\omega$,
$$\mathbb E[||\tilde{\mathfrak g}(\omega, \theta_0, h_0)||^2]<\infty$$

By $\tilde{\mathfrak g}(\omega, \theta,h)$ being continuously differentiable in $(\theta, h)$ and a finite number of possible values of $\omega$, 
$$\mathbb E[\sup_{(\theta, h)\in \mathscr N}||\nabla_{\theta, h} \tilde{\mathfrak g}(\omega, \theta,h)||]< \infty.$$

The last condition we need is that 
$\mathbb E \nabla_{\theta,h} \tilde {\mathfrak g}(\omega;\theta_0,h_0)$ is nonsingular, which is in Assumption \ref{asmp: normal}.

We can write down the variance matrix $V$ by Theorem 6.1 in \cite{newey1994large}.
$$V=\big(\mathbb E \nabla_\theta \mathfrak g(\omega, \theta_0,h_0)\big)^{-1}\times \mathbb E(\mathscr A(\omega)\mathscr A(\omega)') \times  \bigg(\big(\mathbb E \nabla_\theta \mathfrak g(\omega, \theta_0,h_0)\big)^{-1} \bigg)'$$
where
$$\mathscr A(\omega)=\mathfrak g(\omega,\theta_0,h_0)-\mathbb E[\nabla_h \mathfrak g(\omega,\theta_0,h_0)]\bigg(\mathbb E[\nabla_h\mathfrak m(\omega,h_0)]\bigg)^{-1}m(\omega,h_0).$$

To see the convergence rate of $\hat T^{\infty}_{\hat{\theta},\hat{h}} \Psi$, note that $$D(\hat T^{\infty} \Psi, G)\leq  D( F^{-1}(\hat h_{p|y},\hat \theta), F^{-1}(\hat h_{p|y}, \theta_0))+ D( F^{-1}(\hat h_{p|y}, \theta_0), G).$$  By the proof above, $F^{-1}$ is continuously differentiable in $\theta$. Moreover, as $\Theta$ is compact, $F^{-1}(\tilde G,\theta)$ is Lipschitz continuous in $\theta$. As $\hat\theta\overset{p}{\to}\theta_0$ at rate $\sqrt n$, the first term
$$ D( F^{-1}(\hat h_{p|y},\hat \theta), F^{-1}(\hat h_{p|y}, \theta_0))$$
is \(O_p(n^{-1/2})\). By Theorem 6.1 of \cite{newey1994large}, $\hat h_{p|y}\overset{p}{\to} \tilde G$ at rate $\sqrt n$.  By Proposition \ref{prop:homeomorphism}, $F^{-1}(\tilde G,\theta)$ is Lipschitz continuous in $\tilde G$. Thus, the second term is \(O_p(n^{-1/2})\).

At last, we show $\sqrt{n}$-asymptotic normality of $\hat T^{\infty}_{\hat{\theta},\hat{h}} \Psi= F^{-1}(\hat h_{p|y},\hat \theta)$. We follow the influence function approach in \cite{newey1994large}. (See their Section 6.1, especially Equation 6.6, for a detailed treatment.) We introduce several useful definitions. $$\mathfrak M=\mathbb E [\nabla_h \mathfrak m(\omega,h_0)],$$ 
$$\mathfrak G_h= \mathbb E[\nabla_h\mathfrak g(\omega;\theta_0,h_0)],$$
$$\mathfrak G_\theta= \mathbb E[\nabla_\theta\mathfrak g(\omega;\theta_0,h_0)].$$

Under the finite-support assumption, after parameterizing all probability mass
functions by free coordinates, \((\tilde G,\theta)\mapsto F^{-1}(\tilde G,\theta)\)
is a finite-dimensional continuously differentiable map in a neighborhood of
\((\tilde G,\theta_0)\). Moreover,
\[
\hat h_{p|y}-\tilde G=O_p(n^{-1/2}),\qquad
\hat\theta-\theta_0=O_p(n^{-1/2}).
\]
Hence, by the delta method, the error term is
\[
\begin{split}
    &\sqrt{n}[F^{-1}(\hat h_{p|y},\hat \theta)-F^{-1}(\tilde G,\theta_0)]\\
    =& \sqrt{n}[\nabla_{\tilde G} F^{-1}(\tilde G,\theta_0)(\hat h_{p|y}-\tilde G)+\nabla_{\theta} F^{-1}(\tilde G,\theta_0)(\hat\theta-\theta_0)]+o_p(1)\\
    =& \nabla_{\tilde G} F^{-1}(\tilde G,\theta_0)\sqrt{n}(\hat h_{p|y}-\tilde G)+\nabla_{\theta} F^{-1}(\tilde G,\theta_0)\sqrt{n}(\hat\theta-\theta_0)+o_p(1)\\
    =& \nabla_{\tilde G} F^{-1}(\tilde G,\theta_0)\bigg [\sum_{i=1}^n \frac{-\mathfrak M^{-1} \mathfrak m(\omega_i,h_0)}{\sqrt{n}}\bigg|_{h_{p|y}} +o_p(1)\bigg]\\
    &+\nabla_{\theta} F^{-1}(\tilde G,\theta_0)\bigg[-\mathfrak G_\theta^{-1}\sum_{i=1}^n \frac{\mathfrak g(\omega_i;\theta_0,h_0)-\mathfrak G_h\mathfrak M^{-1} \mathfrak m(\omega_i,h_0)}{\sqrt{n}}+o_p(1)\bigg] +o_p(1)\\
    =& \frac{1}{\sqrt{n}}\sum_{i=1}^n\bigg\{-\nabla_{\tilde G} F^{-1}(\tilde G,\theta_0)(\mathfrak M^{-1})\bigg|_{h_{p|y}}  \mathfrak m(\omega_i,h_0) \\
    &-\nabla_{\theta} F^{-1}(\tilde G,\theta_0)\mathfrak G_\theta^{-1} [\mathfrak g(\omega_i;\theta_0,h_0)-\mathfrak G_h\mathfrak M^{-1}  \mathfrak m(\omega_i,h_0)]\bigg\} +o_p(1)
\end{split}
\]
where $ \bigg|_{h_{p|y}}$ means we take the first $|h_{p|y}|$ rows (the two objects on which we apply $ \bigg|_{h_{p|y}}$ has $|h|$ rows where $|h|$ denotes the number of entries of vector $h$). The first equality is by $F^{-1}$ being continuously differentiable in both $\tilde G$ and $\theta$, $\sqrt{n}$-consistency of $\hat h$ and $\hat\theta$; the third equality is by plugging in the influence function of $\hat h$ and $\hat \theta$, given by Equation 6.6 of \cite{newey1994large}. The conclusion follows by applying the central limit theorem to the term in the last equality.

\end{proof}

%\clearpage 
\section{Additional Tables}
\label{sec:tables_app}

\begin{table}[htpb!]
    \centering
    \caption{Simulation Results for Utility Parameters: Removing the Excluded Variable} 
\scalebox{0.9}{\begin{tabular}{lccc|ccc}
\hline \hline 
 & \multicolumn{6}{c}{DGP 1} \\ \cline{2-7} 
  & \multicolumn{3}{c|}{$N=2000$}  & \multicolumn{3}{c}{$N=5000$}  \\ \cline{2-7} 
  & Bias & Std. Dev. & RMSE & Bias & Std. Dev. & RMSE \\ \cline{2-7} 
 $\gamma$ & -0.0658 & 0.1743 & 0.1861 & -0.0358 & 0.1159 & 0.1212 \\ 
$\kappa$ & -0.0223 & 0.0527 & 0.0572 & -0.0081 & 0.0351 & 0.0360 \\ 
$\xi_2$ & -0.0162 & 0.0458 & 0.0485 & -0.0090 & 0.0309 & 0.0322 \\ \hline 
 & \multicolumn{6}{c}{DGP 2} \\ \cline{2-7} 
  & \multicolumn{3}{c|}{$N=2000$}  & \multicolumn{3}{c}{$N=5000$}  \\ \cline{2-7} 
  & Bias & Std. Dev. & RMSE & Bias & Std. Dev. & RMSE \\ \cline{2-7} 
 $\gamma$ & -0.0708 & 0.1636 & 0.1781 & -0.0452 & 0.1091 & 0.1180 \\ 
$\kappa$ & -0.0192 & 0.0546 & 0.0579 & -0.0070 & 0.0342 & 0.0349 \\ 
$\xi_2$ & -0.0130 & 0.0381 & 0.0402 & -0.0083 & 0.0252 & 0.0265 \\ \hline 
 & \multicolumn{6}{c}{DGP 3} \\ \cline{2-7} 
  & \multicolumn{3}{c|}{$N=2000$}  & \multicolumn{3}{c}{$N=5000$}  \\ \cline{2-7} 
  & Bias & Std. Dev. & RMSE & Bias & Std. Dev. & RMSE \\ \cline{2-7} 
 $\gamma$ & -0.0623 & 0.2467 & 0.2542 & -0.0262 & 0.1634 & 0.1653 \\ 
$\kappa$ & -0.0151 & 0.0516 & 0.0537 & -0.0060 & 0.0345 & 0.0350 \\ 
$\xi_2$ & -0.0050 & 0.0313 & 0.0317 & -0.0023 & 0.0201 & 0.0202 \\ \hline 
 & \multicolumn{6}{c}{DGP 4} \\ \cline{2-7} 
  & \multicolumn{3}{c|}{$N=2000$}  & \multicolumn{3}{c}{$N=5000$}  \\ \cline{2-7} 
  & Bias & Std. Dev. & RMSE & Bias & Std. Dev. & RMSE \\ \cline{2-7} 
 $\gamma$ & 0.0453 & 0.8115 & 0.8119 & 0.0062 & 0.5276 & 0.5271 \\ 
$\kappa$ & -0.0138 & 0.0501 & 0.0519 & -0.0068 & 0.0323 & 0.0330 \\ 
$\xi_2$ & -0.0001 & 0.0308 & 0.0308 & -0.0003 & 0.0189 & 0.0189 \\ \hline 
 \hline 
\end{tabular}}

\label{table:est_theta_noz}
\vspace{0.3cm} \\
		\justifying \footnotesize 
	\noindent Note: In these specifications, we remove the excluded variable $x_{i1}$ from the selection function, so the parameter $\beta$ in $u_{i1}$ is not estimated. 
\end{table}

\begin{table}[htpb!]
    \centering
    \caption{Simulation Results for CDF of $log(price)$: Removing the Excluded Variable} 
\scalebox{0.9}{\begin{tabular}{lcc|cc}
\hline \hline 
 & \multicolumn{4}{c}{DGP 1} \\ \cline{2-5} 
  & \multicolumn{2}{c|}{$j=1$}  & \multicolumn{2}{c}{$j=2$}  \\ \cline{2-5} 
  & IBias$^2$ & IMSE & IBias$^2$ & IMSE \\ \cline{2-5} 
 $x_{i2}=0$ & 0.0004 & 0.0019 & 0.0001 & 0.0006 \\ 
$x_{i2}=0.25$ & 0.0002 & 0.0016 & 0.0001 & 0.0007 \\ 
$x_{i2}=0.5$ & 0.0001 & 0.0014 & 0.0001 & 0.0008 \\ 
$x_{i2}=0.75$ & 0.0001 & 0.0013 & 0.0002 & 0.0009 \\ 
$x_{i2}=1$ & 0.0001 & 0.0011 & 0.0001 & 0.0009 \\ \hline 
 & \multicolumn{4}{c}{DGP 2} \\ \cline{2-5} 
  & \multicolumn{2}{c|}{$j=1$}  & \multicolumn{2}{c}{$j=2$}  \\ \cline{2-5} 
  & IBias$^2$ & IMSE & IBias$^2$ & IMSE \\ \cline{2-5} 
 $x_{i2}=0$ & 0.0003 & 0.0019 & 0.0001 & 0.0006 \\ 
$x_{i2}=0.25$ & 0.0003 & 0.0018 & 0.0001 & 0.0007 \\ 
$x_{i2}=0.5$ & 0.0002 & 0.0016 & 0.0001 & 0.0007 \\ 
$x_{i2}=0.75$ & 0.0001 & 0.0015 & 0.0002 & 0.0008 \\ 
$x_{i2}=1$ & 0.0001 & 0.0011 & 0.0001 & 0.0008 \\ \hline 
 & \multicolumn{4}{c}{DGP 3} \\ \cline{2-5} 
  & \multicolumn{2}{c|}{$j=1$}  & \multicolumn{2}{c}{$j=2$}  \\ \cline{2-5} 
  & IBias$^2$ & IMSE & IBias$^2$ & IMSE \\ \cline{2-5} 
 $x_{i2}=0$ & 0.0007 & 0.0023 & 0.0029 & 0.0033 \\ 
$x_{i2}=0.25$ & 0.0002 & 0.0015 & 0.0005 & 0.0010 \\ 
$x_{i2}=0.5$ & 0.0001 & 0.0015 & 0.0002 & 0.0007 \\ 
$x_{i2}=0.75$ & 0.0001 & 0.0015 & 0.0002 & 0.0008 \\ 
$x_{i2}=1$ & 0.0001 & 0.0014 & 0.0001 & 0.0009 \\ \hline 
 & \multicolumn{4}{c}{DGP 4} \\ \cline{2-5} 
  & \multicolumn{2}{c|}{$j=1$}  & \multicolumn{2}{c}{$j=2$}  \\ \cline{2-5} 
  & IBias$^2$ & IMSE & IBias$^2$ & IMSE \\ \cline{2-5} 
 $x_{i2}=0$ & 0.0014 & 0.0025 & 0.0011 & 0.0016 \\ 
$x_{i2}=0.25$ & 0.0013 & 0.0023 & 0.0009 & 0.0014 \\ 
$x_{i2}=0.5$ & 0.0012 & 0.0022 & 0.0005 & 0.0010 \\ 
$x_{i2}=0.75$ & 0.0011 & 0.0023 & 0.0002 & 0.0008 \\ 
$x_{i2}=1$ & 0.0005 & 0.0019 & 0.0001 & 0.0006 \\ \hline 
 \hline 
\end{tabular}}

\label{table:est_price_noz}
\vspace{0.3cm} \\
		\justifying \footnotesize 
		\noindent Note: In these specifications, we remove the excluded variable from the selection function.
	The $\text{IBias}^2$ of a function $h$ is calculated as follows. Let $\hat{h}_r$ be the estimate of $h$ from the $r$-th simulated dataset, and $\bar{h}(p) = \frac{1}{R}\sum_{r=1}^{R} \hat{h}_r(p)$ be the point-wise average over $R$ simulations. The integrated squared bias is calculated by numerically integrating the point-wise squared bias $(\bar{h}(p)-h(p))^2$ over the distribution of $p$. The integrated MSE is computed in a similar way. The values reported in each row correspond to the price distributions conditional on a given value of $x_{i2}$. The results shown in this table are based on a 500 Monte Carlo replications with a sample size of 2,000. Corresponding results for a sample size of 5,000 are available upon request.
\end{table}

\begin{table}[htpb!]
    \centering
    \caption{Simulation Results for Utility Parameters: Misspecifying the Selection Function} 
\scalebox{0.9}{\begin{tabular}{lccc|ccc}
\hline \hline 
 & \multicolumn{6}{c}{DGP 1} \\ \cline{2-7} 
  & \multicolumn{3}{c|}{$N=2000$}  & \multicolumn{3}{c}{$N=5000$}  \\ \cline{2-7} 
  & Bias & Std. Dev. & RMSE & Bias & Std. Dev. & RMSE \\ \cline{2-7} 
 $\gamma$ & -0.0851 & 0.1829 & 0.2016 & -0.0540 & 0.1123 & 0.1245 \\ 
$\beta$ & 0.0009 & 0.0634 & 0.0633 & 0.0052 & 0.0378 & 0.0381 \\ 
$\kappa$ & -0.0210 & 0.0499 & 0.0541 & -0.0080 & 0.0356 & 0.0365 \\ 
$\xi_2$ & -0.0193 & 0.0596 & 0.0626 & -0.0083 & 0.0364 & 0.0373 \\ \hline 
 & \multicolumn{6}{c}{DGP 2} \\ \cline{2-7} 
  & \multicolumn{3}{c|}{$N=2000$}  & \multicolumn{3}{c}{$N=5000$}  \\ \cline{2-7} 
  & Bias & Std. Dev. & RMSE & Bias & Std. Dev. & RMSE \\ \cline{2-7} 
 $\gamma$ & -0.0834 & 0.1739 & 0.1927 & -0.0540 & 0.1103 & 0.1228 \\ 
$\beta$ & 0.0028 & 0.0642 & 0.0642 & 0.0062 & 0.0387 & 0.0391 \\ 
$\kappa$ & -0.0167 & 0.0497 & 0.0523 & -0.0054 & 0.0351 & 0.0355 \\ 
$\xi_2$ & -0.0116 & 0.0532 & 0.0544 & -0.0034 & 0.0333 & 0.0334 \\ \hline 
 & \multicolumn{6}{c}{DGP 3} \\ \cline{2-7} 
  & \multicolumn{3}{c|}{$N=2000$}  & \multicolumn{3}{c}{$N=5000$}  \\ \cline{2-7} 
  & Bias & Std. Dev. & RMSE & Bias & Std. Dev. & RMSE \\ \cline{2-7} 
 $\gamma$ & -0.0970 & 0.2606 & 0.2779 & -0.0569 & 0.1558 & 0.1657 \\ 
$\beta$ & 0.0053 & 0.0652 & 0.0653 & 0.0079 & 0.0374 & 0.0382 \\ 
$\kappa$ & -0.0179 & 0.0493 & 0.0524 & -0.0061 & 0.0336 & 0.0341 \\ 
$\xi_2$ & -0.0013 & 0.0474 & 0.0473 & 0.0030 & 0.0288 & 0.0289 \\ \hline 
 & \multicolumn{6}{c}{DGP 4} \\ \cline{2-7} 
  & \multicolumn{3}{c|}{$N=2000$}  & \multicolumn{3}{c}{$N=5000$}  \\ \cline{2-7} 
  & Bias & Std. Dev. & RMSE & Bias & Std. Dev. & RMSE \\ \cline{2-7} 
 $\gamma$ & 0.1083 & 0.8175 & 0.8239 & 0.0813 & 0.4889 & 0.4951 \\ 
$\beta$ & 0.0053 & 0.0624 & 0.0626 & 0.0086 & 0.0367 & 0.0376 \\ 
$\kappa$ & -0.0183 & 0.0480 & 0.0514 & -0.0091 & 0.0340 & 0.0351 \\ 
$\xi_2$ & 0.0047 & 0.0466 & 0.0468 & 0.0065 & 0.0289 & 0.0296 \\ \hline 
 \hline 
\end{tabular}}

\label{table:est_theta_mis}
\vspace{0.3cm} \\
		\justifying \footnotesize 
	\noindent Note: In these specifications, we misspecify the selection model in estimation, assuming that the error term $\varepsilon_{i}$ is drawn from $Logistic (0,1)$. For the utility parameters, we rescale the estimates by the scale parameter of the logit model to make them comparable to those in the original probit specification.
\end{table}

\begin{table}[htpb!]
    \centering
    \caption{Simulation Results for CDF of $log(price)$: Misspecifying the Selection Function} 
\scalebox{0.9}{\begin{tabular}{lcc|cc}
\hline \hline 
 & \multicolumn{4}{c}{DGP 1} \\ \cline{2-5} 
  & \multicolumn{2}{c|}{$j=1$}  & \multicolumn{2}{c}{$j=2$}  \\ \cline{2-5} 
  & IBias$^2$ & IMSE & IBias$^2$ & IMSE \\ \cline{2-5} 
 $x_{i2}=0$ & 0.0004 & 0.0016 & 0.0002 & 0.0009 \\ 
$x_{i2}=0.25$ & 0.0004 & 0.0015 & 0.0002 & 0.0009 \\ 
$x_{i2}=0.5$ & 0.0002 & 0.0012 & 0.0002 & 0.0010 \\ 
$x_{i2}=0.75$ & 0.0002 & 0.0010 & 0.0002 & 0.0011 \\ 
$x_{i2}=1$ & 0.0001 & 0.0010 & 0.0002 & 0.0012 \\ \hline 
 & \multicolumn{4}{c}{DGP 2} \\ \cline{2-5} 
  & \multicolumn{2}{c|}{$j=1$}  & \multicolumn{2}{c}{$j=2$}  \\ \cline{2-5} 
  & IBias$^2$ & IMSE & IBias$^2$ & IMSE \\ \cline{2-5} 
 $x_{i2}=0$ & 0.0005 & 0.0016 & 0.0002 & 0.0008 \\ 
$x_{i2}=0.25$ & 0.0005 & 0.0017 & 0.0002 & 0.0008 \\ 
$x_{i2}=0.5$ & 0.0003 & 0.0013 & 0.0002 & 0.0009 \\ 
$x_{i2}=0.75$ & 0.0002 & 0.0011 & 0.0002 & 0.0010 \\ 
$x_{i2}=1$ & 0.0001 & 0.0010 & 0.0003 & 0.0011 \\ \hline 
 & \multicolumn{4}{c}{DGP 3} \\ \cline{2-5} 
  & \multicolumn{2}{c|}{$j=1$}  & \multicolumn{2}{c}{$j=2$}  \\ \cline{2-5} 
  & IBias$^2$ & IMSE & IBias$^2$ & IMSE \\ \cline{2-5} 
 $x_{i2}=0$ & 0.0007 & 0.0019 & 0.0031 & 0.0036 \\ 
$x_{i2}=0.25$ & 0.0002 & 0.0013 & 0.0005 & 0.0011 \\ 
$x_{i2}=0.5$ & 0.0002 & 0.0013 & 0.0003 & 0.0010 \\ 
$x_{i2}=0.75$ & 0.0002 & 0.0012 & 0.0002 & 0.0009 \\ 
$x_{i2}=1$ & 0.0002 & 0.0013 & 0.0002 & 0.0011 \\ \hline 
 & \multicolumn{4}{c}{DGP 4} \\ \cline{2-5} 
  & \multicolumn{2}{c|}{$j=1$}  & \multicolumn{2}{c}{$j=2$}  \\ \cline{2-5} 
  & IBias$^2$ & IMSE & IBias$^2$ & IMSE \\ \cline{2-5} 
 $x_{i2}=0$ & 0.0014 & 0.0023 & 0.0011 & 0.0016 \\ 
$x_{i2}=0.25$ & 0.0014 & 0.0024 & 0.0009 & 0.0015 \\ 
$x_{i2}=0.5$ & 0.0011 & 0.0021 & 0.0005 & 0.0011 \\ 
$x_{i2}=0.75$ & 0.0012 & 0.0021 & 0.0003 & 0.0009 \\ 
$x_{i2}=1$ & 0.0005 & 0.0018 & 0.0002 & 0.0007 \\ \hline 
 \hline 
\end{tabular}}

\label{table:est_price_mis}
\vspace{0.3cm} \\
		\justifying \footnotesize 
		\noindent Note: In these specifications, we misspecify the selection model, assuming that the error term $\varepsilon_{i}$ is drawn from $Logistic (0,1)$.
		The $\text{IBias}^2$ of a function $h$ is calculated as follows. Let $\hat{h}_r$ be the estimate of $h$ from the $r$-th simulated dataset, and $\bar{h}(p) = \frac{1}{R}\sum_{r=1}^{R} \hat{h}_r(p)$ be the point-wise average over $R$ simulations. The integrated squared bias is calculated by numerically integrating the point-wise squared bias $(\bar{h}(p)-h(p))^2$ over the distribution of $p$. The integrated MSE is computed in a similar way. The values reported in each row correspond to the price distributions conditional on a given value of $x_{i2}$. The results shown in this table are based on a 500 Monte Carlo replications with a sample size of 2,000. Corresponding results for a sample size of 5,000 are available upon request. 
\end{table}

\end{document}